\title{Improved Space Bounds for Subset Sum}
\author{
Tatiana Belova
\thanks{St.~Petersburg Department of~Steklov Mathematical Institute, Russian Academy of~Sciences. Email: \texttt{yukikomodo@gmail.com}.}
\and
Nikolai Chukhin 
\thanks{Neapolis University Pafos and JetBrains Research. Email: \texttt{buyolitsez1951@gmail.com}}
\and
Alexander~S. Kulikov
\thanks{JetBrains Research. Email: \texttt{alexander.s.kulikov@gmail.com}.}
\and
Ivan Mihajlin
\thanks{JetBrains Research. Email: \texttt{ivmihajlin@gmail.com}.}
}
\date{}
\newcommand{\poly}{\operatorname{poly}}
\newcommand{\SAT}{\text{\textsf{SAT}}}
\newcommand{\AM}{\text{\textsf{AM}}}
\newcommand{\SUM}{\text{\textsf{SUM}}}
\newcommand{\problem}[1]{\text{\textsf{#1}}}
\newtheorem{openproblem}{Open Problem}[section]
\newtheorem{theorem}{Theorem}[section]
\newtheorem{lemma}{Lemma}[section]
\newtheorem{definition}{Definition}[section]
\newtheorem{corollary}{Corollary}[section]
\begin{document}

\sloppy

\maketitle

\begin{abstract}
More than 40~years ago, Schroeppel and Shamir presented 
    an~algorithm that solves the Subset Sum problem for 
    $n$~integers in~time $O^*(2^{0.5n})$ and space $O^*(2^{0.25n})$.
    The time upper bound remains unbeaten, 
    but the space upper bound has been improved 
    to~$O^*(2^{0.249999n})$
    in a~recent breakthrough paper
    by~Nederlof and W\k{e}grzycki (STOC 2021).
    Their algorithm is~a~clever combination 
    of~a~number of previously known techniques 
    with a~new reduction and a~new algorithm
    for the Orthogonal Vectors
    problem.

    In~this paper, we~give two new algorithms for Subset Sum.
    We~start by~presenting an~Arthur--Merlin algorithm:
    upon receiving the verifier's randomness,
    the prover sends an~$n/4$-bit long
    proof to~the verifier who checks~it
    in~(deterministic) time and space $O^*(2^{n/4})$.
    An interesting consequence of~this result
    is~the following fine-grained lower bound: assuming that $4$-SUM cannot be~solved
    in~time $O(n^{2-\varepsilon})$ for all $\varepsilon>0$, Circuit SAT cannot be~solved
    in~time $O(g2^{(1-\varepsilon)n})$, for all~$\varepsilon>0$ (where $n$~and~$g$ denote the
    number of~inputs and the number of~gates, respectively).

    Then, we~improve the space bound by~Nederlof and W\k{e}grzycki
    to~$O^*(2^{0.246n})$ and also simplify their algorithm
    and its analysis.  
    We~achieve this space bound by~further filtering sets of~subsets
    using a~random prime number.
    This allows~us to~reduce an~instance of~Subset Sum 
    to a~larger number of~instances of~smaller size.
\end{abstract}

 \clearpage
 \tableofcontents
 \clearpage

\section{Overview}
In~this paper, we~study the well-known \problem{Subset Sum} problem and its parameterized version, $k$-\SUM{}. 
In~\problem{Subset Sum}, given a~set
of~$n$~integers 
and a~target integer~$t$
the goal is~to~check 
whether there is a~subset of~them
that sum~up to~$t$. It~is common to~assume
that the absolute value of~all input integers is~at~most $2^{O(n)}$ (one can achieve this using hashing techniques).

In the~$k$-\SUM{} problem, the goal is~to~check whether
some~$k$ of~the~input integers sum to~zero. We~consider 
a~slightly different formulation of~$k$-\SUM{}:
given $k$~sequences $A_1, \dotsc, A_k$,
each containing $n$~integers of~absolute value at~most $n^{O(1)}$, 
check whether there exist indices $i_1, \dotsc, i_k$
such that\footnote{These two versions of $k$-\SUM{} are reducible 
	to~each other. One direction is~easy: take $k$~copies of~the array~$A$. For the other direction, one can use the color coding technique~\cite{DBLP:journals/jacm/AlonYZ95}:
	for every element~$A$, add it~to $A_i$ for random $i \in [k]$; then, if a~solution for the original problem exists,
it~survives in~the resulting instance with constant probability. This randomized reduction can also be~derandomized~\cite{DBLP:journals/jacm/AlonYZ95}.}
\(A_1[i_1]+\dotsb+A_k[i_k]=0\).
It~is common to~further assume that the bit-length of~all integers in~the $k$-\SUM{} problem is~at~most $k\log n + O(\log \log n)$. 
This can be~achieved by the standard fingerprinting technique.

\subsection{Known Results} 

\subsubsection{Time Complexity}

$2$-\SUM{} can be~solved in~time $\widetilde{O}(n)$ using binary search (the $\widetilde{O}(\cdot)$ notation hides polylogarithmic factors): sort~$A_1$; then,
for each $1 \le i_2 \le n$, check whether $A_1$~contains
an~element $-A_2[i_2]$. Another well known algorithm
is~based on~the two pointers method that proceeds as~follows.
Sort $A_1$ and~$A_2$
and let $i_1=1$ and $i_2=n$.
Then, keep repeating the following:
	if $i_1 > n$ or $i_2 < 1$, stop; 
	if $A_1[i_1]+A_2[i_2]=0$, return $(i_1, i_2)$;
	if $A_1[i_1]+A_2[i_2]>0$, decrement $i_2$ (as~$A_2[i_2]$ is~too large: its sum with the currently smallest element of~$A_1$ is~positive);
	if $A_1[i_1]+A_2[i_2]<0$, increment $i_1$ (as~$A_1[i_1]$ is~too small: its sum with the currently largest element of~$A_2$ is~negative).

The algorithm for $2$-\SUM{} 
allows $k$-\SUM{} to~be~solved
 in time $\widetilde{O}(n^{\lceil k/2 \rceil})$ 
via the following reduction. Given $k$~arrays $A_1, \dotsc, A_k$, split them into two halves: 
$A_1, \dotsc, A_{\lfloor k/2 \rfloor}$
and 
$A_{\lceil k/2 \rceil}, \dotsc, A_k$. 
Populate 
arrays $B_1$~and~$B_2$ with all sums of~elements from the 
first and second halves, respectively. Then, it~remains
to~solve \problem{$2$-SUM} for $B_1$~and~$B_2$.

{For $3$-\SUM{}, 
Chan~\cite{DBLP:journals/talg/Chan20} proved a~slightly better upper bound $O(n^2\log \log^{O(1)} n/\log^2 n)$}.
It~is a~major open problem whether $3$-\SUM{} can be~solved
in~time $O(n^{2-\varepsilon})$ and the $3$-\SUM{} hypothesis, stating that this is~impossible, 
is~a~popular hypothesis in~the field of~fine-grained complexity. It~is also currently unknown whether $k$-\SUM{}
can be~solved in~time $\widetilde{O}(n^{\lceil k/2 \rceil - \varepsilon})$ for any $k \ge 3$ and $\varepsilon>0$.

The $2$-\SUM{} algorithm described above is~also
at~the core of~the strongest known upper bound $O^*(2^{n/2})$ for \problem{Subset Sum} ($O^*(\cdot)$ hides polynomial factors) 
presented by~Horowitz and Sahni~\cite{DBLP:journals/jacm/HorowitzS74}
fifty years ago. 
To~get this running time, partition~$A$ into two halves
of~size~$n/2$; then, populate arrays $A_1$ and $A_2$ (of~size $2^{n/2}$) with sums of~all subsets of~the two halves, respectively; then, it~remains to~solve $2$-\SUM{}
for $A_1$~and~$A_2$. It~is an~important open problem 
to~solve \problem{Subset Sum} in $O^*(2^{(1/2-\varepsilon) n})$
for a~constant $\varepsilon>0$. The fastest known algorithm shows that \problem{Subset Sum} can be solved in time $2^{n/2} / \poly(n)$ \cite{chen2023subset}.

\subsubsection{Space Complexity}

\subsubsection{Space Complexity}

The algorithms that solve $k$-\SUM{} and \problem{Subset Sum} via a~reduction to~$2$-\SUM{} have high space complexity: 
for $k$-\SUM{}, it~is $O(n^{\lceil k/2 \rceil})$,
whereas for \problem{Subset Sum} it is $O(2^{n/2})$.
It~is natural to~ask whether one can lower the space complexity or~whether it~is possible to~trade off time and space. Lincoln et~al.~\cite{DBLP:conf/icalp/LincolnWWW16} gave a~positive answer for $k$-\SUM{}: one can solve
$3$-\SUM{} in~time $O(n^2)$ and space $O(\sqrt n)$
as~well~as $k$-\SUM{} (for $k \ge 4$) in~time $O(n^{k-2+2/k})$ and space~$O(\sqrt n)$.
For \problem{Subset Sum}, 
the well-known algorithm
by~Schroeppel and Shamir~\cite{DBLP:journals/siamcomp/SchroeppelS81}
solves~it
in~time $O^*(2^{n/2})$ and space $O(2^{n/4})$:
they reduce \problem{Subset Sum} to~$4$-\SUM{}
and note that $4$-\SUM{} can be~solved in~time $\widetilde{O}(n^2)$ and space $O(n)$
(since all pairwise sums 
of~two sorted sequences of~length~$n$ can be~enumerated in~time~$\widetilde{O}(n^2)$ and space~$O(n)$ using a~priority queue data structure).
Just recently, 
Nederlof and Wegrzycki improved the space complexity
to~$O(2^{0.249999n})$~\cite{DBLP:conf/stoc/NederlofW21}.

\subsubsection{Proof Complexity}
It~is easy to~certify a~yes-instance of~$k$-\SUM{} or~\problem{Subset Sum}: just write down a~solution; 
it~can be~checked in~deterministic time~$O(n)$.
Certifying no-instances is~trickier.
Carmosino et al.~\cite{DBLP:conf/innovations/CarmosinoGIMPS16} 
showed that, for $3$-\SUM{}, there are proofs of~size $\widetilde{O}(n^{1.5})$ that can be~deterministically
verified in~time $\widetilde{O}(n^{1.5})$.
By~allowing the verifier to~be probabilistic,
Akmal et al.~\cite{DBLP:conf/innovations/Akmal0JR022}
presented a~proof system for $3$-\SUM{} where the 
proof's size and the verification time is~$\widetilde{O}(n)$.
For $k$-\SUM{}, they give an~upper bound of $\widetilde{O}(n^{k/3})$. 
The corresponding proof system
is known as~a~Merlin--Arthur protocol: a~computationally unbounded prover (Merlin) prepares a~proof 
and sends it~to a~probabilistic verifier (Arthur)
who needs to~be able to~check the proof quickly
with small constant error probability. We~say that
a~problem can be~solved in Merlin--Arthur time $T(n)$,
if~there exists a~protocol where both the proof size and the verification time do~not exceed~$T(n)$. 

For \problem{Subset Sum},
Nederlof~\cite{DBLP:journals/ipl/Nederlof17} 
proved an~upper bound $O^*(2^{(1/2-\varepsilon)n})$, for some $\varepsilon>10^{-3}$, on~Merlin--Arthur time.
Akmal et al.~\cite{DBLP:conf/innovations/Akmal0JR022}
improved the bound to~$O^*(2^{n/3})$.

\subsection{New Results} 

Below, we~give an~overview of~the main results of~the paper.

\subsubsection{New Arthur--Merlin Algorithm}

Recall that in an~Arthur--Merlin protocol the randomness is~shared 
(also known as~public coins)
and the verifier is~deterministic:
upon receiving the verifier's randomness,
the prover prepares a~proof and sends~it
to~the verifier who checks~it deterministically
with small error probability (taken over public randomness).
Formally, we~say that a~language $L \subseteq \{0,1\}^*$
belongs to a~class $\AM[s(n), t(n)]$, if~there exists
an~Arthur--Merlin protocol such that, for any $x \in \{0,1\}^n$, the proof length is~at~most $s(n)$, the verification time is~at~most $t(n)$, and
for each $x \in L$, the verifier accepts with probability at~least~$2/3$, whereas for each $x \not \in L$, the verifier rejects with probability~$1$.\footnote{
	In~the standard definition of~AM protocols,
	there is either a~two-sided error probability or zero error probability of~acceptance. We~choose to~have a~zero error probability of~rejection as~it allows for
	straightforward transformation to~randomized and parallel algorithms.}
Such a~protocol implies that $L$~can be~solved 
by~a~randomized algorithm that has running time $t(n)$
and error probability at~least $2^{-s(n)}$. It~is
also easily parallelizable: 
using $k \le 2^{s(n)}$~processors, the problem can be~solved in time $\frac{2^{s(n)}t(n)}{k}$.

Our first main result is a~new Arthur--Merlin algorithm for \problem{Subset Sum}.

\begin{restatable}{theorem}{subsetsumtheorem}
	\label{thm:subsetsum}
	\(\problem{Subset Sum} \in \AM\left[ n/4, O^*(2^{n/4})\right].\)
\end{restatable}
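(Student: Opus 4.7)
The plan is to reduce to $4$-\SUM{} on arrays of size $N := 2^{n/4}$ and then exhibit an AM protocol for that instance using a random-prime hash. First, partition the $n$ input integers into four groups of $n/4$ and enumerate the subset-sum lists $S_1,S_2,S_3,S_4$ of size $N$, one for each group. A solution to \problem{Subset Sum} exists iff there is a 4-tuple $(s_1,s_2,s_3,s_4)\in S_1\times S_2\times S_3\times S_4$ with $s_1+s_2+s_3+s_4=t$, and the $S_i$ themselves fit within the $O^*(N)$ time and space budget.

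For the protocol, Arthur samples a uniformly random prime $p$ of bit length $n/4$, and Merlin answers with a residue $h\in\{0,\dots,p-1\}$, which fits in $n/4$ bits. Arthur then computes the lists $S_i$, sorts $S_2$ by residue modulo $p$, and scans $S_1$ to enumerate the collision set $P := \{(s_1,s_2)\in S_1\times S_2 : s_1+s_2\equiv h\pmod p\}$; symmetrically he produces $Q := \{(s_3,s_4) : s_3+s_4\equiv t-h\pmod p\}$; if either $|P|$ or $|Q|$ exceeds a threshold $\tau=O^*(N)$ he aborts and rejects. Otherwise he runs the Horowitz--Sahni 2-\SUM{} algorithm on the sum-lists derived from $P$ and $Q$, and accepts iff a pair summing to $t$ is found. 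Perfect soundness is immediate because Arthur accepts only after exhibiting an explicit valid 4-tuple, so on a NO instance he rejects for every $p$ and every $h$.

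The main technical step, and the one I expect to be hardest, is completeness. Fix any solution $(s_1^*,s_2^*,s_3^*,s_4^*)$, let $h^*:=s_1^*+s_2^*$, and suppose Merlin sends $h^*\bmod p$. I would bound $\mathbb{E}_p|P_{h^*\bmod p}|$ by splitting $S_1\times S_2$ into pairs with $a_1+a_2=h^*$ (at most $N$ of these, since $a_1$ determines $a_2$) and pairs with $a_1+a_2\ne h^*$. For the latter, the integer $(a_1+a_2)-h^*$ is nonzero and has magnitude $2^{O(n)}$, so it has $O(n/\log p)=O(1)$ prime divisors of bit length $n/4$; by the prime number theorem there are $\Theta(2^{n/4}/n)$ candidate primes, so the probability that the sampled $p$ divides $(a_1+a_2)-h^*$ is $O(n/N)$. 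Summing over all $N^2$ such pairs yields $\mathbb{E}_p|P_{h^*\bmod p}| \le N+O(nN)=O^*(N)$; the same bound holds for $Q$, and Markov with a union bound gives success probability $\ge 2/3$ once $\tau$ is set to a sufficiently large polynomial multiple of $N$. The subtle point is isolating the degenerate pairs summing exactly to $h^*$, whose count adversarial inputs can push close to $N$, from the "typical" pairs whose contribution depends on divisibility by a random prime.

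The remaining resource accounting is routine: each $S_i$ is built in $O^*(N)$ time and space, sorting by residue is $O^*(N)$, the enumerations of $P$ and $Q$ run in $O^*(N+|P|+|Q|)=O^*(N)$ under the $\tau$-cap, and the closing 2-\SUM{} uses $O^*(|P|+|Q|)=O^*(N)$. This matches the $\AM[n/4,\,O^*(2^{n/4})]$ bound.
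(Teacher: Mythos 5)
Your proof is correct and takes essentially the same route as the paper: reduce Subset Sum to a $4$-\SUM{} instance on four lists of size $2^{n/4}$, have Merlin commit to $(s_1^*+s_2^*) \bmod p$ for a shared random prime $p$ of size roughly $2^{n/4}$, bound the expected number of false-positive collisions by counting prime divisors, cap the collision lists via Markov, and finish with a $2$-\SUM{} pass. The paper simply packages the hashing/collision argument as a standalone $\AM[\log_2 N,\widetilde O(N)]$ protocol for $4$-\SUM{} (\Cref{thm:foursum}) and then invokes it, whereas you inline the same argument; the only technicality you elide is deduplicating the subset-sum lists $S_i$ so that the "$a_1$ determines $a_2$" count of true positives is indeed at most $N$.
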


As~discussed above, 
it can be~parallelized easily: 
to~solve the problem, one can
enumerate possible proofs in~parallel.
Also, by~enumerating 
all possible proofs, one recovers upper bounds
on~time and space for \problem{Subset Sum} proved 
by~Schroeppel and Shamir in~1979.
Interestingly, the resulting algorithm is~very simple
and does not need to~use the priority queue data structure
as in~the algorithm by~Schroeppel and Shamir.

As~it is~the case with the previously known algorithms
for \problem{Subset Sum}, our algorithm 
follows from 
an~algorithm for $4$-\SUM.

\begin{restatable}{theorem}{foursumtheorem}
	\label{thm:foursum}
	\(\text{$4$-\SUM{}} \in \AM\left[ \log_2 n, \widetilde{O}(n)\right].\)
\end{restatable}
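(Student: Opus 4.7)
My plan is to implement the Arthur--Merlin protocol by using a random prime to hash the partial sums $A_1[i_1]+A_2[i_2]$. After a cheap deduplication step that makes each $A_j$ consist of distinct integers (this preserves whether a $4$-\SUM{} witness exists, and guarantees that at most $n$ pairs in $A_1 \times A_2$ share any given integer sum), Arthur samples a uniformly random prime $p$ from $[n/2, n]$, so any $v \in \mathbb{Z}_p$ fits in $\log_2 n$ bits and there are $\Theta(n/\log n)$ primes to choose from. Merlin, who sees both $p$ and the input, replies with such a residue $v$, which he is supposed to set to $v=(A_1[i_1^*]+A_2[i_2^*]) \bmod p$ for some solution $(i_1^*, i_2^*, i_3^*, i_4^*)$ when one exists. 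Arthur then builds hash tables bucketing $A_2$ and $A_4$ by residues modulo $p$ and enumerates $B_{12}(v) = \{(i_1, i_2) : A_1[i_1]+A_2[i_2] \equiv v \pmod p\}$ by looping over $i_1$ and querying the $A_2$-table for $v - A_1[i_1] \bmod p$; he enumerates $B_{34}(-v)$ analogously, aborting and rejecting as soon as either bucket exceeds a threshold $T = \Theta(n\log^2 n)$. On the surviving buckets he runs the meet-in-the-middle $2$-\SUM{} algorithm between $\{A_1[i_1]+A_2[i_2]\}$ and $\{-A_3[i_3]-A_4[i_4]\}$, accepting iff an exact integer coincidence is found, which makes soundness automatic.

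For completeness I need to bound $\mathbb{E}_p|B_{12}(v^*)|$ for the honest residue $v^*$ supplied by Merlin when the instance has a solution. I split the pairs $(i_1, i_2) \ne (i_1^*, i_2^*)$ into two classes. Those whose integer partial sum equals the solution's contribute at most $n$ by deduplication, since once $A_1[i_1]$ is fixed the value $A_2[i_2]$ is forced and occurs at most once. Each remaining pair has a nonzero integer difference $d$ from the solution with $|d| \le n^{O(1)}$; such a $d$ has $O(\log n / \log \log n)$ distinct prime factors, so a random $p$ drawn from $\Theta(n / \log n)$ primes divides $d$ with probability $O(\log^2 n / n)$. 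Summing across the $n^2$ pairs yields $\mathbb{E}_p|B_{12}(v^*)| = O(n\log^2 n)$, and the same bound holds for $B_{34}(-v^*)$. Markov together with a union bound then forces both bucket sizes below $T$ with probability at least $2/3$, and on that event Arthur's enumeration reaches the planted pair and the $2$-\SUM{} step locates the matching half, so Arthur accepts.

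The step I expect to be the most delicate is the coupling between the random-prime collision bound and worst-case structure in the input: without the deduplication preprocessing a single popular partial sum could already inflate $B_{12}(v^*)$ to $\Theta(n^2)$ for every prime, wiping out the time budget independently of the randomness. The secondary chores---calibrating the prime interval and the threshold $T$ so that $v$ strictly fits in $\log_2 n$ bits while the completeness probability stays above $2/3$, and bookkeeping duplicates across rather than within the four arrays---should then follow from routine fingerprinting estimates.
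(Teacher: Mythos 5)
Your proposal is correct and follows essentially the same approach as the paper: Merlin sends a residue $r=(A_1[i_1^*]+A_2[i_2^*])\bmod p$ for a shared random prime $p=\Theta(n)$, Arthur filters pairs from $A_1\times A_2$ and $A_3\times A_4$ by this residue with a cutoff of $\widetilde{O}(n)$, and the analysis combines per-array deduplication (bounding true positives by $n$) with a prime-divisor count (bounding expected false positives by $\widetilde{O}(n)$) followed by Markov. The implementation details differ slightly (hash tables versus two pointers, prime range $[n/2,n]$ versus $[2,n]$, threshold $\Theta(n\log^2 n)$ versus $O(n\log^3 n)$), but these are routine calibrations and do not change the substance of the argument.
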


Since $2k$-\SUM{} can be~reduced to~$4$-\SUM{}, this extends to~$2k$-\SUM{} as~follows.

\begin{restatable}{corollary}{twoksumcorollary}
	For every even integer~$k$,
	\label{thm:twoksum}
	\(\text{$2k$-\SUM{}} \in 
	\AM\left[ 
		\frac{k}{2}\log_2 n, 
		\widetilde{O}(n^{k/2})
	\right].\)
\end{restatable}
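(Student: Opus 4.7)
The plan is a direct meet-in-the-middle reduction from $2k$-\SUM{} to $4$-\SUM{} followed by invoking \Cref{thm:foursum}. Since $k$ is even, $2k$ is divisible by~$4$, so the arrays $A_1,\dotsc,A_{2k}$ split evenly into four consecutive blocks of~$k/2$ arrays each. First I would build four sumset arrays $B_1, B_2, B_3, B_4$, where $B_i$ enumerates the $N \coloneqq n^{k/2}$ sums of the form $A_{(i-1)k/2+1}[j_1] + \dotsb + A_{ik/2}[j_{k/2}]$. By construction, the $2k$-\SUM{} instance has a solution if and only if there exist $b_i \in B_i$ with $b_1+b_2+b_3+b_4 = 0$, i.e.\ if and only if $(B_1,B_2,B_3,B_4)$ is a yes-instance of $4$-\SUM{} on arrays of size~$N$.

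Next, I would run the Arthur--Merlin protocol from \Cref{thm:foursum} on $(B_1,B_2,B_3,B_4)$. That protocol uses a proof of length $\log_2 N = \tfrac{k}{2} \log_2 n$ and runs in verification time $\widetilde{O}(N) = \widetilde{O}(n^{k/2})$, which matches exactly the bounds claimed by the corollary. Completeness and the zero-error soundness of the AM protocol are inherited directly from \Cref{thm:foursum}. Materialising each $B_i$ by brute enumeration takes time $\widetilde{O}(n^{k/2})$, which stays within the verifier's budget (the corollary makes no space claim, so this is unproblematic).

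The main obstacle will be matching the bit-length convention of the $4$-\SUM{} protocol. The standing assumption is that an $N$-sized $4$-\SUM{} instance has entries of bit-length $4 \log N + O(\log \log N)$, whereas an entry of $B_i$ is a sum of $k/2$ integers of bit-length $2k \log n + O(\log \log n)$ and so has bit-length at most $2k \log n + O(\log k) + O(\log \log n)$. Substituting $N = n^{k/2}$ shows $4 \log N = 2k \log n$, so the two conventions agree up to terms that vanish into the $\widetilde{O}$ factor for constant~$k$. Hence the reduced instance is of the form expected by \Cref{thm:foursum}, and the composed protocol witnesses $2k\text{-}\SUM{} \in \AM\!\left[\tfrac{k}{2}\log_2 n,\ \widetilde{O}(n^{k/2})\right]$.
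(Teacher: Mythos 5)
Your proof is correct and follows exactly the paper's approach: partition the $2k$ arrays into four blocks of $k/2$ arrays each, form the $n^{k/2}$-entry sumset arrays $B_1,\dotsc,B_4$, and invoke the $4$-\SUM{} protocol of \Cref{thm:foursum} on the resulting instance of size $N=n^{k/2}$. Your extra check that the bit-length conventions line up (i.e.\ that entries of $B_i$ have length $\approx 4\log N$) is a sensible detail that the paper leaves implicit.
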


The main idea of~the proof of~Theorem~\ref{thm:foursum}
is~the following. To~find integers $i_1,i_2,i_3,i_4$ such that $A_1[i_1]+A_2[i_2]+A_3[i_3]+A_4[i_4]=0$, one generates a~random prime $p \le n$ and uses $(A_1[i_1]+A_2[i_2]) \bmod p$ as a~proof.
A~similar idea of~using a~random prime for filtering subsets
was used previously
by~Howgrave-Graham and Joux~\cite{howgrave2010new}.

\subsubsection{New Conditional Lower Bounds for Circuit SAT}

Another interesting consequence of~Theorem~\ref{thm:foursum}
is~a~fine-grained lower bound for the \problem{Circuit SAT} problem. 
\problem{Circuit SAT} is a~generalization 
of~\problem{SAT} where instead of~a~formula in~CNF,
one is given an~arbitrary Boolean circuit with $g$~binary gates and $n$~inputs (and the goal,
as~usual, is~to~check whether it~is satisfiable).
Being a~generalization of~\problem{SAT}, \problem{Circuit SAT} cannot be~solved in time $O^*(2^{(1-\varepsilon)n})$, for any constant $\varepsilon>0$, under the Strong Exponential Time Hypothesis (that states that for every $\varepsilon>0$, there exists~$k$ such that $k$-\SAT{} cannot be~solved
in time~$O(2^{(1-\varepsilon)n})$).
Even designing a~$2^{(1-\varepsilon)n}$ time algorithm 
solving \problem{Circuit SAT} for circuits with $g \le 8n$
gates is~challenging: as~shown
by~\cite{DBLP:journals/iandc/JahanjouMV18}, this would imply new circuit lower bounds. 

We~prove the following conditional lower bound for 
\problem{Circuit SAT}.

\begin{restatable}{theorem}{cktsattheorem}
	\label{thm:cktsat}
	If, for any $\varepsilon>0$, $4$-\SUM{} cannot 
	be~solved in~time $O(n^{2-\varepsilon})$,
	then, for any $\varepsilon>0$, \problem{Circuit SAT}
	cannot be~solved in~time $O(g2^{(1-\varepsilon)n})$.
\end{restatable}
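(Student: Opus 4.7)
The plan is to argue by contrapositive: assuming \problem{Circuit SAT} is solvable in time $O(g \cdot 2^{(1-\varepsilon_0)n})$ for some $\varepsilon_0 > 0$, I will construct a randomized algorithm for $4$-\SUM{} running in time $\widetilde{O}(n^{2-\varepsilon_0})$. Absorbing polylogarithmic factors yields an $O(n^{2-\varepsilon})$ algorithm for every $\varepsilon < \varepsilon_0$, contradicting the $4$-\SUM{} hypothesis.

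The reduction is built on Theorem~\ref{thm:foursum}, which places $4$-\SUM{} in $\AM[\log_2 n, \widetilde{O}(n)]$. Inspecting the protocol, Arthur's public randomness is $O(\log n)$ bits (a uniformly random prime $p \le n$). Given an instance $(A_1,\dots,A_4)$, I first sample such a prime~$p$, then hard-code both the input and $p$ into Arthur's verifier. This turns the verifier into a deterministic $\widetilde{O}(n)$-time procedure on an input of only $\log_2 n$ bits, namely Merlin's alleged proof~$q$. A standard simulation of an $\widetilde{O}(n)$-time RAM algorithm by a Boolean circuit of size $\widetilde{O}(n)$ compiles this procedure into a circuit $C_p$ with $\log_2 n$ inputs and $g = \widetilde{O}(n)$ gates. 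By AM soundness, for a no-instance every choice of $(p, q)$ is rejected, so $C_p$ is unsatisfiable; by AM completeness, for a yes-instance $C_p$ is satisfiable with probability at least $2/3$ over~$p$. Running the assumed \problem{Circuit SAT} algorithm on $C_p$ therefore decides $4$-\SUM{} with bounded one-sided error in time
\[ O\bigl(g \cdot 2^{(1-\varepsilon_0) \log_2 n}\bigr) = \widetilde{O}\bigl(n \cdot n^{1-\varepsilon_0}\bigr) = \widetilde{O}\bigl(n^{2-\varepsilon_0}\bigr),\]
and standard repetition amplifies this to constant success probability.

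The only delicate step is simulating the $\widetilde{O}(n)$-time verifier by a Boolean circuit of size $\widetilde{O}(n)$: since that verifier manipulates only $O(\log n)$-bit integers (modular reductions, hashing, and array lookups), standard oblivious-RAM--style constructions yield a circuit of size $\widetilde{O}(n)$, with all polylogarithmic overhead safely absorbed into the $\widetilde{O}(\cdot)$ notation. The argument is inherently randomized; iterating deterministically over the $\Theta(n/\log n)$ primes would push the running time past $n^2$, but a randomized refutation of $4$-\SUM{} already contradicts its standard formulation.
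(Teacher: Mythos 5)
Your proposal is correct and follows essentially the same route as the paper: both fix the instance and the public randomness in the verifier from Theorem~\ref{thm:foursum}, compile the resulting deterministic $\widetilde{O}(n)$-time procedure on $\log_2 n$ input bits into a circuit of $\widetilde{O}(n)$ gates, and invoke the hypothesized \problem{Circuit SAT} algorithm to decide $4$-\SUM{} in $\widetilde{O}(n^{2-\varepsilon})$ randomized time. The only cosmetic difference is that the paper passes through the Pippenger--Fischer Turing-machine-to-circuit simulation, while you invoke an oblivious-RAM-style simulation directly; both give the required $\widetilde{O}(n)$-size circuit.
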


This theorem however does not exclude a~$O(g^{O(1)}2^{(1-\varepsilon)n})$-time algorithm for
\problem{Circuit-SAT} in~the same manner. To~get such a~fine-grained lower bound, one needs to~push the algorithm from \Cref{thm:twoksum} further (i.e., to trade most of~the $n^k$ time upper bound for nondeterminism).

\begin{openproblem}
	Prove or~disprove:
	\(\text{$2k$-\SUM{}} \in 
	\AM\left[ 
	k\log_2 n, 
	n^{o(k)}
	\right].\)
\end{openproblem}

\begin{restatable}{corollary}{cktsattwokcorollary}
	\label{thm:cktsattwok}
	Assume that
	\(\text{$2k$-\SUM{}} \in 
	\AM\left[ 
	k\log_2 n, 
	n^{o(k)}
	\right].\)
	Then, if for any $\varepsilon>0$ and any $k \ge 1$, $2k$-\SUM{} cannot 
	be~solved in~time $O(n^{k-\varepsilon})$,
	then, for any $\varepsilon>0$, \problem{Circuit SAT}
	cannot be~solved in~time $O(g^{O(1)}2^{(1-\varepsilon)n})$.
\end{restatable}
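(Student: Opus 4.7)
The plan is to prove the contrapositive: assume that Circuit SAT can be solved in time $O(g^c \cdot 2^{(1-\varepsilon_0)n})$ for some constants $c \ge 1$ and $\varepsilon_0 > 0$, and use the hypothesized Arthur--Merlin protocol for $2k$-\SUM{} to derive an $O(n^{k-\varepsilon})$ algorithm for $2k$-\SUM{} for suitable $k$ and $\varepsilon>0$, contradicting the hypothesis on $k$-\SUM{}.

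First I would turn the assumed AM protocol into a Circuit SAT instance. Given an input $x$ for $2k$-\SUM{}, sample the verifier's random string $r$ once. Using the standard Turing-machine-to-circuit simulation, build a Boolean circuit $C_{x,r}(\pi)$ of size $g = n^{o(k)}$ that computes the verifier's decision $V(x, \pi, r)$ on input $\pi \in \{0,1\}^{k \log_2 n}$, with $x$ and $r$ hard-wired as constants. Thus $C_{x,r}$ has $N = k\log_2 n$ free input variables and $n^{o(k)}$ gates, and can be produced in time $n^{o(k)}$.

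Correctness follows from the one-sided error of the AM protocol. If $x$ is a no-instance, then $V(x,\pi,r) = 0$ for every $\pi$ and every $r$, so $C_{x,r}$ is unsatisfiable. If $x$ is a yes-instance, there is a witness proof $\pi^*$ with $\Pr_r[V(x,\pi^*,r)=1] \ge 2/3$, and so $C_{x,r}$ is satisfied by $\pi^*$ with probability at least $2/3$ over the choice of $r$. Running the assumed Circuit SAT algorithm on $C_{x,r}$ therefore decides $2k$-\SUM{} with bounded error in time
\[
O(g^c \cdot 2^{(1-\varepsilon_0)N}) = n^{c \cdot o(k)} \cdot n^{(1-\varepsilon_0) k} = n^{(1-\varepsilon_0)k + o(k)}.
\]
For $k$ large enough (depending on $c$ and $\varepsilon_0$) the $o(k)$ term drops below $\varepsilon_0 k/2$, so the running time becomes $O(n^{k - \varepsilon_0 k/2})$; choosing any $k \ge 2/\varepsilon_0$ yields $O(n^{k-1})$, contradicting the hypothesis with, say, $\varepsilon = 1$.

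The main obstacle is handling the verifier's randomness $r$, which may take up to $n^{o(k)}$ bits. Treating $r$ as additional Circuit SAT inputs would inflate $N$ to $k\log_2 n + n^{o(k)}$, making $2^{(1-\varepsilon_0)N}$ superpolynomial in $n$ and ruining the bound. Hence $r$ must be sampled once and hard-wired, which produces a \emph{randomized} algorithm for $2k$-\SUM{}. This is fine under the standard fine-grained complexity reading in which the $k$-\SUM{} hardness hypothesis rules out randomized algorithms as well; otherwise an additional derandomization step (for instance, via a pseudorandom generator fooling the Circuit SAT algorithm) would be needed.
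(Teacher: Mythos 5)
Your proposal is correct and follows essentially the same approach as the paper's proof: hard-wire the input and the verifier's randomness, build a circuit on the $k\log_2 n$ proof bits via the standard Turing-machine-to-circuit simulation, run the hypothesized Circuit SAT algorithm, and note that for large enough $k$ the $n^{o(k)}$ overhead is swallowed by the $n^{-\varepsilon_0 k}$ savings. The paper is terser (it simply references the proof of the preceding theorem for the construction of the circuit), whereas you spell out the error analysis and make explicit the point that the resulting $2k$-\SUM{} algorithm is randomized; that observation is correct and applies equally to the paper's argument, where the issue is left implicit.
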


\color{black}

\subsubsection{Improved Space Upper Bound for Subset Sum}
Our second main result is~an~improved space upper bound
for \problem{Subset Sum}. We~achieve this by~improving and simplifying the algorithm by~Nederlof and W\k{e}grzycki.

\begin{theorem}
    \label{theorem:main_theorem}
    There exists a~Monte Carlo algorithm with constant 
	success probability that solves \problem{Subset Sum} for instances
    with $n$~integers of~absolute value at~most $2^{O(n)}$
    in~time $O^*(2^{0.5n})$ and space~$O^*(2^{0.246n})$.
\end{theorem}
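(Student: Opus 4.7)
The plan is to adapt the Nederlof--W\k{e}grzycki framework, which reduces \problem{Subset Sum} to a representation-based matching problem and then solves that matching via a custom algorithm for Orthogonal Vectors, by replacing the latter step with a direct enumeration driven by an additional random prime filter. This substitution should be both simpler to state and, after reoptimizing the parameters, yield the slightly tighter space bound.

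I would start by recalling the representation step: write the target $t$ as a sum of two (or more) partial subsums ranging over a large search space, and apply a random additive shift modulo an integer $M$ chosen so that each true solution admits $\Theta(1)$ surviving representations while each list is thinned to a size well below $2^{n/4}$. Taking $M$ of order roughly $\binom{n/2}{n/4}$, standard Howgrave--Graham--Joux counting guarantees that a constant fraction of solutions survive the shift. At this point each list still has size larger than what we want to store, and the naive way to recover the surviving representation would blow the space budget.

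The key new ingredient is a second round of filtering using a uniformly random prime $p$ of size $2^{\gamma n}$ for a tunable parameter $\gamma$. Conditioning on the residues modulo $p$ of a pair of partial subsums splits the remaining matching problem into $p^{O(1)}$ independent subproblems, each of which involves lists smaller than the first-level filtered lists by an additional factor of $p$, and each of which can be solved by a Schroeppel--Shamir-style priority-queue enumeration in space proportional to the current list size. Balancing $M$, $\gamma$, and the block sizes so that the total time telescopes to $O^*(2^{n/2})$ while the space is minimized is expected to pin the achievable space at $O^*(2^{0.246 n})$.

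The main technical obstacle is the joint probabilistic analysis. One must show that with constant probability (i) at least one representation of a true solution survives both the $M$-shift and the $p$-filter, and (ii) all filtered sublists are close to their expected sizes. For (i), the delicate point is that residues modulo a random prime are only approximately uniform on the list of candidate subsums, so an HGJ-style second-moment estimate has to be reworked to accommodate this quasi-uniformity and the weak correlation between the two filters. For (ii), a Chernoff or variance argument for list sizes, conditioned on the random prime, should suffice. A secondary concern is that the Schroeppel--Shamir enumeration within each subproblem must be adapted to respect the additional modular constraints online, without incurring extra space; this is a routine extension of the priority-queue data structure but needs to be spelled out carefully so that no hidden $2^{\Omega(n)}$ factor sneaks into the space accounting.
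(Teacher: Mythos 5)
Your proposal correctly identifies one genuine new ingredient of the paper — a second random prime used to split the post-representation matching into many small subproblems — and the technical obstacle that residues modulo a random prime are only quasi-uniform, which the paper indeed handles with a Cauchy--Schwarz second-moment estimate (Lemma~\ref{lemma:p1}). However, there is a fundamental gap in your high-level plan: you propose to \emph{replace} the Orthogonal Vectors step by direct Schroeppel--Shamir-style enumeration filtered by a second prime, and claim the disjointness issue simply goes away. It does not. The paper retains the Weighted Orthogonal Vectors subroutine (Lemma~\ref{lemma:WOV}, invoked at line~\ref{line:solving_WOV} of Algorithm~\ref{lst:main_algorithm}); the second prime $q$ serves only to shrink the list families $\mathcal{L}$ and $\mathcal{R}$ that are then \emph{fed into} WOV, by iterating $s \in \mathbb{Z}_q$ and constructing one small pair $(\mathcal{L},\mathcal{R})$ per residue.

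The reason WOV cannot be dispensed with is the mixer set $M$, which your proposal never mentions. In the Nederlof--W\k{e}grzycki framework (and in the paper), the representation technique gets its $2^{\Omega(n)}$ redundancy by allowing $S \cap M$ to be split between the left and right halves in many ways: $S = S_L \sqcup S_R$ with $S_L \cap M$ and $S_R \cap M$ partitioning $S \cap M$. After prime filtering, each side contributes a weighted set that intersects $M$; a valid pair is one whose intersections with $M$ are \emph{disjoint} and whose weights sum to $t$. That disjointness constraint is exactly an orthogonal-vectors condition on $\{0,1\}^{|M|}$, and it cannot be resolved by a $2$-SUM/$4$-SUM priority-queue sweep, which matches only on a scalar weight. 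Without the mixer there is no exponential redundancy to pay for the prime filter, and you are back to plain Schroeppel--Shamir at $O^*(2^{n/4})$ space; with the mixer, you need an OV-type algorithm to certify disjointness within the $2^{\beta n(1 - h(1/4))}$ time and $2^{\beta n}$ space budget of Lemma~\ref{lemma:OV}. Your plan therefore cannot reach $O^*(2^{0.246n})$ as stated; the second prime is a correct and useful refinement, but on top of WOV, not as a substitute for it. A secondary inaccuracy: the first-level modulus is of order $2^{(1-\varepsilon)\beta n/2} \approx 2^{0.065 n}$ (tied to the number of representations inside the mixer $M$), not $\binom{n/2}{n/4} \approx 2^{n/2}$ as you suggest.
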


The algorithm by~Nederlof and W\k{e}grzycki proceeds 
roughly as~follows.
Let $I \subseteq \mathbb Z$, $|I|=n$, be an~instance of~\problem{Subset Sum}. Assume that it~is 
a~yes-instance and that a~solution $S \subseteq I$ has size $n/2$. Assume further that there exists
a~subset  $M \subseteq I$ 
of~size $\Theta(n)$ such that 
$|S \cap M|=|M|/2$ and $M$~is a~perfect mixer: the weights 
of~all subsets of~$M$ are (pairwise) distinct. 
Partition $I \setminus M$ into two parts $L$~and~$R$ of~equal size, hence, $I=L \sqcup M \sqcup R$. We~will be~looking for 
two disjoint sets $S_1 \subseteq L \sqcup M$
and $S_2 \subseteq M \sqcup R$ such that $|S_1 \cap M|=|S_2 \cap M|=|M|/4$ and $S_1 \sqcup S_2$ is a~solution.
This problem can be~solved by~a~reduction to the Weighted Orthogonal Vectors (WOV) problem that can be~viewed as~a~hybrid of~the Orthogonal Vectors and the $2$-\SUM{} problems: given two families {$\mathcal L$ and $\mathcal R$} of~weighted sets {and a~target integer~$t$}, find two disjoint sets whose sum of~weights is~equal to~{$t$}. A~naive such reduction would proceed similarly to~the algorithm by~Horowitz and Sahni~\cite{DBLP:journals/jacm/HorowitzS74}: let 
\begin{align*}
\mathcal L&=\{(S_1 \cap M, w(S_1)) \colon S_1 \subseteq L \sqcup M, |S_1 \cap M|=|M/4|\},\\
\mathcal R&=\{(S_2 \cap M, w(S_2)) \colon S_2 \subseteq M \sqcup R, |S_2 \cap M|=|M/4|\}.
\end{align*} 
Then, it~remains to~solve WOV for {$\mathcal L$, $\mathcal R$ and $t$}.
The bad news is that $|L \sqcup M|>n/2$, hence enumerating all $S_1 \subseteq L \sqcup M$ leads to a~running time worse than $O^*(2^{n/2})$. The good news is that the solution~$S$
is~represented by~exponentially many pairs $(S_1, S_2)$: 
there are many ways to~distribute $S \cap M$ between $S_1$~and~$S_2$. This allows one to~use the representation technique: take a~random prime~$p$ of~the order $2^{|M|/2}$
and a~random remainder~$r \in \mathbb{Z}_p$; 
filter 
$\mathcal L$~and~$\mathcal R$ to~leave only subsets $S_1$~and~$S_2$ such that $w(S_1) \equiv_p r$ and $w(S_2) \equiv_p {t - r}$. This way, one reduces the space complexity since
$\mathcal L$~and~$\mathcal R$ become about~$p$ times smaller. 

In this work, we further introduce the following idea to reduce the space complexity of~the algorithm and to~simplify~it at~the same time.
We~use another prime~$q$ to~further filter~$\mathcal L$ and~$\mathcal R$: this allows~us to~reduce the original instance
to~a~larger number of~instances that in~turn have smaller size.
The idea is~best illustrated by the following toy example.
Consider an~instance~$I$ of~size~$n$ of \problem{Subset Sum}.
As~in~the algorithm by~Horowitz and Sahni~\cite{DBLP:journals/jacm/HorowitzS74}, partition~$I$ arbitrarily into two parts~$L$ and~$R$ of~size~$n/2$. Assume now that we~are given a~magic integer~$q=\Theta(2^{n/4})$ that hashes all subsets of~$L$~and~$R$ almost perfectly: for each $r \in \mathbb Z_q$,
\[|\{A \subseteq L \colon w(A) \equiv_q r\}|,\, |\{B \subseteq R \colon w(B) \equiv_q {t - r}\}| =O(2^{n/4}) \, .\]
This allows~us to~solve~$I$ in~time $O^*(2^{n/2})$ and space~$O^*(2^{n/4})$ without using a~reduction to~$4$-\SUM{} and priority queues machinery of~Schroeppel and Shamir's algorithm: for each~$r$, construct sequences of~all subsets of~$L$ and~$R$
having weights~$r$ and~{$t - r$} modulo~$q$, respectively, and solve
$2$-\SUM{} for the two resulting sequences of~length $O(2^{n/4})$.
Thus, instead of~reducing~$I$ to a~single instance of~$2$-\SUM{}
of~size $2^{n/2}$, we~reduce~it to~$2^{n/4}$ instances of~size~$2^{n/4}$.
We~show that even though in~general a~randomly chosen integer~$q$ does not give a~uniform distribution of~remainders,
one can still use this idea
to~reduce the space.

On~a~technical side, we~choose a~mixer~$M$
more carefully. This allows~us to~cover various corner cases, when the size of~the solution~$S$ is not $n/2$ and when 
the mixer~$M$~is far from being perfect, in~the same manner,
thus
simplifying the algorithm 
by~Nederlof and W\k{e}grzycki.

\section{General Setting}

\subsection{Modular Arithmetic}
We~write $a \equiv_m b$
to~denote that integers $a$~and~$b$ have the same remainder modulo~$m$.
By~$\mathbb{Z}_m$ we~denote the ring of~remainders modulo~$m$. We~make use of~the following well known fact.
\begin{theorem}[Chinese Remainder Theorem] \label{theorem:chinese_remainder_theorem}
    For any coprime positive integers $p$~and~$q$
    and any $a \in \mathbb Z_p$ and $b \in \mathbb Z_q$, there exists a~unique
    $x \in \mathbb Z_{pq}$ such that $x \equiv_p a$ and $x \equiv_q b$. 
    Furthermore,
    it~can be~computed (using the extended Euclidean algorithm) in~time $O(\log^2 p+\log^2 q)$.
\end{theorem}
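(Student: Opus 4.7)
The plan is to prove both existence and uniqueness of $x \in \mathbb{Z}_{pq}$ satisfying the two congruences, and then to describe the explicit algorithm that yields the stated running time. For \emph{existence}, I would start from the coprimality of $p$ and $q$: by B\'ezout's identity there exist integers $u,v$ with $up + vq = 1$. Then I would set
\[x \coloneqq a\cdot vq + b \cdot up \pmod{pq}.\]
Reducing modulo $p$ kills the first term and leaves $x \equiv b\cdot up \equiv b\cdot(1 - vq) \cdot \text{(absorb)} \equiv a \pmod p$ (after also accounting that $vq \equiv 1 \pmod p$); an analogous check modulo $q$ gives $x \equiv b \pmod q$. I would lay out this modular arithmetic carefully, since it is the one place a sign or index swap can go wrong.

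For \emph{uniqueness}, I would take two candidates $x_1, x_2 \in \mathbb{Z}_{pq}$ satisfying both congruences and observe that $p \mid x_1 - x_2$ and $q \mid x_1 - x_2$. Using $\gcd(p,q)=1$, a standard divisibility argument gives $pq \mid x_1 - x_2$, so $x_1 = x_2$ as elements of $\mathbb{Z}_{pq}$. This is short and self-contained.

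For the \emph{complexity} claim, I would invoke the extended Euclidean algorithm to compute $(u,v)$ with $up+vq=1$. The number of iterations is $O(\log \min(p,q))$, and each iteration performs a constant number of additions, multiplications and divisions on integers of bit-length $O(\log(p+q))$, each of which costs $O(\log^2(p+q))$ with schoolbook arithmetic; a sharper analysis tracking the shrinking remainders gives the stated $O(\log^2 p + \log^2 q)$ bound. The remaining arithmetic to assemble $x = a\cdot vq + b\cdot up \bmod pq$ is a constant number of multiplications and a reduction, all within the same budget.

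The main obstacle is really only cosmetic: making sure the algebraic identity used in the existence step cleanly expresses $x \bmod p$ and $x \bmod q$ in terms of $a$ and $b$, and stating the cost of extended GCD with a reference or brief justification rather than a detailed bit-complexity derivation. Given that the result is classical, I would keep the proof compact and treat the bit-complexity bound as standard, citing a textbook if the paper prefers.
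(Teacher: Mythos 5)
The paper does not actually prove this statement: it is stated as a well-known fact and used as a black box (the only nontrivial claim for the paper's purposes is the running-time bound, which is cited implicitly as standard). So there is no ``paper's proof'' to compare against; your job here is simply to reconstruct the classical argument, and your outline does that.

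Your approach (B\'ezout via extended Euclid for existence, divisibility with $\gcd(p,q)=1$ for uniqueness, and the standard $O(\log^2)$ bound for extended GCD plus a constant number of multiplications and a reduction) is exactly the right one and is complete in structure. One concrete slip in the existence step: with $x = a\cdot vq + b\cdot up$ and $up + vq = 1$, reducing modulo $p$ kills the \emph{second} term (since $p \mid up$), not the first; what remains is $a\cdot vq \equiv a\cdot 1 = a \pmod p$ because $vq \equiv 1 \pmod p$. Your write-up says the first term is killed and then tries to recover $a$ from $b\cdot up$, which does not work. You flagged this as the fragile spot yourself --- just swap the roles of the two terms and the argument goes through cleanly, with the symmetric check modulo $q$.
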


\subsection{Prime Numbers}\label{section:primes}

Given an~integer~$t$, one can generate a~uniform random prime from $[t, 2t]$ in time $O(\log^{O(1)} t)$: to~do this, one selects a~random integer from $[t, 2t]$ and checks whether 
it~is prime~\cite{lp19}; the expected number of~probes is~$O(\log t)$ due to~the law of distribution of prime numbers.
We also use the following estimate: any positive integer~$k$ has at~most $\log_2 k$ distinct prime divisors; 
hence the probability that a~random prime from $[t, 2t]$ divides~$k$ is~at~most $O(\frac{\log k \log t}{t})$.

\subsection{Probability Amplification}\label{sec:amplification}
We~use the following standard success probability amplification trick:
if~a ``good'' event~$A$ occurs with probability at~least~$p$,
then for~$A$ to~occur at~least once with constant probability,
$1/p$ independent repetitions are sufficient:
\((1-p)^{1/p} \le (e^{-p})^{1/p}=e^{-1} \, .\)

\subsection{Growth Rate and Entropy}
Much like $O(\cdot)$ hides constant factors, $O^*(\cdot)$
and $\widetilde{O}(\cdot)$ hide factors that grow {polynomially in the input size} 
and polylogarithmically, respectively: for example, $n^{2}\cdot 1.5^n=O^*(1.5^n)$ and $\log^{3}n \cdot n^2=\widetilde{O}(n^2)$.
$\Omega^*(\cdot)$, $\Theta^*(\cdot)$, $\widetilde{\Omega}(\cdot)$, and $\widetilde{\Theta}(\cdot)$ are defined similarly.

We~use the following estimates for binomial coefficients: for any constant $0 \le \alpha \le 1$,
\begin{equation}\label{eq:binomialestimate}
\Omega(n^{-1/2})2^{h(\alpha)n} \le \binom{n}{\alpha n} \le 2^{h(\alpha)n} \text{ and hence } \binom{n}{\alpha n}=\Theta^*\left( 2^{h(\alpha)n}\right),
\end{equation}
where
\(h(x)=-x \log_2 x - (1-x)\log_2 (1-x)\)
is the binary entropy function. The function~$h$
is~concave: for any $0 \le \beta \le 1$ and any $0 \le x, y \le 1$,
\(\beta h(x) + (1-\beta) h(y) \le h(\beta x + (1-\beta) y) \, .\)

\subsection{Sets and Sums} 
For a~positive integer~$n$, $[n]=\{1, 2, \dotsc, n\}$.
For disjoint sets $A$~and~$B$, by $A \sqcup B$ we~denote their union.

Let $S \subseteq \mathbb Z$ be a~finite set of~integers. By~$w(S)$ we denote the weight of~$S$: $w(S)=\sum_{a \in S}a$. By $2^S$ we~denote the power set of~$S$, i.e., the set of~its subsets: 
$2^S=\{A \colon A \subseteq S\}$. By $\binom{S}{k}$
we~denote the set of all subsets of~$S$ of~size~$k$:
$\binom{S}{k}=\{A \subseteq S \colon |A|=k\}$.
For a~family of~sets $\mathcal F \subseteq 2^S$, 
by~$w(\mathcal F)$ we~denote the set of~their weights: 
$w(\mathcal F)=\{w(A) \colon A \in \mathcal F\}$. 

Given~$S$, one can compute $|w(2^S)|$
in~time and space~$O^*(2^{|S|})$ by~iterating over all of~its subsets. If $|w(2^S)|=2^{(1-\varepsilon)|S|}$, 
we~call~$S$ an~$\varepsilon$-mixer.

\begin{lemma}
\label{lemma:same_fractions}
    Let $S \subseteq A$ and $|A|=n$.
    For $m=O(1)$ uniformly randomly chosen disjoint non-empty sets $A_1, \dotsc, A_m \subseteq A$, 
	\[\Pr\left[\frac{|A_i \cap S|}{|A_i|} \approx \frac{|S|}{|A|} \text{, for all $i \in [m]$}\right] = \Omega^*(1) \; ,\]
 where $\frac{|A_i \cap S|}{|A_i|} \approx \frac{|S|}{|A|}$ means that $\left||A_i \cap S| - |A_i| \frac{|S|}{|A|}\right| = O(1)$.
\end{lemma}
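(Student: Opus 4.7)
The plan is to prove this by induction on $m$ using the local limit theorem for the hypergeometric distribution. First, I would fix an interpretation of ``uniformly randomly chosen disjoint non-empty sets'': the sizes $|A_i|=n_i$ are prescribed (the lemma will be applied with $n_i=\Theta(n)$), and the tuple $(A_1,\dotsc,A_m)$ is chosen uniformly among ordered $m$-tuples of disjoint subsets of $A$ with these sizes. Equivalently, we draw $A_1$ uniformly from $\binom{A}{n_1}$, then $A_2$ uniformly from $\binom{A\setminus A_1}{n_2}$, and so on.

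The key observation is that, conditioned on $A_1,\dotsc,A_{i-1}$, the random variable $X_i=|A_i\cap S|$ is hypergeometric: we draw $n_i$ items without replacement from a population of size $N_i=|A|-\sum_{j<i}n_j$ containing $K_i=|S\setminus\bigcup_{j<i}A_j|$ ``successes''. In particular, its mean is exactly $n_i K_i/N_i$. By the local limit theorem for the hypergeometric distribution, for every integer target $k$ within $O(1)$ of the mean one has
\[
\Pr\bigl[X_i=k \,\big|\, A_1,\dotsc,A_{i-1}\bigr] \;=\; \Omega\!\left(\frac{1}{\sqrt{n\,p(1-p)}}\right)
\]
where $p=K_i/N_i$, provided $p(1-p)=\Omega(1/n)$. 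I would now condition on the inductive event that $|A_j\cap S|/n_j\approx |S|/|A|$ holds for all $j<i$; a direct calculation then shows that $K_i/N_i$ differs from $|S|/|A|$ by only $O(1/n)$, so the hypergeometric mean $n_iK_i/N_i$ lies within $O(1)$ of the desired target $n_i|S|/|A|$. Picking the nearest integer target yields the conditional probability $\Omega(n^{-1/2})$.

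Multiplying the $m$ conditional probabilities, the joint ``good'' event occurs with probability $\Omega(n^{-m/2})=\Omega^*(1)$ since $m=O(1)$, as required. The main obstacle will be the corner cases where the local limit theorem degenerates, namely when $|S|$ is very small or very close to $|A|$, or when some $n_i$ is tiny. The cleanest way to dispatch all of these uniformly is to note that in those regimes the hypergeometric distribution is already concentrated at a single value within $O(1)$ of the expectation (a Chebyshev-type bound suffices), so conditioning on the rounded-expectation event costs only $\Omega(1)$ rather than $\Omega(n^{-1/2})$, and the inductive argument goes through with an even better bound. A secondary technical point is to verify that the hypergeometric local limit theorem quoted above holds with absolute constants independent of $p$, which can be derived from Stirling's approximation applied to the ratios of binomial coefficients defining the hypergeometric probability mass function.
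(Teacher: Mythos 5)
Your proposal is correct and takes essentially the same approach as the paper: you observe that each $|A_i\cap S|$, conditioned on the earlier sets, is hypergeometric, and you bound the point probability at (an integer near) the mean from below by $\Omega^*(1)$ via Stirling-type estimates on binomial coefficients — which is exactly what the paper does by invoking its estimate \eqref{eq:binomialestimate} directly on the hypergeometric pmf. The only differences are cosmetic: you package the estimate as a ``hypergeometric local limit theorem'' and spell out the induction over $i$ and the degenerate regimes (tiny $|S|$ or $n_i$), whereas the paper handles both more tersely by reducing to the $m=1$ case and gliding over the corner cases.
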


 \begin{proof}\label{proof:same_fractions}

 	We will show that

         \[\Pr\left[|A_i \cap S| = \left\lfloor |A_i| \frac{|S|}{|A|}\right\rfloor \, \forall i \in [m - 1] \land |A_m \cap S| = \left \lfloor |A_m| \frac {|S| - \sum_{i < m} \lfloor A_i \frac {|S|} {|A|}\rfloor} {|A| - \sum_{i < m} |A_i|} \right\rfloor \right] = \Omega^*(1) \; .\] 

     After that it is easy to see that $\left||A_i \cap S| - |A_i| \frac{|S|}{|A|} \right| = O(1), \; \forall i \in [m]$.

     Hence, we can think that all numbers $|A_i| \frac{|S|}{|A|}$ are integers.
   
     Since $m=O(1)$, it~suffices to~estimate the probability for~$m=1$.
 	{Let $B \subseteq A$~be a~random set and 
 	\begin{equation*}
 		\gamma = \frac {\left\lfloor |B| \cdot \frac{|S|}{|A|} \right\rfloor} {|B|}.
 	\end{equation*}
 	}
     Then
     \begin{align*}
 		\Pr\left[|B \cap S|=\left\lfloor|B| \frac {|S|} {|A|}\right\rfloor\right] &= \frac{\binom{|S|}{\gamma |B|} \binom{|A \setminus S|}{(1 - \gamma) |B|}}{\binom{|A|}{|B|}}\tag{by~\eqref{eq:binomialestimate}}\\
         &= \Omega^*\left(2^{\gamma |A| h(|B|/|A|) + (1 - \gamma)|A| h(|B|/|A|) - |A| h(|B|/|A|)}\right)
         =\Omega^*(1).
     \end{align*}
 \end{proof}

\subsection{Weighted Orthogonal Vectors}

\begin{restatable}{definition}{WOV}[Weighted Orthogonal Vectors, WOV] \label{definition:wov} 
    Given families of $N$ weighted sets $\mathcal A, \mathcal B \subseteq 2^{[d]} \times \mathbb N$, and a target integer~$t$, 
    the problem is to find $(A, w_A) \in \mathcal A$ and $(B, w_B) \in \mathcal B$ such that $A \cap B = \emptyset$ and $w_A + w_B = t$.
\end{restatable}

\begin{restatable}{lemma}{lemmawov}
\label{lemma:WOV}
    For any $\sigma \in [0, \frac 1 2]$, there is a Monte-Carlo algorithm that, given $\mathcal A \subseteq \binom{[d]}{\sigma d} \times \mathbb N$, $\mathcal B \subseteq \binom{[d]}{(1/2-\sigma)d} {\times \mathbb N}$  and a~target integer~$t$, 
    solves $\operatorname{WOV}(\mathcal A, \mathcal B)$
    in time
$
    \widetilde O\left((|\mathcal A| + |\mathcal B|) 2^{d(1-h(1/4))}\right)
$
and space $\widetilde O\left(|\mathcal A| + |\mathcal B| + 2^d\right)$.
\end{restatable}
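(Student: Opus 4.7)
The plan is a randomized meet-in-the-middle scheme in which random partitioning of the ground set $[d]$ turns the disjointness constraint into containment, which can then be exploited via $2$-SUM. In each round I sample a random partition $[d] = P_1 \sqcup P_2$ (with sizes to be tuned), form the filtered families
\[
\mathcal A_{P_1} = \{(A, w_A) \in \mathcal A : A \subseteq P_1\}, \quad \mathcal B_{P_2} = \{(B, w_B) \in \mathcal B : B \subseteq P_2\},
\]
observe that every pair $(A, B) \in \mathcal A_{P_1} \times \mathcal B_{P_2}$ is automatically disjoint, and then solve $2$-SUM on these: sort $\mathcal B_{P_2}$ by weight and for each $(A, w_A) \in \mathcal A_{P_1}$ binary-search for $t - w_A$, in $\widetilde O(|\mathcal A_{P_1}| + |\mathcal B_{P_2}|)$ time. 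The whole round is repeated enough times to amplify the success probability to a constant (\Cref{sec:amplification}).

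For the analysis, fix any good pair $(A^*, B^*)$ with $A^* \cap B^* = \emptyset$, $|A^*| = \sigma d$, $|B^*| = (1/2 - \sigma)d$. A direct count gives
\[
p \;:=\; \Pr[A^* \subseteq P_1,\ B^* \subseteq P_2] \;=\; \binom{d/2}{\sigma d}\Big/\binom{d}{d/2},
\]
while the expected filtered sizes are $\mathbb E[|\mathcal A_{P_1}|] = |\mathcal A| \cdot \binom{(1-\sigma)d}{(1/2-\sigma)d}/\binom{d}{d/2}$ and symmetrically for $\mathcal B_{P_2}$. Multiplying the number of rounds $R = \Theta(1/p)$ by the expected per-round cost and invoking~\eqref{eq:binomialestimate} together with concavity of $h$ yields the stated bound $\widetilde O((|\mathcal A|+|\mathcal B|)\,2^{d(1-h(1/4))})$, the worst case arising at the balanced $\sigma = 1/4$. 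The space bound $\widetilde O(|\mathcal A|+|\mathcal B|+2^d)$ is then immediate: at any moment I keep only the current partition, the two sorted filtered arrays, and (if helpful) a bit-vector over subsets of $[d]$ for bookkeeping.

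The main obstacle I foresee is pushing the analysis uniformly in $\sigma$. The balanced partition is tight at $\sigma = 1/4$, but for general $\sigma$ the survival probability and the filter sizes become asymmetric and a direct computation no longer lands on $2^{d(1-h(1/4))}$. I would address this by rebalancing the partition to $|P_1| = (\sigma+1/4)d$ and $|P_2| = (3/4-\sigma)d$, so that the expected placements of $A^*$ and $B^*$ are symmetric. A short entropy calculation then shows the product $R\cdot(\mathbb E|\mathcal A_{P_1}| + \mathbb E|\mathcal B_{P_2}|)$ is maximized at $\sigma = 1/4$ with value $\widetilde O(2^{d(1-h(1/4))})$; in the extreme regimes $\sigma$ near $0$ or $1/2$, one of $\mathcal A, \mathcal B$ is essentially a single element per weight class and the bound is trivially attained by direct sorted lookup. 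The remaining verification -- that the rebalanced probabilities can be computed exactly using binomial identities and bounded via~\eqref{eq:binomialestimate} -- is straightforward but notationally heavy, and is where I expect the bulk of the writing effort to go.
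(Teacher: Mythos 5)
Your approach is genuinely different from the paper's: the paper proves the lemma by reducing WOV to plain OV (bucket $\mathcal A$ and $\mathcal B$ by weight, run two pointers over weight pairs $(a,b)$ with $a+b=t$, and call OV on each pair of weight classes), and then invokes the Nederlof--W\k{e}grzycki OV algorithm (their Theorem~6.1), replacing their Lemma~B.1 with a variant (Lemma~\ref{lemma:b1_alternative}) that covers the full range $\sigma\in[0,1]$ at $\lambda=1/2$. You instead try to build the whole thing from scratch via random balanced partitions $[d]=P_1\sqcup P_2$, filtering $\mathcal A$ and $\mathcal B$ by containment and solving $2$-\SUM{} on the survivors.

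The gap is in the per-round cost accounting. You charge each round only the $2$-\SUM{} cost $\widetilde O(|\mathcal A_{P_1}|+|\mathcal B_{P_2}|)$, but \emph{constructing} $\mathcal A_{P_1}$ and $\mathcal B_{P_2}$ requires scanning all of $\mathcal A$ and $\mathcal B$ to test containment in $P_1$ (resp.\ $P_2$), so each round actually costs $\Omega(|\mathcal A|+|\mathcal B|)$. With $R=\Theta^*(1/p)$ rounds and $p=\binom{d/2}{\sigma d}/\binom{d}{d/2}$, at the worst case $\sigma=1/4$ you get $R=\Theta^*(2^{d/2})$, so the total time is $\Omega^*\bigl((|\mathcal A|+|\mathcal B|)\,2^{d/2}\bigr)$, which is exponentially larger than the claimed $\widetilde O\bigl((|\mathcal A|+|\mathcal B|)\,2^{d(1-h(1/4))}\bigr)$ since $1-h(1/4)\approx 0.189<1/2$. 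Your entropy calculation correctly shows that the \emph{sum of filtered-family sizes over all rounds} matches the target, but that is not the bottleneck; the repeated full scans are. There is no obvious data structure that, in $\widetilde O(|\mathcal A|+|\mathcal B|+2^d)$ space, supports ``enumerate all $A\in\mathcal A$ with $A\subseteq P_1$'' in output-sensitive time for an arbitrary query $P_1$, and absent such a structure the random-partition route does not reach the stated bound. The paper avoids this by delegating the disjointness work to the Nederlof--W\k{e}grzycki OV algorithm, which uses a representative-sets/batch-evaluation argument rather than resampling partitions.
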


To~prove the lemma, we utilize the algorithm from~\cite{DBLP:conf/stoc/NederlofW21}, but with a~different parameter range.
See proofs in section \ref{sec:wov_lemmas}.

\section{New Arthur--Merlin Algorithm}

\foursumtheorem*
\begin{proof}
	Given arrays $A_1, A_2, A_3, A_4$ each consisting~of $n$~$4\log n$-bit integers,
	our goal is~to~find four indices $i_1, i_2, i_3, i_4 \in [n]$ such that
	\(A_1[i_1]+A_2[i_2]+A_3[i_3]+A_4[i_4]=0 \, .\)
	Without loss of~generality, assume that $A_i$'s are sorted 
	and do~not contain duplicates: for every $i \in [4]$,
	\(A_i[1] < A_i[2] < \dotsb < A_i[n] \, .\)
	Assume that $i_1^*, i_2^*, i_3^*, i_4^*$
	is a~solution and let $s=A_1[i_1^*]+A_2[i_2^*]$.
	
	Let $\mathcal P$~be a~randomized algorithm
	that, given an~integer~$n$, returns a~uniform random prime from $[2..n]$. Since one can check whether a~given integer~$k$ is~prime or~not in~time~$O(\log^{O(1)}k)$, the expected running time of~$\mathcal P$ is $O(\log^{O(1)}n)$: one picks a~random $k \in [2..n]$ and checks whether 
	it~is prime; if it~is not, one repeats; the expected
	number of~probes is~$O(\log n)$.
	
	We~are ready to~describe the protocol.	
	\begin{quote}
	\begin{description}
		\item[Stage~0: Shared randomness.]
 		Using the shared randomness, Arthur and Merlin generate a~random prime $p=\mathcal{P}(n)$.
		\item[Stage~1: Preparing a~proof.]
		Merlin sends an~integer $0 \le r < p$ to~Arthur.
		\item[Stage~2: Verifying the proof.] 
		Arthur takes all elements of~$A_1$ and~$A_2$ modulo~$p$
		and sorts both arrays. Using the two pointers method (or~binary search), 
		he~enumerates all pairs $(i_1,i_2) \in [n]^2$ such that
		$(A_1[i_1]+A_2[i_2]) \equiv r \bmod p$. For each such pair $(i_1, i_2)$,
		he~adds $A_1[i_1]+A_2[i_2]$ to a~new array $A_{12}$.
		If~the size of~$A_{12}$ becomes larger than $O(n\log^3 n)$, Arthur rejects immediately.
		Then, he~does the same for $A_{3}$~and~$A_4$, but for the remainder $-r$ (rather than~$r$)
		and creates an~array~$A_{34}$.
		Finally, he~solves $2$-\SUM{} for $A_{12}$ and $A_{34}$ (the two arrays may have different lengths, but this can be~fixed easily by~padding one of them with dummy elements).
	\end{description}
	\end{quote}
	
	Now, we~analyze the protocol. It~is~clear that the proof size~is at~most~$\log_2 n$ and that the running time of~the verification stage is~$\widetilde{O}(n)$. Also, if~there 
	is~no~solution for the original instance, Arthur rejects with probability~$1$, as~he only accepts if~the solution is~found. 
	Below, we~show that if~Merlin sends $r=s \bmod p$ (recall that $s=A_1[i_1^*]+A_2[i_2^*]$),
	Arthur accepts with good enough probability.
	
	Consider the set
	\(S=\{(i_1, i_2) \in [n]^2 \colon A_1[i_1]+A_2[i_2] \equiv s \bmod p\}\)
	(hence, $|A_{12}| =|S|$). It~contains at~most~$n$ true positives, that is,
	pairs $(i_1,i_2)$ such that $A_1[i_1]+A_2[i_2]=s$: as~the arrays do~not
	contain duplicates, for every $i_1$ there is~at~most one matching~$i_2$.
	All the other pairs $(i_1,i_2)$ in~$S$ are false positives:
	\(A_1[i_1]+A_2[i_2]-s \neq 0\),  but \(A_1[i_1]+A_2[i_2]-s \equiv 0 \bmod p\).
	Recall that $|A_1[i_1]+A_2[i_2]-s|=O(n^4)$, hence the number of~prime divisors
	of~$A_1[i_1]+A_2[i_2]-s$ is~$O(\log n)$, Section \ref{section:primes}. The probability that a~random prime $2 \le p \le n$ 
	divides $(A_1[i_1]+A_2[i_2]-s)$ is $O(\frac{\log^{2}n}{n})$.
	Thus, the expected number of~false positives is~at~most
	\(n^2 \cdot O\left(\frac{\log^2n}{n}\right) = O(n\log^2 n)\).
	By~Markov's inequality, the probability that the number of~false positives
	is~larger than $O(n\log^{3}n)$ (and hence that $|S| = O(n\log^3 n)$) is at~most $1/\log n$.

	Thus, with probability at~least $1-2/\log n$, both
	$A_{12}$ and $A_{34}$ have length at~most $O(n\log^3n)$
	and by~solving $2$-\SUM{} for them, Arthur finds a~solution for the original yes-instance.
\end{proof}

\cktsattheorem*
\begin{proof}
	Assume that, for some $\varepsilon>0$, there exists
	an~algorithm~$\mathcal A$ that checks whether a~given Boolean circuit
	with $g$~binary gates and $n$~inputs is~satisfiable
	in~time $O(g2^{(1-\varepsilon)n})$. Using~$\mathcal A$,
	we~will solve $4$-\SUM{} in time $O(n^{2-\varepsilon})$.
	
	Consider the verification algorithm from \Cref{thm:foursum} as a~Turing machine~$M$:
	$M$~has three read-only input tapes $(I_R, I_P, I)$: the tape $I_R$ contains random bits, the tape $I_P$ contains a~proof, and the tape $I$ contains four input arrays.
	For any four arrays $(A_1,A_2,A_3,A_4)$ of size~$n$,
	if it~is a~yes-instance, there exists a~proof
	of~size at~most $\log_2n$ such that $M$~accepts
	with probability at~least $2/3$, whereas for a~no-instance, $M$~rejects every proof with probability~$1$. 
	The machine~$M$ compares $2\log n$-bit integers and moves two pointers 
	through two arrays.
	Hence, the running time of~$M$ is~$\widetilde{O}(n)$.
	
	Given~$(A_1,A_2,A_3,A_4)$, fix the contents of the tape~$I$ and fix the random bits of the tape $I_R$.
	This turns~$M$ into a~machine~$M'$ with $\log_2 n$ input 
	bits and running time~$\widetilde{O}(n)$.
	
	As~proved by~\cite{DBLP:journals/jacm/PippengerF79}, 
	a~Turing machine recognizing a~language~$L \subseteq \{0,1\}^*$ with running time~$t(n)$
	can be~converted to an~infinite series $\{C_n\}_{n=1}^{\infty}$ of~circuits of~size $O(t(n)\log t(n))$:
	for every~$n$, $C_n$ has $n$~inputs and $O(t(n)\log t(n))$ gates
	and computes a~function $f \colon \{0,1\}^n \to \{0,1\}$ such that
	$f^{-1}(1)=L \cap \{0,1\}^n$. The circuit~$C_n$
	can be~produced in~time $O(t(n)\log t(n))$.
	
	Applying this to~the machine~$M'$,
	we~get a~circuit~$C$ with $\widetilde{O}(n)$
	gates and $\log_2 n$ input bits.
	If $A_1, A_2, A_3, A_4$ is a~yes-instance of~$4$-\SUM{}, the circuit~$C$ is satisfiable with 
	probability at~least~$2/3$; otherwise it~is unsatisfiable. Using the algorithm~$\mathcal A$,
	we~can check the satisfiability of~$C$ in~time $\widetilde{O}(n2^{(1-\varepsilon)\log_2n})=\widetilde{O}(n^{2-\varepsilon})$.
\end{proof}

\subsetsumtheorem*
\begin{proof}
	We~use the standard reduction from \problem{Subset SUM}
	to~$4$-\SUM{}.
	Given a~sequence~$A$ of~$n$~integers,
	partition them into four parts of~size $n/4$
	and create sequences $A_1,A_2,A_3,A_4$
	of~size $2^{n/4}$ containing sums of~all subsets
	of~the corresponding parts. Use the protocol from 
	\Cref{thm:foursum} for the resulting instance of~$4$-\SUM{}.
\end{proof}

\twoksumcorollary*

\begin{proof}
	Reduce $2k$-\SUM{} to $4$-\SUM{}: given sequences 
 	$A_1, \dotsc, A_{2k}$ of~length~$n$, 
 	partition them into four parts
 	each containing $k/2$ sequences of~length~$n$.
 	Let $B_1, B_2, B_3, B_4$ be~sequences of~length $n^{k/2}$ containing
 	all sums of~$k/2$ elements of~the corresponding part
 	(where we~take exactly one element from each sequence).
 	Use the protocol from 
 	\Cref{thm:foursum} for the resulting instance of~$4$-\SUM{}.
\end{proof}

\cktsattwokcorollary*
 \begin{proof}

 	Assume that \problem{Circuit SAT} can be~solved in~time 
 	$O(g^c2^{(1-\varepsilon)n})$ for constants~$c, \varepsilon>0$.
	
 	Similarly to~the proof of~\Cref{thm:cktsat}, given an~instance of~$2k$-\SUM{}, create a~circuit~$C$
 	with $k\log_2 n$ inputs and of~size $n^{o(k)}$
 	that solves this instance with one-sided error probability at~most~$1/3$.
	
 	Using the above algorithm, one can check the satisfiability of~$C$ in~time
 	\[O(n^{c\cdot o(k)}\cdot 2^{(1-\varepsilon)k\log_2 n})=O(n^{o(k)}n^{(1-\varepsilon)k}) \, .\]
 	For large enough~$k$, this allows to~solve $2k$-\SUM{}
 	in~time $O(n^{(1-\varepsilon/2)k})$.
 \end{proof}

\section{Improved Space Upper Bound for Subset Sum}
\subsection{Algorithm}

Let $(I, t)$~be an~instance of~\problem{Subset Sum} (as~usual, $n=|I|$)
and assume that it is a~yes-instance: we~will present
a~randomized algorithm that is~correct on~no-instances
with probability~1, so,~as~usual, the main challenge is~to
obtain $\Omega^*(1)$ success probability for yes-instances.
Let $S \subseteq I$ be a~solution: $w(S)=t$. 
It~is unknown 
to~us, but we~may assume that we~know its size: there are 
just~$n$ possibilities for~$|S|$, so~with a~polynomial overhead 
we~can enumerate all of~them. Thus, assume that $|S|=\alpha n$ 
where $0 < \alpha \le 1$ is known to~us.

For a~parameter $\beta=\beta(\alpha)<0.15$ to be~specified later, 
select randomly pairwise disjoint sets $M_L,M,M_R \in \binom{I}{\beta n}$. In~time and space $O^*(2^{0.15n})$, find $\varepsilon_L, \varepsilon, \varepsilon_R$ such that 
$M_L$~is an $\varepsilon_L$-mixer,
$M$~is an $\varepsilon$-mixer,
and
$M_R$~is $\varepsilon_R$-mixer.
Without loss of~generality, we~assume that 
$\varepsilon \le \varepsilon_L, \varepsilon_R$.
Consider the probability that $S$
touches exactly half of~$M_L \sqcup M \sqcup M_R$: 

\begin{align*}
\Pr\left[|(M_L \cup M \cup M_R) \cap S| = 3\beta n/2\right] =&  
    \frac{\binom{\alpha n}{3\beta n/2} \binom{(1 - \alpha) n}{3 \beta n/2}}{\binom{n}{3 \beta n}} = \Theta^*\left(2^{-\lambda n}\right) \\
\frac{\binom{\alpha n}{3\beta n/2} \binom{(1 - \alpha) n}{3 \beta n/2}}{\binom{n}{3 \beta n}} = 2^{-\lambda n} \iff&  \frac 1 n \log \frac{\binom{n}{3 \beta n}}{\binom{\alpha n}{3\beta n/2} \binom{(1 - \alpha) n}{3 \beta n/2}} = \lambda \iff \notag \\
\iff& \lambda = h(3 \beta) - \alpha h\left(\frac {3 \beta} {2 \alpha}\right) - (1 - \alpha) h\left(\frac {3 \beta} {2(1 - \alpha)}\right), \tag{by~\eqref{eq:binomialestimate}}
\end{align*}

hence,
\begin{equation}
    \lambda = h(3\beta) - \alpha h\left(\frac{3\beta}{2 \alpha}\right) - (1 - \alpha)h\left(\frac{3\beta}{2(1 - \alpha)}\right) . \label{eq:lambda}
\end{equation}

By~repeating this process $2^{\lambda n}$ times, we~ensure that the event
$|(M_L \cup M \cup M_R) \cap S| = 3\beta n/2$ 
happens with probability $\Omega^*(1)$ (recall 
Section~\ref{sec:amplification}).
Further, we~may assume that 
\[|M_L \cap S| = |M \cap S| = |M_R \cap S| = \beta n/2, \] as by 
Lemma~\ref{lemma:same_fractions}, $|M_L \cap S| = |M \cap S| = |M_R \cap S| = \beta n / 2 + O(1)$ with probability $\Omega^*(1)$ conditioned on~the event that $S$~touches {$3\beta n / 2$ elements of} $M_L \sqcup M \sqcup M_R$.
An~exact equality can be~assumed since we~can search for an~exact value in the neighbourhood of~$\beta n / 2$ and the constant difference will not affect the memory and time complexity.
In~the following applications of 
Lemma~\ref{lemma:same_fractions}, we~omit $O(1)$ factors for similar considerations.

We~argue that either~$S \cap M$ or~$M \setminus S$ is an~$(\le \varepsilon)$-mixer. Indeed, $M$~is an~$\varepsilon$-mixer and hence $|w(2^M)|=2^{(1-\varepsilon)|M|}$. Now, $|S \cap M|=|M \setminus S|=|M|/2$ and
\(|w(2^{S \cap M})| \cdot |w(2^{M \setminus S})| \ge | w(2^M)| \, .\)
Hence,
\[\max\left\{ |w(2^{S \cap M})| , |w(2^{M \setminus S})|\right\} \ge 2^{(1-\varepsilon) |M|/2},\]
implying that either~$S \cap M$ or~$M \setminus S$ is indeed a~$(\le \varepsilon)$-mixer. 
In~the following,
we~assume that it~is $S \cap M$ that is~$(\le \varepsilon)$-mixer. To~assume this without loss of~generality, 
we~run the final algorithm twice~--- for $(I,t)$
and $(I,w(I)-t)$.

For any (finite and non-empty) set~$A$, there exists $1 \le k \le |A|$, such that $|w(\binom{A}{k})| \ge |w(2^A)|/|A|$
(by~the pigeonhole principle). {Since $|w(\binom{A}{k})| = |w(\binom{A}{|A| - k})|$, we can even assume that $1 \le k \le |A|/2$.} Consider the 
corresponding~$k$ for the set~$A=S \cap M$ and let $\mu=k/(2|S \cap M|)$ 
(hence, $0 \le \mu \le { 0.25})$.
Then,
\begin{equation}
    \label{eq:mu}
    \left|w\left(\binom{S \cap M}{2 \mu |S \cap M|}\right)\right| \ge \frac{\left|w\left(2^{S \cap M}\right)\right|}{|S \cap M|} \, .
\end{equation}

Partition $I \setminus (M_L \sqcup M \sqcup M_R)$ randomly into 
$L_1 \sqcup L_2 \sqcup L_3 \sqcup L_4 \sqcup R_1 \sqcup R_2 \sqcup R_3 \sqcup R_4$ where the exact size of~all eight parts
will be specified later. Let $L = L_1 \sqcup L_2 \sqcup L_3 \sqcup L_4$ and $R = R_1 \sqcup R_2 \sqcup R_3 \sqcup R_4$.
By~Lemma~\ref{lemma:same_fractions}, with probability $\Omega^*(1)$ the set~$S$ covers the same fraction of~each of $L_i$'s and $R_i$'s. We~denote this fraction by~$\gamma$:
\begin{equation}
    \gamma=\frac{|(L \sqcup R) \cap S|}{|L \sqcup R|}=\frac{\alpha-3\beta/2}{1-3\beta} \, . \label{eq:gamma_definition}
\end{equation}

Now, we~apply the representation technique. 
Note that there exist many ways to~partition~$S$
into $S_L \sqcup S_R$ such that
\begin{equation}\label{eq:partition}
    S_L \subseteq M_L \sqcup L \sqcup M 
    \text{ and }
    S_R \subseteq M \sqcup R \sqcup M_R.
\end{equation} 
Indeed, the elements from $S \cap M$ can be~distributed arbitrarily between $S_L$~and~$S_R$.
Below, we~show that for a~large random prime number~$p$
and a~random remainder~$r$ modulo~$p$, 
the probability that there exists
a~partition $S=S_L \sqcup S_R$ such that $w(S_L) \equiv_p r$
is~$\Omega^*(1)$ (informally, at~least one representative $S_L \sqcup S_R$ of~the original solution~$S$ survives, even if~we ``hit'' all such partitions with a~large prime).

\begin{lemma}
    \label{lemma:p1}
    With constant probability over choices 
    of a prime number~$p$ such that $p \in [X/2, X]$, where $X \leq |w(2^{S \cap M})|$, the number of~remainders~$a \in \mathbb{Z}_p$ such that there exists a~subset $M' \subseteq S \cap M$ with $|M'|=2\mu |S \cap M|$ and $w(M') \equiv_p a$ is~close to~$p$:
    \[
    \Pr\left[ \left| \left\{ a \in \mathbb{Z}_p \colon \text{there exists } M' \subseteq S \cap M, |M'| = 2\mu |S \cap M|, w(M') \equiv_p a \right\} \right| \ge \Omega\left(p/n^2 \right) \right] \ge 9/10 .
    \]
\end{lemma}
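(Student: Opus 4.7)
The plan is to lower bound the number of distinct residues modulo $p$ of the set $\mathcal{W} := w\!\left(\binom{S \cap M}{2\mu\,|S \cap M|}\right)$. By inequality~\eqref{eq:mu}, $|\mathcal{W}| \ge |w(2^{S \cap M})|/|S \cap M|$, and since $|S \cap M| = \beta n/2 = \Theta(n)$ and $X \le |w(2^{S \cap M})|$, this gives $|\mathcal{W}| = \Omega(X/n)$. The strategy is a standard second-moment argument: bound the expected number of colliding pairs using the prime-divisor estimate of Section~\ref{section:primes}, turn it into a high-probability bound via Markov, and convert it to a lower bound on the number of distinct residues via Cauchy--Schwarz.

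First I would fix any two distinct $w_1, w_2 \in \mathcal{W}$. Because every input integer has magnitude $2^{O(n)}$ and each weight is a sum of $O(n)$ such integers, $|w_1 - w_2|$ is a nonzero integer bounded by $2^{O(n)}$, hence has $O(n)$ prime divisors. Since $X \le |w(2^{S \cap M})| \le 2^{O(n)}$ we also have $\log X = O(n)$, so the estimate from Section~\ref{section:primes} gives
\[
\Pr_{p}\!\left[p \mid w_1 - w_2\right] \;=\; O\!\left(\frac{\log|w_1 - w_2|\cdot\log X}{X}\right) \;=\; O(n^2/X).
\]
By linearity of expectation, the number $C$ of ordered pairs $(w_1,w_2) \in \mathcal{W}^2$ with $w_1 \neq w_2$ and $w_1 \equiv w_2 \pmod p$ satisfies $\mathbb{E}[C] = O(|\mathcal{W}|^2 n^2/X)$, and Markov's inequality yields $C = O(|\mathcal{W}|^2 n^2/X)$ with probability at least $9/10$.

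To finish, let $N$ denote the number of distinct residues of $\mathcal{W}$ modulo $p$ and let $n_1, \dotsc, n_N$ be the corresponding bucket sizes, so $\sum_i n_i = |\mathcal{W}|$ and $\sum_i n_i(n_i-1) = C$. Cauchy--Schwarz gives $|\mathcal{W}|^2 = \left(\sum_i n_i\right)^2 \le N\sum_i n_i^2 = N(|\mathcal{W}| + C)$, hence $N \ge |\mathcal{W}|^2/(|\mathcal{W}| + C)$. Plugging in $|\mathcal{W}| = \Omega(X/n)$ and $C = O(|\mathcal{W}|^2 n^2/X)$ yields $N = \Omega(X/n^2) = \Omega(p/n^2)$, as $p \ge X/2$.

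I do not foresee any genuine obstacle: the argument reduces to a standard first-moment/Cauchy--Schwarz pair. The one point that must be tracked carefully is that the pigeonhole choice of $\mu$ in~\eqref{eq:mu} really yields $|\mathcal{W}| \gtrsim X/n$, which is the ingredient that makes $X/|\mathcal{W}| = O(n)$ and prevents the $|\mathcal{W}|$-term in the denominator from dominating. This is immediate from $X \le |w(2^{S \cap M})|$ and $|S \cap M| = \Theta(n)$, so the plan goes through directly.
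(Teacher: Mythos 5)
Your proof is correct and follows the same second-moment plus Cauchy--Schwarz strategy as the paper, but with a choice of underlying object that is actually cleaner. You run the collision count over the set of \emph{distinct weights} $\mathcal W = w\bigl(\binom{S\cap M}{2\mu|S\cap M|}\bigr)$, whereas the paper works directly with the family of \emph{subsets} $\mathcal M = \binom{S\cap M}{2\mu|S\cap M|}$ and sets $c_i = |\{M\in\mathcal M : w(M)\equiv_p i\}|$. With the paper's choice, the expected collision count $\mathbb E\bigl[\sum_i c_i^2\bigr] = \sum_{M_1,M_2\in\mathcal M}\Pr[p\mid w(M_2)-w(M_1)]$ contains a deterministic contribution from pairs with $w(M_1)=w(M_2)$, namely $\sum_w m_w^2$ with $m_w=|\{M\in\mathcal M: w(M)=w\}|$, and it is not clear that this is absorbed by the stated bound $\le |\mathcal M|+O(n^2|\mathcal M|^2/X)$; taming it would need an extra observation. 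Since the lemma's conclusion concerns only \emph{which} residues are hit (not with what multiplicity), passing to the deduplicated weight set $\mathcal W$, as you do, is the natural move: every collision pair genuinely has distinct weights, the prime-divisor estimate applies uniformly, and the pigeonhole bound $|\mathcal W| = \Omega(X/n)$ makes $|\mathcal W| + C = O(|\mathcal W|^2 n^2/X)$ so that Cauchy--Schwarz gives $N = \Omega(p/n^2)$ directly. So this is the same proof idea, executed on the right multiset.
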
 
 \begin{proof} \label{proof:p1}
     Let $Q = M \cap S$. Then $|Q| = \frac \beta 2 n$.
     Recall that $|w(\binom{Q}{2 \mu |Q|})| \ge |w(2^{Q})| / |Q|$ and 
     $p$ is a random prime in $[X/2, X]$.

     Let $\mathcal M = \binom{Q}{2 \mu |Q|}$, $c_i = | \{ M \colon M \in \mathcal M, w(M) \equiv_p i \} |$, and $d_i = [c_i > 0]$.
     We need to show that
 \(
     \Pr[ | \{ i \colon d_i = 1 \} | \ge \Omega(p/n^2) ] \ge 9/10.
 \)
     The expected number of collisions is
 \[
     \mathbb E\left [ \sum_i c_i^2 \right] = \sum_{M_1, M_2 \in \mathcal M} \Pr\left[ p \text{ divides } w(M_2) - w(M_1) \right] \le |\mathcal M| + O(n^2 |\mathcal M|^2 / X) .
 \]
     The last inequality follows from Section~\ref{section:primes}, all numbers are not greater than $2^{O(n)}$.
     Hence, 
     \(\sum_i c_i^2 = O(|\mathcal M| + n^2 |\mathcal M|^2 / X) = O( n^2|\mathcal M|^2 / p) 
     \) with constant probability. Then, the Cauchy--Schwartz inequality implies that
 \[
     |\mathcal M| = \sum_i c_i = \sum_i c_i d_i \le \sqrt{\sum_i d_i^2} \sqrt{\sum_i c_i^2} = \sqrt{\sum_i d_i} \sqrt{\sum_i c_i^2} \, .
 \]
     Hence,
 \[
     \sum_i d_i \ge \frac{(\sum_i c_i)^2}{\sum_i c_i^2} \ge \frac{|\mathcal M|^2}{O(n^2 |\mathcal M|^2 / p)} \ge \Omega(p/n^2) \, .
 \]
 \end{proof}

Take a~random prime $2^{\frac {1 - \varepsilon} 2 \beta n - 1} \le p \le 2^{\frac {1 - \varepsilon} 2 \beta n}$ and a~random remainder $r \in \mathbb Z_p$. 
Lemma~\ref{lemma:p1} ensures that with probability $\Omega^*(1)$, there exists $M' \subseteq S \cap M$ such that 
$|M'| = 2\mu|S \cap M|=\mu \beta n$ and $w(M') \equiv_p r$.

To~find the partition $S_L \sqcup S_R$, we~are going to~enumerate all sets $A \subseteq M_L \sqcup L \sqcup M$ with $w(A) \equiv_p r$ and $|A \cap (M \cap S)|=\mu\beta n$ 
as~well as~all sets $B \subseteq M \sqcup R \sqcup M_R$
with $w(B) \equiv_p t-r$ and 
$|B \cap (M \cap S)|=(1/2-\mu)\beta n$. Then, it~suffices to~solve WOV for the sets of~$A$'s and $B$'s.

To~actually reduce to~WOV, we~first construct set families $\mathcal M_L \subseteq 2^{M_L}$ and $\mathcal M_R \subseteq 2^{M_R}$ such that $|\mathcal M_L|=|w(2^{M_L})|$ and 
$|\mathcal M_R|=|w(2^{M_R})|$. That is, $\mathcal M_L$ contains
a~single subset of~$M_L$ for each possible weight. One can construct $\mathcal M_L$ (as~well as~$\mathcal M_R$) simply by~going through all subsets of~$M_L$ and checking, for each subset, whether its sum is~already in~$\mathcal M_L$.

Then, construct the following families (see Figure~\ref{figure:partition}).
    \begin{align}
	\mathcal Q_1 &= \{M' \cup L' \colon M' \in \mathcal M_L, L' \subseteq L_1, |L'| = \gamma |L_1| \},\label{eq:q_1_definition}\\
	\mathcal Q_2 &= \{L' \colon L' \subseteq L_2, |L'| = \gamma |L_2| \}, \label{eq:q_2_definition}\\
	\mathcal Q_3 &= \{L' \colon L' \subseteq L_3, |L'| = \gamma |L_3| \}, \label{eq:q_3_definition}\\
	\mathcal Q_4 &= \{L' \cup M' \colon L' \subseteq L_4, M' \subseteq M, |L'| = \gamma |L_4|, |M'| = \mu \beta n \}, \label{eq:q_4_definition}\\
        \nonumber\\
	\mathcal Q_1' &= \{M' \cup R' \colon M' \in \mathcal M_R, R' \subseteq R_1, |R'| = \gamma |R_1| \}, \label{eq:q_1'_definition}\\
	\mathcal Q_2' &= \{R' \colon R' \subseteq R_2, |R'| = \gamma |R_2| \}, \label{eq:q_2'_definition}\\
	\mathcal Q_3' &= \{R' \colon R' \subseteq R_3, |R'| = \gamma |R_3| \}, \label{eq:q_3'_definition}\\
	\mathcal Q_4' &= \{R' \cup M' \colon R' \subseteq R_4, M' \subseteq M, |R'| = \gamma |R_4|, |M'| = (1/2 - \mu) \beta n \}. \label{eq:q_4'_definition}
    \end{align}

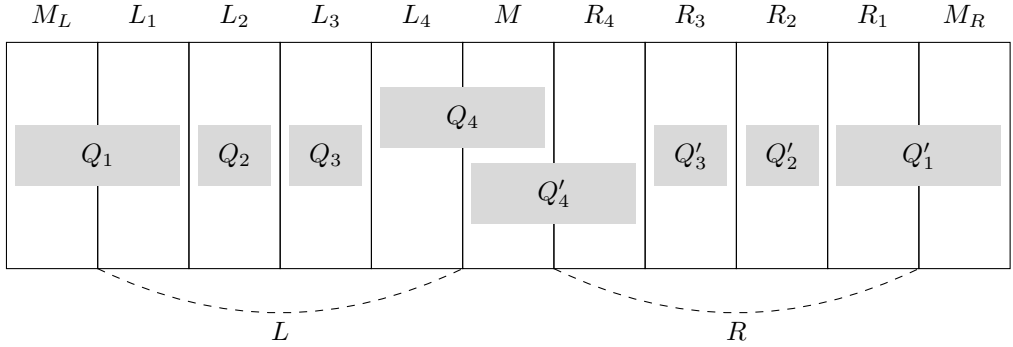
\begin{figure}[!ht]
\begin{center}
\begin{tikzpicture}[xscale=1.2]
	\foreach \i in {0,...,10}
		\draw (\i, 0) rectangle (\i + 1, 3);
	\foreach \l [count=\i] in {M_L, L_1, L_2, L_3, L_4, M, R_4, R_3, R_2, R_1, M_R}
		\node[above] at (\i - .5, 3) {\strut $\l$};
	\path (1, 0) edge[bend right=30, dashed] node[below] {$L$} (5, 0);
	\path (6, 0) edge[bend right=30, dashed] node[below] {$R$} (10, 0);
	\foreach \x/\y/\w/\l in {1/1.5/2/Q_1, 2.5/1.5/1/Q_2, 3.5/1.5/1/Q_3, 5/2/2/Q_4, 6/1/2/Q_4',
	7.5/1.5/1/Q_3', 8.5/1.5/1/Q_2', 10/1.5/2/Q_1'} {
		\filldraw[draw=none, fill=gray!30] (\x - \w/2 + .1, \y-.4) rectangle (\x + \w/2 - .1, \y + .4);
		\node at (\x, \y) {$\l$};
	}
\end{tikzpicture}
\end{center}
\caption{Partition of~the instance~$I$ (white) and its~solution~$S$ (gray)
into parts.}
\label{figure:partition}
\end{figure}

    Let $\mathcal Q_{max}$ stand for $\mathcal{Q}_i$ or $\mathcal{Q}_i'$ with the maximum size.   
    For $a, p \in \mathbb{Z}_{>0}$, let
    \[\mathcal{X}_{a} = \left\{ (Q_1, Q_2, Q_3, Q_4) \colon Q_i \in \mathcal{Q}_i, \sum w(Q_i) = a\right\} \text{ and } \mathcal{X}_{a, p} = \bigcup\limits_{a' \equiv_{p} a} \mathcal{X}_{a'} .\]
    Then, it~remains to~solve WOV for $\mathcal{X}_{r, p}$ and $\mathcal{X'}_{(t - r), p}$.
    To~do this, we~use the idea used in the Arthur--Merlin protocol: we reduce to 4-\SUM{} and use a~cutoff for the number of found candidates.
    The following lemmas estimate the size of~these set families.

\begin{lemma}
\label{lemma:divide_by_p1}
        With probability $\Omega^*(1)$ over the choice of~$r$, the size of $\mathcal X_{r, p}$ is at most $O^*(|\mathcal{Q}_{max}|^4 / p)$ and there exists $M_1 \subseteq M \cap S$ such that $w((S \cap (L \cup M_L)) \cup M_1) \equiv_p r$. 
\end{lemma}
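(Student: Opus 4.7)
The plan is to prove the two conclusions of the lemma simultaneously as events over the uniform random choice of $r \in \mathbb{Z}_p$, after first conditioning on the outer randomness so that the ``good'' event of Lemma~\ref{lemma:p1} holds for the chosen prime~$p$. The argument decomposes into a simple averaging bound on $|\mathcal{X}_{r,p}|$, an application of Lemma~\ref{lemma:p1} to produce~$M_1$, and a quantitative union bound to combine the two.

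For the size bound I would use an averaging argument. Partitioning the product $\mathcal{Q}_1 \times \mathcal{Q}_2 \times \mathcal{Q}_3 \times \mathcal{Q}_4$ by the sum of the four weights modulo $p$ gives
\[
\sum_{r \in \mathbb{Z}_p} |\mathcal{X}_{r,p}| \;=\; |\mathcal{Q}_1|\cdot|\mathcal{Q}_2|\cdot|\mathcal{Q}_3|\cdot|\mathcal{Q}_4| \;\le\; |\mathcal{Q}_{max}|^{4},
\]
so the expected size of $|\mathcal{X}_{r,p}|$ under a uniform $r \in \mathbb{Z}_p$ is at most $|\mathcal{Q}_{max}|^{4}/p$. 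By Markov's inequality with threshold $Kn^{2}\cdot|\mathcal{Q}_{max}|^{4}/p$, I get $|\mathcal{X}_{r,p}| \le O^{*}(|\mathcal{Q}_{max}|^{4}/p)$ with probability at least $1-1/(Kn^{2})$ over $r$, for any constant $K$ I choose.

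For the existence of $M_{1}$, I would invoke Lemma~\ref{lemma:p1}. The prime $p \in [2^{(1-\varepsilon)\beta n/2-1},\,2^{(1-\varepsilon)\beta n/2}]$ satisfies $p \le |w(2^{S \cap M})|$ because $S \cap M$ has been assumed to be a $(\le\varepsilon)$-mixer of size $\beta n/2$. Lemma~\ref{lemma:p1} then guarantees, with probability $\ge 9/10$ over the choice of $p$, that the set of attainable remainders $R^{*} = \{w(M') \bmod p : M' \subseteq S \cap M,\, |M'| = \mu\beta n\}$ has size $\Omega(p/n^{2})$. Writing $r^{*} = w(S \cap (L \cup M_{L})) \bmod p$, the shift $r - r^{*}$ is uniform in $\mathbb{Z}_{p}$, so $\Pr[r - r^{*} \in R^{*}] = \Omega(1/n^{2})$; on this event there exists $M_{1} \subseteq S \cap M$ with $w(M_{1}) \equiv r - r^{*} \pmod{p}$, i.e.\ $w((S \cap (L \cup M_{L})) \cup M_{1}) \equiv r \pmod{p}$.

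Combining the two events over the same random $r$, and choosing the Markov constant $K$ large enough that $1/(Kn^{2})$ is at most half the $\Omega(1/n^{2})$ success bound, a union bound yields that both conclusions hold with probability $\Omega(1/n^{2}) = \Omega^{*}(1)$. The main obstacle is precisely this combination: the two events are not independent, since they are both functions of the single random $r$, and the naive constant Markov failure probability would swamp the $\Omega(1/n^{2})$ target for $M_{1}$. The key quantitative trick is therefore to take the Markov threshold polynomial in $n$ rather than constant, trading a $\poly(n)$ factor in the size bound (absorbed into $O^{*}$) for a tight union bound.
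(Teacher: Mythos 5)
Your proof is correct and follows essentially the same route as the paper's: bound $\mathbb{E}_{r}|\mathcal{X}_{r,p}|$ by $|\mathcal{Q}_{max}|^{4}/p$ via averaging, apply Markov with a $\poly(n)$ threshold, and union-bound against the event that a good $M_1$ exists, which Lemma~\ref{lemma:p1} guarantees with probability $\Omega(1/n^2)$ over $r$. The only difference is that you spell out explicitly why $\Pr[\exists\, M_1] = \Omega^*(1)$ (the shift $r - r^*$ being uniform and landing in the $\Omega(p/n^2)$-sized set of attainable remainders), whereas the paper's proof asserts this as given; your version is the more complete write-up of the same argument.
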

 \begin{proof}
 		Let $A$ be an event of existing $M_1 \subseteq M \cap S$ such that $w((S \cap (L \cup M_L)) \cup M_1) \equiv_p r$. 
         Note that 
         \[ \mathbb{E}[\mathcal X_{r, p}] \leq |\mathcal{Q}_{max}|^4 / p \text{ and } \Pr[\mathcal X_{r, p} > x] \leq \frac{|\mathcal{Q}_{max}|^4}{p x} \, .\]
         As~$\Pr[A] = \Omega^*(1)$, we can take 
         $x = O^*(\frac{|\mathcal{Q}_{max}|^4}{p})$ in a~way that will make
         $\Pr[\mathcal X_{r, p} > x] \leq \frac{\Pr[A]}{2}$. 
         Then \[\Pr[\mathcal X_{r, p} \leq x \wedge A] \geq \Pr[A] - \Pr[\mathcal X_{r, p} > x] \geq \frac{\Pr[A]}{2} = \Omega^*(1) \,.\]
 \end{proof}

    To~further reduce the size of~these families, 
    we choose another random prime number 
        \[q = \Theta^*\left(\frac{2^{n/2}}{2^{\lambda n}|\mathcal{Q}_{max}|^2}\right),\]
    and iterate over all possible remainders $s \in \mathbb Z_{q}$.
    We aim to find two families of sets as the above ones, but with weights congruent to $w(S_L)$ and $(t - w(S_L))$, respectively, modulo~$p$ and~$q$.
    While iterating on~$s$, at some point we~try $s \equiv_{q} w(S_L)$, so we can assume that we guessed~$s$ correctly and construct the families for the fixed~$r$ and~$s$. 
    We now focus on the first family only, since we can then find the second one independently using the same algorithm. 

    For $a \in \mathbb{Z}_{>0}$, let \(\mathcal{L}_a = \{(a, Q_4 \cap M) \colon (Q_1, Q_2, Q_3, Q_4) \in \mathcal{X}_a\}\). For $a_1, a_2, p, q \in \mathbb{Z}_{>0}$, let \[\mathcal{X}_{a_1, p, a_2, q} = \bigcup\limits_{\substack{a \equiv_{p} a_1 \\ a \equiv_{q} a_2}} \mathcal{X}_a \text{ and }\mathcal{L}_{a_1, p, a_2, q} = \bigcup\limits_{\substack{a \equiv_{p} a_1 \\ a \equiv_{q} a_2}} \mathcal{L}_a .\] 

    Our goal is to find a family $\mathcal{L}$ containing $\mathcal{L}_{w(S_L)}$ in order to guarantee that we later find the solution using the WOV algorithm. But we need $\mathcal{L}$ to be small enough. Let $\ell = \frac{|\mathcal{Q}_{max}|^4}{p q} + \binom{|M|}{\max\{\mu, 1/2-\mu\} |M|}$ represent {an upper bound on the sum of the} average size of {$ \mathcal X_{a_1, p, a_2, q}$ and the size of $\mathcal L_{w(S_L)}$}. We want $\mathcal{L}$ to be of size $\widetilde{O}(\ell)$. 
    The following lemma shows that 
    $\mathcal{X}_{w(S_L), p, w(S_L), q}$
    contains not too many redundant sets.
    
    \begin{lemma}
        \label{lemma:q}
        With probability $\Omega^*(1)$ over the choice of~$q$, 
        \(|\mathcal{X}_{r,p, w(S_L),q} \setminus \mathcal{X}_{w(S_L)}| \le \ell .\)     
    \end{lemma}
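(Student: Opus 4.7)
The plan is to observe that any tuple $(Q_1, Q_2, Q_3, Q_4) \in \mathcal{X}_{r,p,w(S_L),q} \setminus \mathcal{X}_{w(S_L)}$ has total weight~$a$ with $a \equiv_p r$ and $a \equiv_q w(S_L)$, yet $a \neq w(S_L)$. Since we~are conditioning on~the event from \Cref{lemma:p1}/\Cref{lemma:divide_by_p1} that~$r$ was chosen so that $w(S_L) \equiv_p r$, we~also have $a \equiv_p w(S_L)$. Hence $a - w(S_L)$ is~a~nonzero integer of~absolute value $2^{O(n)}$ that is~divisible by~both $p$~and~$q$. It~therefore suffices to~bound the number of tuples in $\mathcal{X}_{r,p} \setminus \mathcal{X}_{w(S_L)}$ whose total weight~$a$ satisfies $q \mid (a - w(S_L))$.

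First, I~would apply the estimate from \Cref{section:primes}: for any fixed nonzero integer~$k$ with $|k| \le 2^{O(n)}$, the probability that a~random prime drawn from $[q/2,q]$ divides~$k$ is~$\widetilde{O}(1/q)$, since~$k$ has at~most $O(n)$ prime divisors. Summing over the (at~most $|\mathcal{X}_{r,p}|$) choices of spurious tuples and using linearity of~expectation,
\[
\mathbb{E}_q\bigl[\,|\mathcal{X}_{r,p,w(S_L),q} \setminus \mathcal{X}_{w(S_L)}|\,\bigr]
\;\le\; |\mathcal{X}_{r,p}| \cdot \widetilde{O}(1/q).
\]
Then I~would plug in the bound $|\mathcal{X}_{r,p}| = O^*(|\mathcal{Q}_{max}|^4/p)$ already established in \Cref{lemma:divide_by_p1}, which holds on~the good event for~$p$ that the outer argument conditions on. This gives an~expectation of~$O^*(|\mathcal{Q}_{max}|^4/(pq))$, and Markov's inequality with $\poly(n)$ slack converts this into the pointwise bound $|\mathcal{X}_{r,p,w(S_L),q} \setminus \mathcal{X}_{w(S_L)}| = O^*(|\mathcal{Q}_{max}|^4/(pq)) \le O^*(\ell)$ with probability $\Omega^*(1)$ over the (independent) choice of~$q$.

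The main obstacle is~really just the bookkeeping of~polynomial and polylogarithmic factors: the prime-divisor bound of~\Cref{section:primes} and the Markov step each lose a~$\poly(n)$ factor, so~to~land on~the clean statement ``$\le \ell$'' one either reads the inequality up~to the $O^*$ convention used throughout the paper or picks the sampling range for~$q$ slightly larger (by~a~$\poly(n)$ factor) than the stated $\Theta^*\!\bigl(2^{n/2}/(2^{\lambda n}|\mathcal{Q}_{max}|^2)\bigr)$ so~the overhead is~absorbed. A~minor subtlety is~that the count $|\mathcal{X}_{r,p}|$ itself depends on~the random prime~$p$ and on~the random partition of~$I$; since $p$~and~$q$ are sampled independently, one can condition on~the $\Omega^*(1)$-probability good events of~the earlier lemmas and combine them with the $q$-probability here by~a~straightforward union bound, preserving overall success probability $\Omega^*(1)$.
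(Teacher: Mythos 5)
Your proposal is correct and follows essentially the same route as the paper: bound the expected number of spurious tuples via the prime-divisor estimate for~$q$, plug in the bound on~$|\mathcal{X}_{r,p}|$ from \Cref{lemma:divide_by_p1}, and conclude by~Markov. The only differences are cosmetic: you make the conditioning on~the good events for~$p$ and~$r$ explicit and flag the $\poly(n)$ bookkeeping in~the Markov step, both of~which the paper leaves implicit.
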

     \begin{proof}
         We estimate the expected number of elements in $\mathcal \mathcal{X}_{r,p, w(S_L) ,q} \setminus \mathcal{X}_{w(S_L)}$. 
         To do that, we calculate the sum of $\Pr_{q}[\sum w(Q_i) \equiv_{q} w(S_L)]$ over all tuples $(Q_1, Q_2, Q_3, Q_4) \in \mathcal{X}_{r,p} \setminus \mathcal{X}_{w(S_L)}$ such that $\sum w(Q_i) \neq w(S_L)$. 

         Since
         $\Pr_{q}[\sum w(Q_i) 
         \equiv_{q} w(S_L)] = O^*(\frac{1}{q})$ (recall Section~\ref{section:primes}), \[\mathbb E |\mathcal{X}_{r, p, w(S_L), q} \setminus \mathcal{X}_{w(S_L)}| = |\mathcal{X}_{r,p} \setminus \mathcal{X}_{w(S_L)}| \cdot O^*(1).\] By Lemma~\ref{lemma:divide_by_p1}, it is equal to $\widetilde{O}({|\mathcal{Q}_{max}|^4}/ (pq)) = \widetilde{O}(\ell)$.         
     \end{proof}

    Let us find the first $\le 2\ell + 1$ elements of $\mathcal{X}_{r, s}$ using the algorithm from Lemma \ref{lemma:shamir_for_rings} which runs in time $\widetilde{O}(|\mathcal{Q}_{max}|^2 + \ell)$ and uses $\widetilde{O}(|\mathcal{Q}_{max}| + \ell)$ space.

    \begin{lemma}
    \label{lemma:shamir_for_rings}
        There is an algorithm that given $\mathcal{Q}_1, \mathcal{Q}_2, \mathcal{Q}_3, \mathcal{Q}_4$, $(r, p, s, q)$ and integer $m$, finds $\min(|\mathcal{X}_{r, p, s, q}|, m)$ elements of $\mathcal{X}_{r, p, s, q}$ using $\widetilde{O}(|\mathcal{Q}_{max}|^2 + m)$ time and $\widetilde{O}(|\mathcal{Q}_{max}| + m)$ space.
    \end{lemma}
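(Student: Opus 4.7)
The plan is to apply the Chinese Remainder Theorem (\Cref{theorem:chinese_remainder_theorem}) to compress the two congruences into a single target: compute the unique $c \in \mathbb{Z}_{pq}$ with $c \equiv r \pmod{p}$ and $c \equiv s \pmod{q}$, so that $\mathcal{X}_{r,p,s,q}$ becomes exactly the set of tuples $(Q_1,Q_2,Q_3,Q_4) \in \mathcal{Q}_1\times\mathcal{Q}_2\times\mathcal{Q}_3\times\mathcal{Q}_4$ with $w(Q_1)+w(Q_2)+w(Q_3)+w(Q_4) \equiv c \pmod{pq}$. This reduces the task to a $4$-\SUM{}-style problem modulo a single integer $pq$, which I would solve by a Schroeppel--Shamir-style enumeration adapted to the modular setting.

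The core subroutine is an iterator that emits all pairs $(Q_1,Q_2) \in \mathcal{Q}_1 \times \mathcal{Q}_2$ in nondecreasing order of $v(Q_1,Q_2) := (w(Q_1)+w(Q_2)) \bmod pq$ using only $\widetilde{O}(|\mathcal{Q}_{max}|)$ space. First, sort $\mathcal{Q}_1$ by weight modulo $pq$. For each fixed $Q_2 \in \mathcal{Q}_2$, the sequence of values $(w(Q_1)+w(Q_2)) \bmod pq$ as $Q_1$ ranges through this sorted order is a cyclic rotation of a sorted sequence in $[0,pq)$, and its starting point (the smallest residue past the wrap-around) can be located by one binary search. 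I would then maintain a min-heap of size $|\mathcal{Q}_2|$, one entry per $Q_2$ pointing to its current unemitted minimum partner; each extract-min advances the corresponding pointer by one position in the cyclic order and reinserts. The analogous construction applied to $\mathcal{Q}_3 \times \mathcal{Q}_4$, but keyed on $(c - w(Q_3) - w(Q_4)) \bmod pq$, produces a second sorted stream whose keys are exactly the desired matches of the first.

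With both sorted streams available, the algorithm performs a two-pointer merge, collecting tuples whose left and right keys coincide (equivalently, whose total weight is $\equiv c \pmod{pq}$), and halts as soon as $m$ tuples have been produced or both streams are exhausted. Initialization and sorting cost $\widetilde{O}(|\mathcal{Q}_{max}|)$; each of the at most $O(|\mathcal{Q}_{max}|^2)$ heap operations is $O(\log|\mathcal{Q}_{max}|)$; the output phase contributes $O(m)$ additional work, giving total time $\widetilde{O}(|\mathcal{Q}_{max}|^2 + m)$ and space $\widetilde{O}(|\mathcal{Q}_{max}| + m)$.

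The step I expect to require the most care is the handling of \emph{equal-key runs} during the merge: when many left pairs share a residue $v$ and many right pairs share the matching residue $c - v \bmod pq$, the set of matching tuples is a full Cartesian product of the two runs and can be far larger than $m$, so the merge must interleave output across runs and terminate cleanly as soon as the $m$-th match is produced, rather than materialising either run in full. A related minor point is verifying that the ``cyclically sorted'' output of each priority queue still supports a plain two-pointer walk; this is handled by choosing a common canonical offset in $\mathbb{Z}_{pq}$ at which both streams begin, so the modular merge reduces to an ordinary linear scan.
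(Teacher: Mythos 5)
Your proof uses the same first step as the paper — compute the combined residue $v \in \mathbb{Z}_{pq}$ via the Chinese Remainder Theorem — but then diverges in how the resulting modular $4$-\SUM{} instance is actually enumerated. The paper's route is to replace each weight $w(Q_i)$ by $w(Q_i) \bmod pq$, so every tuple's total lies in $[0,\,4pq-4]$; a tuple then belongs to $\mathcal{X}_{r,p,s,q}$ iff its total equals one of the four explicit integers $v,\, v+pq,\, v+2pq,\, v+3pq$, and the ordinary (non-modular) Schroeppel--Shamir enumerator is invoked four times with these exact targets, truncated once $m$ tuples have been emitted. You instead keep the arithmetic modular throughout and adapt the priority-queue iterator itself: for each fixed $Q_2$, the sorted $\mathcal{Q}_1$ gives a cyclically shifted sorted sequence of $(w(Q_1)+w(Q_2)) \bmod pq$, you binary-search the wrap-around point, feed each pointer into a heap, and merge the resulting stream of $(Q_1,Q_2)$ pairs against the analogous stream for $(Q_3,Q_4)$ keyed by $(c-w(Q_3)-w(Q_4)) \bmod pq$, collecting equal keys. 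Both routes give the same $\widetilde{O}(|\mathcal{Q}_{max}|^2 + m)$ time and $\widetilde{O}(|\mathcal{Q}_{max}| + m)$ space; the paper's reduction to four exact targets is a bit cleaner because it sidesteps all cyclic/wrap-around bookkeeping and reuses the data structure unchanged, while yours does the job in a single modular pass. The equal-key-run concern you flag is real and is equally present (implicitly) in the paper's version; the $\widetilde{O}(|\mathcal{Q}_{max}|+m)$ space budget suffices to buffer at most $m$ pairs from one side of a run, interleave the Cartesian product, and halt at the $m$-th output, so it is a detail to fill in rather than a gap.
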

     \begin{proof}
         Let $v \in [0, pq - 1]$ be an integer such that $v \equiv_{p} r$ and $v \equiv_{q} s$. We obtain such $v$ in $\Omega^*(1)$ time by Theorem \ref{theorem:chinese_remainder_theorem}. 
         Our goal is to output every tuple $(Q_1, Q_2, Q_3, Q_4)$ such that $Q_i \in \mathcal{Q}_i$ and $\sum w(Q_i) \equiv_{pq} v$, one by one and halt the algorithm when the size of the output becomes $m$. 

         We can assume that all weights of sets from $\mathcal{Q}_i$ are in $[0, pq - 1]$.
         Then, we want to output the tuples with total weights equal to $v$, $v + pq$, $v + 2pq$ or $v + 3pq$. Now we run Schroeppel and Shamir's~\cite{DBLP:journals/siamcomp/SchroeppelS81} algorithm four times keeping the total size of the output at most $m$.
     \end{proof}

    Assume that $s \equiv_{q} w(S_L)$.
    If the algorithm yields at most $2\ell$ elements then we know the whole $\mathcal{X}_{r, p, w(S_L), q}$, so we can now just construct {$\mathcal{L}$ from $\mathcal{X}_{r, p, s, q}$ in $\widetilde{O}(\ell)$ time and space.}
    Otherwise, by Lemma \ref{lemma:q}, we have that $|\mathcal{X}_{w(S_L)}| > \ell$ with probability $\Omega^*(1)$. 
    Since strictly more than half of the outputted tuples have weight exactly $w(S_L)$, we can determine $w(S_L)$ in $\widetilde{O}(\ell)$ time. When we know $w(S_L)$, we use Lemma \ref{lemma:faster_shamir} to obtain $\mathcal{L}_{w(S_L)}$ using $\widetilde{O}(|\mathcal{Q}_{max}|^2)$ time and $\widetilde{O}(|\mathcal{Q}_{max}|)$ space.

    \begin{lemma}
    \label{lemma:faster_shamir}
        There is an algorithm that given $\mathcal{Q}_1, \mathcal{Q}_2, \mathcal{Q}_3, \mathcal{Q}_4$ and $a$, finds $\mathcal{L}_a$ using $\widetilde{O}(|\mathcal{Q}_{max}|^2)$ time and $\widetilde{O}(|\mathcal{Q}_{max}|)$ space.
    \end{lemma}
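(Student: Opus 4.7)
The plan is to run the classical Schroeppel--Shamir 4-SUM algorithm on $(\mathcal{Q}_1, \mathcal{Q}_2, \mathcal{Q}_3, \mathcal{Q}_4)$ with target~$a$, and to deduplicate outputs by the projection $Q_4 \cap M$. Let $N = |\mathcal{Q}_{max}|$.

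First, I would sort each $\mathcal{Q}_i$ by weight in $\widetilde{O}(N)$ time and $O(N)$ space. Then I would maintain two priority queues of size~$O(N)$: a min-heap to enumerate pairs $(Q_1, Q_2) \in \mathcal{Q}_1 \times \mathcal{Q}_2$ in non-decreasing order of $w(Q_1) + w(Q_2)$, and a max-heap to enumerate pairs $(Q_3, Q_4) \in \mathcal{Q}_3 \times \mathcal{Q}_4$ in non-increasing order of $w(Q_3) + w(Q_4)$. The standard Schroeppel--Shamir trick keeps each heap of size $O(N)$: initially push $(Q_1, Q_2^{(1)})$ for every $Q_1 \in \mathcal{Q}_1$, where $Q_2^{(1)}$ is the lightest element of $\mathcal{Q}_2$; when $(Q_1, Q_2^{(i)})$ is extracted, push $(Q_1, Q_2^{(i+1)})$ if it exists, and symmetrically for the max-heap.

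A two-pointer walk on these two sorted streams then finds every 4-tuple with combined weight exactly $a$: advance the min-heap head if the current combined sum is below~$a$, advance the max-heap head if it is above, and process matches when equal (handling ties by enumerating every pair sharing the current sum level). For each match yielding a tuple $(Q_1, Q_2, Q_3, Q_4) \in \mathcal{X}_a$, compute $Q_4 \cap M$ and insert $(a, Q_4 \cap M)$ into a hash table representing $\mathcal{L}_a$, skipping duplicates. The space cost is $\widetilde{O}(N)$ for the two heaps plus $O(|\mathcal{L}_a|)$ for the hash table; the time is $\widetilde{O}(N^2)$ for exhausting both streams, plus $O(|\mathcal{X}_a|)$ overhead to process hits.

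The main concern is controlling $|\mathcal{X}_a|$, since in principle many 4-tuples could share the sum~$a$, blowing up both time and output size. In the way this lemma is invoked inside the proof of Theorem~\ref{theorem:main_theorem}, the target $a = w(S_L)$ is a single fixed value arising from the randomized construction, and an averaging argument of the same flavor as Lemmas~\ref{lemma:divide_by_p1} and~\ref{lemma:q} (now over the random choice of the mixer~$M$ and of the partition of~$I$) should guarantee, with $\Omega^*(1)$ probability, both $|\mathcal{X}_a| = \widetilde{O}(N^2)$ and $|\mathcal{L}_a| = \widetilde{O}(N)$, so that the claimed $\widetilde{O}(N^2)$ time and $\widetilde{O}(N)$ space budget is respected.
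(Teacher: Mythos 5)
Your setup — Schroeppel--Shamir style priority-queue streams that enumerate $w(Q_1)+w(Q_2)$ in increasing order and $w(Q_3)+w(Q_4)$ in decreasing order, then a two-pointer walk — is exactly the paper's machinery. The gap is in the tie-handling step. You propose, ``handling ties by enumerating every pair sharing the current sum level,'' which means you traverse all of $\mathcal X_a$. That can be $\omega(|\mathcal Q_{\max}|^2)$: $|\mathcal X_a|$ can be as large as $\prod_i|\mathcal Q_i|$, and your fallback probabilistic bound $|\mathcal X_a|=\widetilde O(N^2)$ is neither proven nor plausible here. In fact the surrounding proof invokes this lemma precisely in the branch where $|\mathcal X_{w(S_L)}|>\ell$, and the only a priori bound available is $|\mathcal X_{w(S_L)}|\le|\mathcal X_{r,p}|=O^*(|\mathcal Q_{\max}|^4/p)$ with $p\ll|\mathcal Q_{\max}|^2$, so $|\mathcal X_{w(S_L)}|$ genuinely can exceed $N^2$. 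Moreover, the lemma as stated is an unconditional, deterministic time/space claim about an algorithm, so shifting the burden to a probabilistic side condition does not prove it.

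The paper's proof avoids this entirely by never enumerating ties. Its algorithm pops each $(d_2,A)$ once from the second stream, advances the first stream's pointer monotonically (never going back) until $d_1+d_2\ge a$, and if $d_1+d_2=a$ emits $(a,A)$ once, deduplicating by $A=Q_4\cap M$. Each stream element is touched at most once, so the running time is bounded by the stream lengths $\widetilde O(|\mathcal Q_1||\mathcal Q_2|+|\mathcal Q_3||\mathcal Q_4|)=\widetilde O(|\mathcal Q_{\max}|^2)$, and the output has at most $|\mathcal Q_4|$ elements — all independent of $|\mathcal X_a|$. To fix your proof, replace ``enumerate every pair sharing the current sum level'' with ``for each item of the $(3,4)$-stream, find one witness in the $(1,2)$-stream and emit $(a,Q_4\cap M)$ once (deduplicated),'' and drop the appeal to bounds on $|\mathcal X_a|$.
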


    \begin{proof}
        Schroeppel and Shamir~\cite{DBLP:journals/siamcomp/SchroeppelS81} 
        use priority queues to~implement a~data structure~$D_1$ for the following task:
        output all elements from $\{w(Q_1) + w(Q_2) \colon Q_1 \in \mathcal{Q}_1, Q_2 \in \mathcal{Q}_2\}$ in non-decreasing order using $\widetilde{O}(|\mathcal{Q}_1| \cdot |\mathcal{Q}_2|)$ time and $\widetilde{O}(|\mathcal{Q}_1| + |\mathcal{Q}_2|)$ space. Similarly, a data structure $D_2$ outputs all elements from $\{(w(Q_3) + w(Q_4), Q_4 \cap M) \colon  Q_3 \in \mathcal{Q}_3, Q_4 \in \mathcal{Q}_4\}$ in non-increasing order (we compare elements by their first coordinate) using $\widetilde{O}(|\mathcal{Q}_3| \cdot |\mathcal{Q}_4|)$ time and $\widetilde{O}(|\mathcal{Q}_3| + |\mathcal{Q}_4|)$ space. 
        By $\operatorname{inc}(w(\mathcal Q_1, \mathcal Q_2))$
        and $\operatorname{dec}(w(\mathcal Q_3, \mathcal Q_4))$ we~denote initilization methods for these data structures, by $\operatorname{pop}()$ we~denote their method that 
        gives the next element.
        
        Algorithm~\ref{lst:faster_shamir} solves the problem using data structures
        $D_1$~and~$D_2$.
        It works in time $\widetilde{O}(|\mathcal{Q}_1| \cdot |\mathcal{Q}_2| + |\mathcal{Q}_3| \cdot |\mathcal{Q}_4|)$, uses $\widetilde{O}(|\mathcal{Q}_1| + |\mathcal{Q}_2| + |\mathcal{Q}_3| + |\mathcal{Q}_4|)$ space, and outputs $|\mathcal{L}_a|$ elements. Since the number of outputted subsets is at most $|\mathcal{Q}_4|$, even if we store the output, we still need only $\widetilde{O}(|\mathcal{Q}_{max}|^2)$ time and $\widetilde{O}(|\mathcal{Q}_{max}|)$ space.
        
        \begin{algorithm}
	\DontPrintSemicolon
	\SetKwInOut{Input}{Input}
	\SetKwInOut{Output}{Output}
        \Input{$\mathcal Q_1, \mathcal Q_2, \mathcal Q_3, \mathcal Q_4, a$}
	\Output{$\mathcal L_a = \{(a, Q_4 \cap M) \colon (Q_1, Q_2, Q_3, Q_4) \in \mathcal X_a \}$}
        Initialize $D_1$ = inc($w(\mathcal Q_1, \mathcal Q_2)$) \\
        Initialize $D_2$ = dec($w(\mathcal Q_3, \mathcal Q_4)$) \\
        \While{$(d_2, A) = D_2$\textnormal{.pop()}}{
            \While{$d_1 = D_1$\textnormal{.pop() and} $d_1 + d_2 < a$}{
                \textbf{skip} $d_1$
            }
            \If{$d_1 + d_2 = a$ \textnormal{and} not \textnormal{used[}$A$\textnormal{]}}{
                \textbf{output(}$(a, A)$\textbf{)} \\
                used[$A$] = True
            }
        }
	\caption{Pseudocode for Lemma~\ref{lemma:faster_shamir}.}
	\label{lst:faster_shamir}
        \end{algorithm}
    \end{proof}

    { Note that the size of $\mathcal{L}_a$ is at most $\binom{|M|}{\mu |M|}$.}
    Summarizing, we get the following.

    \begin{lemma}
    \label{lemma:modified_4sum}
        There is an algorithm that given $\mathcal{Q}_1, \mathcal{Q}_2, \mathcal{Q}_3, \mathcal{Q}_4$ and $(r, p, s, q)$, uses $\widetilde{O}\left(|\mathcal{Q}_{max}|^2 + \frac{|\mathcal{Q}_{max}|^4}{pq} + \binom{|M|}{\mu |M|}\right)$ time and $\widetilde{O}\left(|\mathcal{Q}_{max}| + \frac{|\mathcal{Q}_{max}|^4}{pq} + \binom{|M|}{\mu |M|}\right)$ space. It returns a set $\mathcal L \subseteq \mathbb Z \times \mathcal Q_4$ of size $\widetilde O\left(\frac{|\mathcal Q_{max}|^4}{pq} + \binom{|M|}{\mu |M|}\right)$ that satisfies the following. \\
            If there exist a solution $S \subseteq I$ with $w(S) = t$, and a partition $S = S_L \sqcup S_R$, such that:
            \begin{itemize}
                \item $|S| = \alpha n$;
                \item $|M_L \cap S| = |M \cap S| = |M_R \cap S| = \beta n / 2$;
                \item $S \cap M$ is a ($\leq \varepsilon$)-mixer;
                \item $|S \cap L_i| = \gamma |L_i|$ for each $i \in [4]$, the same for $R_i$;
                \item $S_L \subseteq M_L \sqcup L \sqcup M$, $S_R \subseteq M \sqcup R \sqcup M_R$;
                \item $|S_L \cap M| = \mu \beta n$, $|S_R \cap M| = (1/2 - \mu) \beta n$;
                \item $w(S_L) \equiv_p r$, $w(S_L) \equiv_q s$, $w(S_R) \equiv_p t - r$, $w(S_R) \equiv_q t - s$;
                \item $|\mathcal X_{r, p, s, q} \setminus \mathcal X_{w(S_L)}| \leq \ell$,
            \end{itemize}
            then $\mathcal L_{w(S_L)} \subseteq \mathcal L$.
            Otherwise, $\mathcal L$ may be any set.
    \end{lemma}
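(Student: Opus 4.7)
The plan is to apply Lemma~\ref{lemma:shamir_for_rings} with the cutoff parameter $m = 2\ell + 1$, producing at most $2\ell + 1$ tuples of $\mathcal{X}_{r,p,s,q}$, and then split into two cases depending on whether the cutoff is actually reached. In the first case, if the algorithm returns fewer than $2\ell + 1$ tuples, then we have enumerated all of $\mathcal{X}_{r,p,s,q}$. By the assumptions $w(S_L) \equiv_p r$ and $w(S_L) \equiv_q s$, the tuple witnessing $S_L$ lies in $\mathcal{X}_{r,p,s,q}$, hence $\mathcal{L}_{w(S_L)}$ is contained in the set $\mathcal{L}$ obtained by projecting each returned tuple $(Q_1,Q_2,Q_3,Q_4)$ to $(w(Q_1)+w(Q_2)+w(Q_3)+w(Q_4),\, Q_4 \cap M)$. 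This projection takes $\widetilde{O}(\ell)$ time and space.

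In the second case, when the algorithm returns exactly $2\ell + 1$ tuples, I would invoke the assumption $|\mathcal{X}_{r,p,s,q} \setminus \mathcal{X}_{w(S_L)}| \le \ell$ (supplied in the hypotheses via Lemma~\ref{lemma:q}). Since at most $\ell$ of the returned tuples lie outside $\mathcal{X}_{w(S_L)}$, strictly more than half of them, namely at least $\ell+1$, have total weight equal to $w(S_L)$. Therefore $w(S_L)$ is the unique strict majority element among the weights of the $2\ell+1$ returned tuples, and can be recovered in $\widetilde{O}(\ell)$ time and $\widetilde{O}(1)$ additional space using the Boyer--Moore majority algorithm. Having learned $w(S_L)$, I would call Lemma~\ref{lemma:faster_shamir} with $a = w(S_L)$ and set $\mathcal{L} := \mathcal{L}_{w(S_L)}$; this costs $\widetilde{O}(|\mathcal{Q}_{max}|^2)$ time and $\widetilde{O}(|\mathcal{Q}_{max}|)$ space.

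In both branches $\mathcal{L}_{w(S_L)} \subseteq \mathcal{L}$ and $|\mathcal{L}| = \widetilde{O}(\ell)$. To verify the stated resource bounds, I would add the cost of Lemma~\ref{lemma:shamir_for_rings} at $m = 2\ell + 1$, namely $\widetilde{O}(|\mathcal{Q}_{max}|^2 + \ell)$ time and $\widetilde{O}(|\mathcal{Q}_{max}| + \ell)$ space, to the cost from Lemma~\ref{lemma:faster_shamir}; unfolding $\ell = \frac{|\mathcal{Q}_{max}|^4}{pq} + \binom{|M|}{\mu|M|}$ (or with $\max\{\mu,1/2-\mu\}$ as in the earlier definition, which only affects the statement cosmetically since $\mu \le 1/4$) recovers the claimed bound.

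The main obstacle is making the majority argument watertight. Choosing the cutoff $2\ell + 1$ is exactly what makes the Boyer--Moore majority valid: any smaller cutoff would not guarantee that more than half of the returned tuples carry the correct weight, and any larger cutoff would inflate the time complexity. The soundness of this threshold relies entirely on the fact that Lemma~\ref{lemma:q} is assumed in the hypotheses, so the construction here is a clean deterministic routine conditional on all the probabilistic good events listed in the bullet points; if any of them fails, the lemma simply allows $\mathcal{L}$ to be arbitrary, so no extra error analysis is needed beyond what is packaged into those earlier lemmas.
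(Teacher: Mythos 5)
Your proposal matches the paper's proof essentially line for line: the same cutoff $m=2\ell+1$ fed to Lemma~\ref{lemma:shamir_for_rings}, the same case split on whether the cutoff is reached, the same majority argument invoking the hypothesis $|\mathcal{X}_{r,p,s,q}\setminus\mathcal{X}_{w(S_L)}|\le\ell$ to recover $w(S_L)$, and the same fallback to Lemma~\ref{lemma:faster_shamir}. The explicit mention of Boyer--Moore is a minor implementation detail the paper leaves implicit, but the structure, thresholds, and complexity accounting are identical.
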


    Similarly, using Lemma \ref{lemma:modified_4sum}, we find subset $\mathcal{R}$ for the right side. 
    We call WOV($\mathcal{L}, \mathcal{R}$) and return True if it finds a solution. The WOV algorithm works in time $\widetilde O((|\mathcal{L}| + |\mathcal{R}|)2^{\beta n(1 - h(1/4))})$ and space $\widetilde O(|\mathcal{L}| + |\mathcal{R}| + 2^{\beta n})$ by Lemma \ref{lemma:WOV}.

    It is worth noting that for most choices of $s$ our assumption that $s \equiv_{q} w(S_L)$ does not hold. Also, with probability $1 - \Omega^*(1)$, $|\mathcal{X}_{r,p, w(S_L),q} \setminus \mathcal{X}_{w(S_L)}|$ may happen to be greater than~$\ell$. If any of these cases occurs, we may find a~set $\mathcal{L}$ that was not intended (i.e. it does not contain $\mathcal{L}_{w(S_L)}$), or we may even not find anything at all (if after applying the algorithm from Lemma \ref{lemma:shamir_for_rings} we obtain $2\ell+1$ elements, each of which occurs less than $\ell+1$ times). In the second scenario, we simply continue with the next choice of $s$. Otherwise, we cannot determine whether we have found the correct $\mathcal{L}$ or not. However, if an incorrect $\mathcal{L}$ is found, the outcome of our algorithm remains unaffected. If OV($\mathcal{L}, \mathcal{R}$) finds a solution, it is a solution to the original problem; if not, we simply proceed to the next iteration of the loop in line \ref{line:iterating_s2}.

    Now we are ready to present the final Algorithm \ref{lst:main_algorithm}. 
    We will choose the exact values for its parameters later.

\begin{algorithm}
	\DontPrintSemicolon
	\SetKwInOut{Input}{Input}
	\SetKwInOut{Output}{Output}
        \Input{$(I, t)$.}
	\Output{True, if $S \subseteq I$ with $w(S)=t$ exists, otherwise False.}

        \SetKwFor{RepTimes}{repeat}{times}{end}
        \RepTimes{$2^{\lambda n}$}
        {\label{line:repeating_whole_algorithm}
            Select random disjoint subsets $M_L, M, M_R \subseteq I$ of~size $\beta n$
            \\
            \ForEach {$\mu$ \textnormal{such that} $\mu\beta n \in[0, \beta n/2]$}{
            \label{line:foreach_mu}
                Randomly partition $I \setminus (M_L \sqcup M \sqcup M_R)$ into $L, R$ with size as in~\eqref{eq:L_size}--\eqref{eq:R_size} \\
                Pick a~prime $p = \Theta(2^{(1 - \varepsilon) \beta n/2})$ and $r \in \mathbb{Z}_p$ at random \\
                Construct $\mathcal Q_1, \mathcal Q_2, \mathcal Q_3, \mathcal Q_4$ and $\mathcal Q_1', \mathcal     Q_2', \mathcal Q_3', \mathcal Q_4'$
            as defined in~\eqref{eq:q_1_definition}--\eqref{eq:q_4'_definition} \label{line:constructing_qs}\\
                Pick a~random prime 
                $q = \Theta^*\left({2^{n/2}}/(2^{\lambda n}|\mathcal{Q}_{max}|^2)\right)$\\
                
                \ForEach{$s \in \mathbb{Z}_{q}$} { \label{line:iterating_s2}
                    Construct $\mathcal{L}$ and $\mathcal{R}$ (as~in Lemma~\ref{lemma:modified_4sum}) \label{line:constucting_mathcal_l}\\
                    \If{WOV($\mathcal{L}, \mathcal{R}$) (as~in~Lemma~\ref{lemma:WOV})} { \label{line:solving_WOV}
                        \textbf{return} True
                    }
                }
            }
        }
        \textbf{return} False
	\caption{The main algorithm.}
	\label{lst:main_algorithm}
\end{algorithm}

\subsection{Correctness}

\begin{lemma} \label{lemma:probabiliy_of_main_algorithm}
    The success probability of~the algorithm is~$\Omega^*(1)$.
\end{lemma}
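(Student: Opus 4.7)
The plan is to walk through the random choices made in a single outer iteration of Algorithm \ref{lst:main_algorithm} and argue that each contributes at least an $\Omega^*(1)$ factor (or, in one case, $\Theta^*(2^{-\lambda n})$) to the probability of simultaneously hitting all the events required for the subsequent lemmas to apply. Fix a solution $S \subseteq I$ with $|S| = \alpha n$. The one low-probability event is that the random $M_L, M, M_R$ together touch exactly $3\beta n / 2$ elements of $S$; by the computation leading to \eqref{eq:lambda} this has probability $\Theta^*(2^{-\lambda n})$, so the outer \textbf{repeat} loop of $2^{\lambda n}$ iterations together with Section \ref{sec:amplification} amplifies it to $\Omega^*(1)$. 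Conditioned on this, Lemma \ref{lemma:same_fractions} gives $|M_L \cap S| = |M \cap S| = |M_R \cap S| = \beta n / 2$ with another $\Omega^*(1)$ factor, and by also running the algorithm on $(I, w(I)-t)$ I may assume $S \cap M$ is a $(\le\varepsilon)$-mixer.

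Next I would handle the inner loops. The \textbf{foreach} $\mu$ loop visits the specific $\mu$ singled out by \eqref{eq:mu}, so no probabilistic cost there. A second application of Lemma \ref{lemma:same_fractions} to the random refinement into $L_1,\dots,L_4,R_1,\dots,R_4$ gives $|S \cap L_i| = \gamma|L_i|$ and the analogue for $R_i$ with $\Omega^*(1)$ probability. For the random prime $p$ and remainder $r$, Lemma \ref{lemma:p1} produces $\Omega(p/n^2)$ good remainders, and a uniform $r$ hits one with probability $\Omega^*(1)$; this yields the desired partition $S = S_L \sqcup S_R$ with $|S_L \cap M| = \mu \beta n$ and $w(S_L) \equiv_p r$. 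The \textbf{foreach} $s$ loop eventually chooses $s \equiv_q w(S_L)$, and at that iteration Lemma \ref{lemma:q} bounds $|\mathcal{X}_{r,p,w(S_L),q} \setminus \mathcal{X}_{w(S_L)}| \le \ell$ with $\Omega^*(1)$ probability over the random $q$. All hypotheses of Lemma \ref{lemma:modified_4sum} then hold, so $\mathcal{L}_{w(S_L)} \subseteq \mathcal{L}$ and symmetrically for $\mathcal{R}$, whence the WOV call of Lemma \ref{lemma:WOV} recovers $(S_L, S_R)$ with constant probability.

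Multiplying the $\Omega^*(1)$ contributions across these steps, combined with the $2^{\lambda n}$ amplification of the mixer-size event, yields overall $\Omega^*(1)$ success probability. The main obstacle will be verifying that the random events may be treated as essentially independent, and that the ``correct'' values of $\mu$ and $s$ really lie in the ranges the algorithm enumerates. In particular, I need to check that $\mu \beta n$ is an integer in $[0, \beta n/2]$ (or otherwise absorbed by the $O(1)$ rounding tolerated by Lemma \ref{lemma:same_fractions}), that conditioning on the structural events for $M_L, M, M_R$ and the $L_i, R_i$ leaves the primes $p, q$ and remainders $r, s$ with their intended marginal distributions, and that the ``$S \cap M$ is a $(\le\varepsilon)$-mixer'' reduction does not interact adversarially with the later arithmetic filters. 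Once these compatibility checks are dispatched, the probability bound reduces to a routine product of the $\Omega^*(1)$ factors enumerated above.
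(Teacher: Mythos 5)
Your proposal matches the paper's proof in both structure and substance: you identify the single $\Theta^*(2^{-\lambda n})$ event (the size of $(M_L \cup M \cup M_R) \cap S$), amplify it by the $2^{\lambda n}$ repetitions, and then chain together the same $\Omega^*(1)$ conditional events (the two applications of Lemma~\ref{lemma:same_fractions}, the primality of $p$ and $q$, Lemma~\ref{lemma:p1}, Lemma~\ref{lemma:divide_by_p1}, and Lemma~\ref{lemma:q}) before invoking Lemma~\ref{lemma:modified_4sum} and the WOV call. This is precisely the decomposition the paper uses, so the argument is correct and essentially the same; the compatibility checks you flag at the end are treated implicitly in the paper (e.g.\ the $O(1)$ rounding slack is already absorbed into Lemma~\ref{lemma:same_fractions}, and the correct $\mu$ and $s$ are found by exhaustive enumeration rather than chance).
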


\begin{proof}
    To~detect that $I$ is~indeed a~yes-instance, { an} algorithm { iteration} relies on~a~number of~events. 
    The first one is $|(M_L \cup M \cup M_R) \cap S| = 3\beta n/2$ ,that occurs with probability $\Theta^*\left(2^{-\lambda n}\right)$. All the others occur
    with probability $\Omega^*(1)$, conditioned on~the event
    that all the previous events occur. This implies that all of~them occur with probability $\Omega^*\left(2^{-\lambda n}\right)$, and after repeating it $2^{\lambda n}$ times, the success probability of the resulting algorithm is $\Omega^*(1)$. Below, we recall all the considered events and provide links to~lemmas
    proving their conditional probabilities.

    \begin{enumerate}
        \item { $|(M_L \cup M \cup M_R) \cap S| = 3\beta n/2$, see~Eq.~\eqref{eq:lambda}.} \label{ln:correctness_probability_1}
        
        \item $|M_L \cap S| = |M \cap S| = |M_R \cap S| = \frac \beta 2 n$, 
        see Lemma~\ref{lemma:same_fractions}.
        
        \item $|S \cap L_i| = \gamma |L_i|$ and $|S \cap R_i| = \gamma |R_i|$, see Lemma~\ref{lemma:same_fractions}.
        
        \item $p$ is a prime number, see Section \ref{section:primes}. \label{ln:correctness_probability_4}
        \item $\left|\left\{ a \in \mathbb{Z}_p \colon \exists M' \subseteq S \cap M, |M'| = 2\mu |S \cap M|, w(M') \equiv_p a \right\} \right| \ge \Omega\left(p/n^2 \right)$, see Lemma \ref{lemma:p1}.
        \item $|\mathcal X_{r, p}| = O^*(|\mathcal{Q}_{max}|^4 / p)$ and there exists a partition $M_1 \sqcup M_2 = S \cap M$ such that $w(S \cap (M_L \cup L \cup M_1)) \equiv_p r$ and $w(S \cap (M_2 \cup R \cup M_R)) \equiv_p t - r$, see Lemma \ref{lemma:divide_by_p1}.\label{ln:correctness_probability_6}
        \item { $q$ is a prime number, see Section \ref{section:primes}}. \label{ln:correctness_probability_7}
        \item $|\mathcal{X}_{r,p, w(S_L),q} \setminus \mathcal{X}_{w(S_L)}| \le \ell$, see Lemma \ref{lemma:q}.
    \end{enumerate}
    In case any of those events does not happen, the algorithm completes the iteration without affecting the final result.
\end{proof}

\subsection{Setting the Parameters}

Recall that $\alpha = \frac{|S|}{n}$ and we iterate over all possible $\alpha$.
We define $\beta = \beta(\alpha)$ as follows:
\begin{align} \label{eq:beta_definition}
    \beta(\alpha) = \begin{cases}
        0.13, &\alpha \in [0.45, 0.55] \\
        0,    &\text{otherwise.}
    \end{cases}
\end{align}

Let us consider an iteration of the loop on line \ref{line:foreach_mu}. To this point, we know the values of $\varepsilon$, $\varepsilon_L$, $\varepsilon_R$ and $\mu$, so we consider them as fixed numbers.
Since $|M_L| = |M| = |M_R| = \beta n$, we have that $|I \setminus (M_L \sqcup M \sqcup M_R)| = (1 - 3\beta) n$.
Let us consider a~partition $I \setminus (M_L \sqcup M \sqcup M_R) = L_1 \sqcup L_2 \sqcup L_3 \sqcup L_4 \sqcup R_1 \sqcup R_2 \sqcup R_3 \sqcup R_4$ and estimate the size of $\mathcal{Q}_i$ and $\mathcal{Q}'_i$. Recall that $\gamma = \frac{|(L \cup R) \cap S|} {|L \cup R|} = \frac {\alpha - \frac 3 2 \beta} {1 - 3 \beta}$. 
Observe that we~use $\varepsilon$ instead of~$\varepsilon_L$ and $\varepsilon_R$ since $\varepsilon \le \varepsilon_L, \varepsilon_R$.
\begin{align*}
	|\mathcal Q_1| &\le \binom{|L_1|}{\gamma |L_1|} \cdot 2^{(1 - \varepsilon) \beta n} & |\mathcal Q_1'| &\le \binom{|R_1|}{\gamma |R_1|} \cdot 2^{(1 - \varepsilon) \beta n}  \\
	|\mathcal Q_2| &= \binom {|L_2|} {\gamma |L_2|} & |\mathcal Q_2'| &= \binom {|R_2|} {\gamma |R_2|}\\
	|\mathcal Q_3| &= \binom {|L_3|} {\gamma |L_3|} & |\mathcal Q_3'| &= \binom {|R_3|} {\gamma |R_3|}\\
	|\mathcal Q_4| &= \binom{|L_4|}{\gamma |L_4|} \cdot \binom{|M|}{\mu \beta n} & |\mathcal Q_4'| &= \binom{|R_4|}{\gamma |R_4|} \cdot \binom{|M|}{\left(\frac 1 2 - \mu\right) \beta n}.
\end{align*}

We construct $\mathcal{L}$ and $\mathcal{R}$ in time depending on $|\mathcal{Q}_{max}|^2 + |\mathcal{Q}'_{max}|^2$. Note that this function is minimized when all $|\mathcal{Q}_i|$ and $|\mathcal{Q}'_i|$ are roughly the same. We can achieve that by a careful choice of $|L_i|$ and $|R_i|$.

We define the sizes of $|L|$ and $|R|$ as shown below and require that $|L_1| + |L_2| + |L_3| + |L_4| = |L|$ and $|R_1| + |R_2| + |R_3| + |R_4| = |R|$. \begin{align}
   |L| &= \frac {1 - 3 \beta - \chi \beta}{2} n \label{eq:L_size}, \\
   |R| &= \frac {1 - 3 \beta + \chi \beta}{2} n \label{eq:R_size},
\end{align}

where $\chi = \frac {h(\mu) - h\left(\frac 1 2 - \mu\right)} {h(\gamma)}$ is a~balancing parameter 
needed to~ensure that 
$|\mathcal{Q}_{max}| \approx |\mathcal{Q}'_{max}|$.

Now we will show that all the parameters are chosen correctly and will define sizes of $L_1, L_2, L_3, L_4, R_1, R_2, R_3, R_4$, so we never, for example, choose the size for $L_i$ to be negative, or repeat the algorithm a number of times smaller than one.

 \begin{lemma}
     $\lambda$ defined in Equation \eqref{eq:lambda} is a~non-negative number.
 \end{lemma}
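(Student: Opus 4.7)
The plan is to invoke concavity of the binary entropy function~$h$ directly. Recall that $\lambda$ was obtained by taking $-\tfrac{1}{n}\log$ of a probability of the form $\binom{\alpha n}{3\beta n/2}\binom{(1-\alpha)n}{3\beta n/2}/\binom{n}{3\beta n}$, so non-negativity morally reflects that this ratio is at most one; however, because of the $\Theta^*(\cdot)$ polynomial factor hidden in the binomial estimate, I want to prove it directly rather than deduce it from $\Pr \le 1$.

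First I would check that the entropies appearing in~\eqref{eq:lambda} are well defined. Since $\beta(\alpha) = 0.13$ only for $\alpha \in [0.45, 0.55]$ and $\beta(\alpha) = 0$ otherwise, the nontrivial case has $\frac{3\beta}{2\alpha} \le \frac{0.39}{0.9} < 1$ and similarly $\frac{3\beta}{2(1-\alpha)} < 1$, so both arguments lie in $[0,1]$. When $\beta = 0$ each of $h(3\beta)$, $h(\tfrac{3\beta}{2\alpha})$, $h(\tfrac{3\beta}{2(1-\alpha)})$ vanishes and $\lambda = 0$, which settles that case.

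For the main case, I would apply the concavity inequality stated in the ``Growth Rate and Entropy'' subsection with weights $\alpha$ and $1-\alpha$ (these sum to $1$) and the points $x = \tfrac{3\beta}{2\alpha}$, $y = \tfrac{3\beta}{2(1-\alpha)}$:
\[
\alpha\, h\!\left(\tfrac{3\beta}{2\alpha}\right) + (1-\alpha)\, h\!\left(\tfrac{3\beta}{2(1-\alpha)}\right)
\;\le\; h\!\left(\alpha \cdot \tfrac{3\beta}{2\alpha} + (1-\alpha) \cdot \tfrac{3\beta}{2(1-\alpha)}\right).
\]
The key observation is that the convex combination on the right simplifies to $\tfrac{3\beta}{2} + \tfrac{3\beta}{2} = 3\beta$, so the right-hand side equals $h(3\beta)$. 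Rearranging gives exactly $\lambda \ge 0$.

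There is essentially no obstacle here; the only subtlety is confirming that the arguments of $h$ stay in $[0,1]$ so that concavity can be applied (which the parameter choice~\eqref{eq:beta_definition} guarantees), and handling the degenerate $\beta = 0$ case separately. I would present the proof in two short sentences: one reducing to the concavity inequality, one computing the convex combination.
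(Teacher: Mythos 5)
Your proof is correct and takes essentially the same approach as the paper: the paper also applies concavity of $h$ with weights $\alpha$, $1-\alpha$ at the points $\tfrac{3\beta}{2\alpha}$, $\tfrac{3\beta}{2(1-\alpha)}$ and observes that the convex combination collapses to $3\beta$. The extra checks you add (that the arguments of $h$ lie in $[0,1]$, and the degenerate $\beta = 0$ case) are sound but left implicit in the paper.
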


 \begin{proof}
     Entropy is a~concave function, hence
 \[
     \alpha h\left(\frac{3\beta}{2\alpha}\right) + (1 - \alpha) h\left(\frac{3\beta}{2(1-\alpha)}\right) \le h\left(\frac{3\beta}{2} + \frac{3\beta}{2}\right) = h(3\beta) \; .
 \]

 This implies that $\lambda \ge 0$.
 \end{proof}

 \begin{lemma} \label{lemma:L_size_non_negative}
     For our choice of $\beta$, for every $\alpha \in [0, 1]$, $|L| \ge 0$, where
     \[
         |L| = n \frac{h(\gamma) - 3 \beta h(\gamma) - \beta h(\mu) + \beta h(1/2 - \mu)}{2 h(\gamma)}\tag{Eq.~\eqref{eq:L_size}} \; .
     \]
 \end{lemma}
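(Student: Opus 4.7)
The plan is to show that the numerator of the expression for $|L|$ is non-negative; since $h(\gamma) > 0$ for $\gamma \in (0,1)$, this will give $|L| \ge 0$. Rewriting, we need
\[
(1 - 3\beta)\,h(\gamma) \;+\; \beta\bigl(h(1/2 - \mu) - h(\mu)\bigr) \;\ge\; 0.
\]
I would argue that both summands are separately non-negative, so nothing subtle has to cancel.

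For the first summand, the only value $\beta$ takes is $\beta = 0.13$ (or $\beta = 0$, in which case $|L| = n/2$ and the claim is immediate). Since $0.13 < 1/3$, we have $1 - 3\beta > 0$, and $h(\gamma) \ge 0$ for $\gamma \in [0,1]$, so $(1-3\beta)h(\gamma) \ge 0$. For the second summand, recall from the paragraph around Eq.~\eqref{eq:mu} that $\mu \in [0, 0.25]$, so $\mu \le 1/2 - \mu$ and both lie in $[0, 1/2]$. Since the binary entropy $h$ is strictly increasing on $[0, 1/2]$, we get $h(1/2 - \mu) \ge h(\mu)$, hence $\beta(h(1/2 - \mu) - h(\mu)) \ge 0$.

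Equivalently, one can phrase the argument directly through the balancing parameter $\chi$: the inequality $h(\mu) \le h(1/2 - \mu)$ is exactly the statement that $\chi \le 0$, and then
\(|L| = \tfrac{1 - 3\beta - \chi\beta}{2}\, n \ge \tfrac{1 - 3\beta}{2}\, n \ge 0\)
since $\beta \le 0.13 < 1/3$. There is no real obstacle here — the only thing to be careful about is the degenerate corner where $\gamma \in \{0, 1\}$ (so $h(\gamma) = 0$ and $\chi$ is formally $0/0$); but this can only happen when the instance is trivial (no elements in $L \sqcup R$ belong to $S$, or all of them do), and the definition of $|L|$ via Eq.~\eqref{eq:L_size} still yields $|L| = (1 - 3\beta)n/2 \ge 0$ directly.
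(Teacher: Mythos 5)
Your proof is correct and in fact cleaner than the paper's. The key observation you make --- that $\mu \in [0, 0.25]$ (because $k \le |A|/2$ in the pigeonhole step) forces $\mu \le 1/2 - \mu$, and since $h$ is increasing on $[0, 1/2]$ the term $h(1/2-\mu) - h(\mu)$ is \emph{non-negative} --- renders both summands $(1-3\beta)h(\gamma)$ and $\beta\bigl(h(1/2-\mu)-h(\mu)\bigr)$ separately non-negative, and the claim follows with no further work. The paper instead uses the cruder bound $-h(\mu) + h(1/2-\mu) \ge -1$, which turns the numerator into $h(\gamma)(1-3\beta) - \beta$; this can in principle be negative, so the paper then has to case-split on $\beta$ and invoke the numerical fact $h(\gamma) \ge 0.99$ (when $\alpha \in [0.45, 0.55]$, $\beta = 0.13$) to push through. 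Your monotonicity observation removes the dependence on the specific numerical value of $\beta$ and on a lower bound for $h(\gamma)$ altogether, which is the more robust argument. Your handling of the degenerate corner $h(\gamma) = 0$ is also fine: this only occurs with $\beta = 0$, where the $\chi\beta$ term vanishes and $|L| = n/2$ by the defining Eq.~\eqref{eq:L_size}, matching the paper's treatment of that case.
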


 \begin{proof}
     \begin{multline*}
         |L| \ge 0 \iff h(\gamma) - 3 \beta h(\gamma) + \beta (-h(\mu) + h(1/2-\mu)) \ge 0 \\
         h(\gamma) - 3 \beta h(\gamma) + \beta (\underbrace{-h(\mu) + h(1/2-\mu)}_{\ge -1}) \ge h(\gamma) - 3 \beta h(\gamma) - \beta = h(\gamma)(1 - 3 \beta) - \beta
     \end{multline*}

     Consider two cases:

     \begin{enumerate}
         \item $\alpha \in [0.45, 0.55]$, then $\beta = 0.13, h(\gamma) \ge h(0.45) \ge 0.99$ and
 \[
     h(\gamma)(1 - 3 \beta) - \beta \ge 0.99 (1 - 3 \cdot 0.13) - 0.13 \ge 0.47
 \]

         \item $\alpha \in [0, 0.45) \cup (0.55, 1]$, then $\beta = 0$ and $|L| = \frac 1 2 n$.
     \end{enumerate}
 \end{proof}

 \begin{lemma}
 	For our choice of $\beta$, for every $\alpha \in [0,1]$ we can choose $|L_1|$, $|L_2|$, $|L_3|$, $|L_4|$, $|R_1|$, $|R_2|$, $|R_3|$, $|R_4|$ $\geq 0$ such that $|L_1| + |L_2| + |L_3| + |L_4| = |L|$, $|R_1| + |R_2| + |R_3| + |R_4| = |R|$, and
 \begin{align}
 	\log |\mathcal Q_{max}| &\leq n \cdot \max \{\frac{h(\gamma) - 3 \beta h(\gamma) + 2 \beta - 2 \varepsilon \beta + \beta h(\mu) + \beta h\left(1/2 - \mu\right)} {8}, \; (1 - \varepsilon) \beta\} \; . \label{eq:q_max}
 \end{align}
 \end{lemma}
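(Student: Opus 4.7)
The plan is to choose the $|L_i|$ and $|R_i|$ so as to \emph{balance} the logarithms of the eight counts $|\mathcal Q_i|, |\mathcal Q_i'|$ at a common target~$Tn$. Using the estimate~\eqref{eq:binomialestimate}, the displayed bounds for $|\mathcal Q_i|, |\mathcal Q_i'|$ simplify (up to $O(\log n)$ additive terms in the logarithm) to $\log |\mathcal Q_i| \le |L_i| h(\gamma) + c_i$ with $c_1 = (1-\varepsilon)\beta n$, $c_2 = c_3 = 0$, $c_4 = \beta n\, h(\mu)$, and analogously $\log |\mathcal Q_i'| \le |R_i| h(\gamma) + c_i'$ with $c_4' = \beta n\, h(1/2-\mu)$. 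Requiring each right-hand side to equal $Tn$ suggests the assignment $|L_i| = (Tn - c_i)/h(\gamma)$ and $|R_i| = (Tn - c_i')/h(\gamma)$.

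The next step is to pin down $T$ from the budget constraint $|L_1|+\dotsb+|L_4|=|L|$, which yields $4T = |L|\,h(\gamma)/n + (1-\varepsilon)\beta + \beta h(\mu)$. Substituting $|L|/n = (1-3\beta-\chi\beta)/2$ from~\eqref{eq:L_size} and using $\chi\, h(\gamma) = h(\mu) - h(1/2-\mu)$ (which is the definition of~$\chi$), a short simplification gives
\[
8T = h(\gamma) - 3\beta h(\gamma) + 2\beta - 2\varepsilon\beta + \beta h(\mu) + \beta h(1/2-\mu),
\]
which is precisely the first expression in the max on the right-hand side of~\eqref{eq:q_max}. Running the analogous computation on the right side using $|R|/n = (1-3\beta+\chi\beta)/2$ and $c_4' = \beta n\, h(1/2-\mu)$ produces the identical $T$; this consistency is exactly what the particular choice of~$\chi$ in~\eqref{eq:L_size}--\eqref{eq:R_size} was designed to force.

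It remains to certify $|L_i|, |R_i| \ge 0$, i.e.\ $T \ge c_i/n$ and $T \ge c_i'/n$ for each~$i$. The relevant constants are $(1-\varepsilon)\beta$, $\beta h(\mu)$, and $\beta h(1/2-\mu)$; when $T$ dominates all of them, the balanced assignment is directly feasible and yields the first branch of the max. Otherwise one clips the offending size(s) to zero and redistributes the remaining budget among the other parts. The key case is $T < (1-\varepsilon)\beta$: setting $|L_1| = |R_1| = 0$ and rebalancing the remaining three parts on each side yields a common new target $T' = (4T - (1-\varepsilon)\beta)/3$, and the strict inequality $T < (1-\varepsilon)\beta$ implies $T' < (1-\varepsilon)\beta$, so each remaining $|\mathcal Q_i|, |\mathcal Q_i'|$ is bounded by $2^{(1-\varepsilon)\beta n}$, matching the second branch of the max.

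The main obstacle will be to rule out pathological corners of the clipping step, namely to verify $T' \ge \beta h(\mu)$ and $T' \ge \beta h(1/2-\mu)$ so that $|L_4|, |R_4|$ stay non-negative. For $\beta=0$ this is trivial. For $\beta = 0.13$ on $\alpha \in [0.45, 0.55]$ one uses $\mu \in [0, 1/4]$, the bound $h(\mu), h(1/2-\mu) \le h(1/4) < 0.82$, and that $h(\gamma)$ is bounded away from~$0$ (it lies in a range where $h$ is close to~$1$), reducing the inequality to an elementary numerical check. If any corner still fails, a further clipping of $|L_4|$ (or $|R_4|$) to zero, together with a short case analysis, confirms that the bound in~\eqref{eq:q_max} remains a valid upper bound on $\log|\mathcal Q_{\max}|$.
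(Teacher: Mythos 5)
Your proposal takes essentially the same route as the paper: both balance the logarithms of the eight counts $|\mathcal Q_i|,|\mathcal Q_i'|$ at a common target $Tn$, solve the resulting linear system for $|L_i|$ and $|R_i|$ (the paper writes down the explicit solutions~\eqref{eq:l1_size}--\eqref{eq:r4_size}, which are exactly your $(Tn - c_i)/h(\gamma)$), recover the first branch of the max from the budget constraint together with the definition of~$\chi$, and repair any negative sizes by clipping to zero and redistributing. Your closing remarks about pathological corners of the clipping step mirror the paper's equally brief treatment of that step, so this is the same argument with the derivation of the size formulas made explicit rather than merely stated.
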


 \begin{proof}
 	To make all $|\mathcal Q_i|$ and $|\mathcal Q_i'|$ equal $n \frac{(h(\gamma) - 3 \beta h(\gamma) + 2 \beta - 2 \varepsilon \beta + \beta h(\mu) + \beta h\left(1/2 - \mu\right))} {8}$ we set sizes of  $L_i$ and $R_i$ as follows:
 \begin{align}
     |L_1| &= n \frac {h(\gamma) - 3 \beta h(\gamma) - 6 \beta + 6 \varepsilon \beta + \beta h(\mu) + \beta h\left(\frac 1 2 - \mu\right)} {8 h(\gamma)} = |R_1| \label{eq:l1_size}\\
     |L_2| = |L_3| &= n \frac {h(\gamma) - 3 \beta h(\gamma) + 2 \beta - 2 \varepsilon \beta + \beta h\left(\mu\right) + \beta h\left(\frac 1 2 - \mu\right)} {8 h(\gamma)} = |R_2| = |R_3| \label{eq:l2_size}\\
     |L_4| &= n\frac{h(\gamma) - 3 \beta h(\gamma) + 2 \beta - 2 \varepsilon \beta - 7 \beta h(\mu) + \beta h\left(\frac 1 2 - \mu \right)}{8 h(\gamma)} \label{eq:l4_size}\\
     |R_4| &= n\frac{h(\gamma) - 3 \beta h(\gamma) + 2 \beta - 2 \varepsilon \beta + \beta h(\mu) - 7 \beta h\left(\frac 1 2 - \mu \right)}{8 h(\gamma)} \label{eq:r4_size}
 \end{align}

 If all $|L_i|$ and $|R_i|$ are non-negative, then the statement of the lemma holds.
 Assume that some values are negative.
 In this case we replace all $|L_i| < 0$ with $|L_i| = 0$ and scale all the $|L_i|$ in such a way that $|L_1| + |L_2| + |L_3| + |L_4| = |L|$. Then we do the same procedure with $|R_i|$. 
 We can see that for every $|L_i| = 0$, $|\mathcal{Q}_i| \le 2^{(1 - \varepsilon) \beta n}$, and for $|L_i| > 0$, $|\mathcal{Q}_i| \le n \frac{(h(\gamma) - 3 \beta h(\gamma) + 2 \beta - 2 \varepsilon \beta + \beta h(\mu) + \beta h\left(1/2 - \mu\right))} {8}$. The same works for $|\mathcal Q'_i|$. Thus, the lemma holds.
 \end{proof}

We have set all the parameters and now we are ready to estimate the time and space complexity of Algorithm \ref{lst:main_algorithm}.

\subsection{Time Complexity}

The running time is~dominated~by the following parts.

\begin{itemize}
    \item Repeating the whole algorithm $2^{\lambda n}$ times (Line \ref{line:repeating_whole_algorithm}).
    \item Constructing $\mathcal Q_1, \mathcal Q_2, \mathcal Q_3, \mathcal Q_4$ and $\mathcal Q_1', \mathcal Q_2', \mathcal Q_3', \mathcal Q_4'$ takes $O^*(|\mathcal Q_{max}| + 2^{\beta n})$ time (Line \ref{line:constructing_qs}).
    \item Repeating $q$ times (Line \ref{line:iterating_s2}).
    \item Finding $\mathcal L, \mathcal R$ using Lemma~\ref{lemma:modified_4sum} in time $O^*\left(|\mathcal Q_{max}|^2 + \frac {|\mathcal Q_{max}|^4} {p q} + \binom{|M|}{\max\{\mu, 1/2-\mu\} |M|}\right)$ 
    (Line~\ref{line:constucting_mathcal_l}).
    \item Solving WOV$(\mathcal L, \mathcal R)$ using 
    Lemma~\ref{lemma:WOV} in time $O^*\left((|\mathcal L| + |\mathcal R|) 2^{\beta n(1 - h(1/4))}\right)$ (Line \ref{line:solving_WOV}). The $O^*$ here hides logarithmic factor in $|\mathcal L| + |\mathcal R|$ that grows polynomially in~$n$.
\end{itemize} 

Let us consider an iteration of the loop on line \ref{line:foreach_mu}. Note that the values of $p$, $q$, $|\mathcal{Q}_{max}|$ and $\mu$ are now determined. The running time of Algorithm \ref{lst:main_algorithm} on this specific iteration is
\[
    O^*\left(q \cdot \left(|\mathcal{Q}_{max}|^2 + \frac{|\mathcal{Q}_{max}|^4}{p q} + \left(\frac{|\mathcal{Q}_{max}|^4}{p q} + 2^{\beta n h(\max\{\mu, 1/2-\mu\})}\right)2^{\beta n(1 - h(1/4))}\right)\right) \, .
\] 

To bound the total running time of Algorithm \ref{lst:main_algorithm}, we can multiply the number of the iterations $\Theta^*(2^{\lambda n})$ by the upper bound on the slowest iteration.

 \begin{lemma} \label{lemma:time_complexity_main_algo}
 	The total running time of the Algorithm \ref{lst:main_algorithm} is $O^*(2^{n/2} + 2^{T(\alpha, \beta) \cdot n})$, where we define $T(\alpha, \beta)$ as follows:
 \begin{equation}
	 T(\alpha, \beta) = \frac{1}{2} \left(h(\gamma) - 3 \beta h(\gamma) + 3 \beta + 2 \lambda \right) . \label{eq:time_complexity}
 \end{equation}
 Here, $\gamma = \gamma(\alpha)$ and $\lambda=\lambda(\alpha)$. 
 \end{lemma}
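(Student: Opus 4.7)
The plan is to expand the total running time as the product of the outer repetition count $2^{\lambda n}$, the inner loop size $q$, and the per-iteration cost, and then show that every resulting contribution is bounded by either $2^{n/2}$ or $2^{T(\alpha,\beta)n}$ after the parameters are substituted. The $\mu$-loop and the loop over $\alpha$ contribute only polynomial factors that can be absorbed into $O^*(\cdot)$. From the preceding discussion, the per-iteration cost inside the $\mu$-loop is
\[
q\cdot O^*\!\left(|\mathcal Q_{max}|^2+\tfrac{|\mathcal Q_{max}|^4}{pq}+\bigl(\tfrac{|\mathcal Q_{max}|^4}{pq}+\tbinom{|M|}{\max\{\mu,1/2-\mu\}|M|}\bigr)2^{\beta n(1-h(1/4))}\right),
\]
so multiplying by $2^{\lambda n}$ yields four terms to estimate separately.

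First, by the choice $q=\Theta^*(2^{n/2}/(2^{\lambda n}|\mathcal Q_{max}|^2))$, the term $2^{\lambda n}\cdot q\cdot|\mathcal Q_{max}|^2$ is exactly $\Theta^*(2^{n/2})$, giving the $2^{n/2}$ summand in the bound. The term $2^{\lambda n}\cdot|\mathcal Q_{max}|^4/p$ is dominated by the corresponding WOV contribution $2^{\lambda n}\cdot|\mathcal Q_{max}|^4 2^{\beta n(1-h(1/4))}/p$, so it suffices to bound the latter. Here I substitute the bound~\eqref{eq:q_max} on $\log_2|\mathcal Q_{max}|$ together with $p=\Theta(2^{(1-\varepsilon)\beta n/2})$ and collect exponents. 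The key analytic step is to invoke the concavity of $h$ at the midpoint of $\mu$ and $1/2-\mu$: $h(\mu)+h(1/2-\mu)\le 2h(1/4)$. This kills the $\mu$-dependence and also the $\beta h(1/4)$ term cancels with the $\beta(1-h(1/4))$ coming from WOV; the remaining $\varepsilon$-contribution is $-\varepsilon\beta/2\le 0$, so the exponent is at most $(h(\gamma)-3\beta h(\gamma))/2+3\beta/2+\lambda=T(\alpha,\beta)$.

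The remaining contribution is $2^{\lambda n}\cdot q\cdot\binom{|M|}{\max\{\mu,1/2-\mu\}|M|}\cdot 2^{\beta n(1-h(1/4))}$, which after the substitution of $q$ becomes $\Theta^*\!\left(2^{n/2}\cdot 2^{\beta n h(1/2-\mu)}\cdot 2^{\beta n(1-h(1/4))}/|\mathcal Q_{max}|^2\right)$. Here I use the lower bound $|\mathcal Q_{max}|\ge 2^{(1-\varepsilon)\beta n}$ coming from~\eqref{eq:q_max} (the second branch of the max), which makes the ratio bounded by $2^{n/2}$ up to an $O(\beta)$ slack; more carefully, one notes $h(1/2-\mu)\le 1$ and $\beta(1-h(1/4))+\beta-2(1-\varepsilon)\beta\le 0$ for our small $\beta$ and small $\varepsilon$, so this term is also absorbed into either $O^*(2^{n/2})$ or $O^*(2^{T(\alpha,\beta)n})$.

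The main obstacle is the bookkeeping of exponents under the $\max$ inside~\eqref{eq:q_max}: when some $|L_i|$ or $|R_i|$ degenerates to zero the bound on $|\mathcal Q_{max}|$ switches from the balanced ``$Q$'' expression to $(1-\varepsilon)\beta$, and one has to check that the fourth term remains under control in both regimes. Once that case split is handled and the concavity inequality is applied, all four contributions fit under $O^*(2^{n/2}+2^{T(\alpha,\beta)n})$, which gives the stated bound.
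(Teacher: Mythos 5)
Your overall decomposition mirrors the paper's: expand the cost of one iteration into the four summands and multiply by $2^{\lambda n}$, then show the first gives $2^{n/2}$ exactly and the others are dominated by $2^{T(\alpha,\beta)n}$ or $2^{n/2}$. Your treatment of the first and third summands, including the observation that the second is subsumed by the third and the key use of the concavity inequality $h(\mu)+h(1/2-\mu)\le 2h(1/4)$, is correct and essentially identical to what the paper does. You acknowledge but do not carry out the case split when the $\max$ in \eqref{eq:q_max} is attained by its second branch $(1-\varepsilon)\beta$; the paper does this check explicitly (bounding the exponent by $\tfrac12(9\beta-2\beta h(1/4))\approx 0.48$ at $\beta=0.13$), so you should too if this is to be a complete proof.

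The genuine gap is in your treatment of the fourth summand. You invoke ``the lower bound $|\mathcal Q_{max}|\ge 2^{(1-\varepsilon)\beta n}$ coming from \eqref{eq:q_max} (the second branch of the max),'' but \eqref{eq:q_max} is an \emph{upper} bound on $\log|\mathcal Q_{max}|$, and its second branch says only that $\log|\mathcal Q_{max}|$ never exceeds $n$ times the larger of the two quantities; it implies nothing in the direction you need. Moreover, even if that lower bound held, your own inequality $\beta(1-h(1/4))+\beta-2(1-\varepsilon)\beta\le 0$ simplifies to $2\varepsilon\le h(1/4)\approx 0.81$, i.e.\ $\varepsilon\lesssim 0.41$, but $\varepsilon$ is the mixer parameter of the randomly chosen $M$ and nothing in the construction guarantees it is that small. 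The correct lower bound to use is structural rather than parametric: by the definition of $\mathcal Q_4'$ in \eqref{eq:q_4'_definition}, $|\mathcal Q_{max}|\ge |\mathcal Q_4'|\ge \binom{|M|}{(1/2-\mu)\beta n}=\Omega^*\bigl(2^{\beta n\,h(1/2-\mu)}\bigr)$, and since $\mu\le 1/4$ we have $h(1/2-\mu)\ge h(1/4)>1-h(1/4)$. Then the fourth summand is at most $O^*\bigl(2^{\lambda n}\,q\cdot|\mathcal Q_{max}|\cdot 2^{\beta n\,h(1/4)}\bigr)\le O^*\bigl(2^{\lambda n}\,q\cdot|\mathcal Q_{max}|^2\bigr)=O^*(2^{n/2})$, which is what you want, with no appeal to smallness of $\varepsilon$ at all. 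Replacing your step with this argument closes the gap.
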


 \begin{proof}
  As it was shown the total running time does not exceed
 \[
     O^*\left(2^{\lambda n} \cdot q \left(|\mathcal{Q}_{max}|^2 + \frac{|\mathcal{Q}_{max}|^4}{p q} + (\frac{|\mathcal{Q}_{max}|^4}{p q} + 2^{\beta n h(\max\{\mu, 1/2-\mu\})})2^{\beta n(1 - h(1/4))}\right)\right) \; ,
 \]
 where the values of $p$, $q$, $|\mathcal Q_{max}|$ and $\mu$ correspond to the slowest iteration of the loop on line \ref{line:foreach_mu}.

 Since $p$ and $q$ can vary only with polynomial factors, here by them we will denote the upper and lower bounds on their values, so let $\log p = \frac{(1 - \varepsilon)}{2}\beta n$ and 
 $\log q = \frac{1}{2}n - \lambda n - 2\log |\mathcal{Q}_{max}|$.

 Let us estimate the logarithm of the first summand inside the parenthesis:
 \[
     \log (2^{\lambda n} \cdot q \cdot |\mathcal Q_{max}|^2) = \lambda n + \underbrace{\frac 1 2n - \lambda n - 2 \log |\mathcal Q_{max}|}_{\log q}  + 2 \log |\mathcal Q_{max}| = \frac 1 2 n.
 \]

 We~estimate the last term of the sum (the fourth one) assuming that $\mu \ge \frac 1 2 - \mu$, the other case can be shown similarly: 
 \[
     \log\left(2^{\beta n h(\mu) + \beta n(1 - h(1/4))}\right) = \beta n h(\mu) + \beta n - \beta n h(1/4) \le 2 \beta n \; .
 \] 

 Recall that we choose $\beta < 0.15$, then $2 \beta n \le 0.5 n$.

 The second summand is at~most the third one, so it~suffices to~estimate the logarithm of the third summand. Recall that that $\log |\mathcal Q_{max}| \leq n \cdot \max \{\frac{h(\gamma) - 3 \beta h(\gamma) + 2 \beta - 2 \varepsilon \beta + \beta h(\mu) + \beta h\left(1/2 - \mu\right)} {8}, \; (1 - \varepsilon) \beta\} $. If $\log |\mathcal Q_{max}| \leq (1 - \varepsilon) \beta n$, then the third summand is dominated by $\frac 1 2 (9 \beta - 2 \beta h(\frac 1 4))$ and substituting $\beta = 0.13$ yields value $\approx 0.48 < 0.5$. Therefore, this case is not considered further.
 \begin{align*}
     &\phantom{=}\log (2^{\lambda n} q \frac {|\mathcal Q_{max}|^4}{p q} 2^{\beta n(1 - h(1/4))})\\
     &=  \lambda n + 4 \log |\mathcal Q_{max}| - \underbrace{\frac{1 - \varepsilon} 2 \beta n}_{\log p} + \beta n - \beta n h(1/4)\\
     &= \lambda n + n \frac{h(\gamma) - 3 \beta h(\gamma) + 3 \beta -\varepsilon \beta + \beta h(\mu) + \beta h(1/2 - \mu) - 2 \beta h(1/4)}{2} \; .
 \end{align*}

 Having $(- \varepsilon \beta) \le 0$, $h(\mu) + h(1/2 - \mu) \le 2 h(1/4)$, this expression can be dominated by
 \[
 \lambda n + n \frac {h(\gamma) - 3 \beta h(\gamma) + 3 \beta} {2}.
 \]
 So, the total running time of the algorithm is $O^*(2^{n/2} + 2^{T(\alpha, \beta) \cdot n})$, where
 \begin{equation}
     T(\alpha, \beta) = \frac{1}{2} \left(h(\gamma) - 3 \beta h(\gamma) + 3 \beta + 2 \lambda \right) .
 \end{equation}

 \end{proof}

\begin{corollary} \label{corollary:T_alpha_beta}
    If for every $\alpha \in [0, 1]$, $T(\alpha, \beta(\alpha)) \le 0.5$, then Algorithm \ref{lst:main_algorithm} has running time $O^*\left(2^{0.5 n}\right)$.
\end{corollary}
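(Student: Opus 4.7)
The plan is to reduce the corollary to a direct application of \Cref{lemma:time_complexity_main_algo}, combined with the outer enumeration over $\alpha$. Recall from the beginning of the algorithm description that $\alpha = |S|/n$ is unknown, but there are only $n$ possible values, so Algorithm~\ref{lst:main_algorithm} is run once for each candidate $\alpha \in \{1/n, 2/n, \dots, 1\}$ with the corresponding $\beta(\alpha)$ defined by~\eqref{eq:beta_definition}. This outer loop contributes only a polynomial factor that is absorbed by the $O^*(\cdot)$ notation.

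For each fixed $\alpha$, \Cref{lemma:time_complexity_main_algo} bounds the running time by $O^*(2^{n/2} + 2^{T(\alpha,\beta(\alpha)) \cdot n})$. The hypothesis of the corollary is precisely that $T(\alpha,\beta(\alpha)) \le 1/2$ for every $\alpha \in [0,1]$, so the second term is at most $O^*(2^{n/2})$, and thus each run of the algorithm takes time $O^*(2^{n/2})$.

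Summing over the $O(n)$ values of $\alpha$ yields a total running time of $n \cdot O^*(2^{n/2}) = O^*(2^{n/2})$, as desired. The proof is essentially bookkeeping; there is no real obstacle, since all the nontrivial work is already contained in \Cref{lemma:time_complexity_main_algo}, and the definition of $\beta(\alpha)$ in~\eqref{eq:beta_definition} ensures that the function $T(\alpha,\beta(\alpha))$ to be verified in the hypothesis is well-defined for every $\alpha \in [0,1]$.
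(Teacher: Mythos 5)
Your proof is correct and follows the same direct route the paper intends: the corollary is an immediate consequence of Lemma~\ref{lemma:time_complexity_main_algo} together with the polynomial-factor overhead from enumerating the $O(n)$ candidate values of $\alpha$, which is absorbed by $O^*(\cdot)$. The paper itself leaves the corollary unproved precisely because, as you note, it is bookkeeping on top of that lemma.
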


We complete the analysis in Section \ref{sec:parameter_substitution}.

\subsection{Space Complexity}

The space usage is~dominated~by the following parts.

\begin{itemize}
    \item Constructing $\mathcal Q_1, \mathcal Q_2, \mathcal Q_3, \mathcal Q_4$ and $\mathcal Q_1', \mathcal Q_2', \mathcal Q_3', \mathcal Q_4'$ with $O^*(|\mathcal Q_{max}|)$ space (Line \ref{line:constructing_qs}).
    \item Finding $\mathcal L, \mathcal R$ using Lemma \ref{lemma:modified_4sum} with $O^*\left(|\mathcal Q_{max}| + \frac {|\mathcal Q_{max}|^4} {p q} + \binom{|M|}{\max\{\mu, 1/2-\mu\} |M|}\right)$ space (Line \ref{line:constucting_mathcal_l}).
    \item Solving WOV$(\mathcal L, \mathcal R)$ using Lemma \ref{lemma:WOV} with $O^*\left(|\mathcal L| + |\mathcal R| + 2^{\beta n}\right)$ space (Line \ref{line:solving_WOV}).
\end{itemize} 

For a specific iteration of the loop on line \ref{line:foreach_mu}, the space usage is at most the sum of the above expressions, where $|\mathcal L|, |\mathcal R| = O^*(\frac {|\mathcal Q_{max}|^4} {p q} + \binom{|M|}{\max\{\mu,1/2-\mu\}}|M|)$. The total space complexity of Algorithm \ref{lst:main_algorithm} can be bounded by the space usage of an iteration, requiring the most space.

Therefore, it suffices to~prove that $2^{\beta n}$, $\binom{|M|}{\max\{\mu,1/2-\mu\}|M|}$, $|\mathcal Q_{max}|$ and $\frac {|\mathcal Q_{max}|^4} {p q}$ are at most $O^*(2^{0.246n})$ on each iteration.

We can see that the following lemma holds.
\begin{lemma} \label{lemma:S_alpha_beta}
    If for every $\alpha \in [0, 1]$, $S(\alpha, \beta(\alpha)) \le 0.246$, then the total space usage of Algorithm \ref{lst:main_algorithm} does not exceed $O^*\left(2^{0.246 n}\right)$.
\end{lemma}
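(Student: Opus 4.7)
The plan is to define $S(\alpha,\beta)$ as the maximum of the four exponents appearing in the bullet list preceding the statement, taken over all admissible $\mu$ and $\varepsilon$ that can arise in a single iteration: namely $\beta$, $\frac{1}{n}\log\binom{|M|}{\max\{\mu,1/2-\mu\}|M|}$, $\frac{1}{n}\log|\mathcal Q_{max}|$, and $\frac{1}{n}\log(|\mathcal Q_{max}|^4/(pq))$. With this definition, showing each space-dominating quantity is at most $O^*(2^{S(\alpha,\beta)n})$ becomes a direct arithmetic exercise using the bounds already established in the previous subsections.

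First I would observe that only polynomially-sized objects persist across iterations: the chosen sets $M_L,M,M_R$, the mixer parameters $\varepsilon_L,\varepsilon,\varepsilon_R$, the current partition, and the loop indices $\mu$, $s$, $r$. The large structures $\mathcal Q_i,\mathcal Q_i'$ (line \ref{line:constructing_qs}) and $\mathcal L,\mathcal R$ (line \ref{line:constucting_mathcal_l}) are rebuilt and released on every pass of the $\mu$- and $s$-loops, and the call to WOV on line \ref{line:solving_WOV} likewise uses its stated bound in-place. Hence the total working space over the whole algorithm is within a polynomial factor of the peak per-iteration space, which is at most four times the maximum of the four bulleted quantities.

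Next I would bound each quantity. The first two are immediate: $\beta<0.15$ by \eqref{eq:beta_definition}, so $2^{\beta n} \le 2^{0.15n}$, and similarly $\binom{|M|}{\max\{\mu,1/2-\mu\}|M|} \le 2^{|M|} = 2^{\beta n} \le 2^{0.15n}$ by \eqref{eq:binomialestimate}, both safely below $2^{0.246n}$. The third quantity, $|\mathcal Q_{max}|$, is controlled directly by \eqref{eq:q_max}. For the fourth, substituting $\log p = \tfrac{(1-\varepsilon)\beta n}{2} \pm O(\log n)$ and $\log q = \tfrac{n}{2} - \lambda n - 2\log|\mathcal Q_{max}| \pm O(\log n)$ from the definitions of $p$ and $q$ yields
\[
\log\!\left(\frac{|\mathcal Q_{max}|^4}{pq}\right) = 6\log|\mathcal Q_{max}| + \lambda n - \frac{(1-\varepsilon)\beta n}{2} - \frac{n}{2} \pm O(\log n).
\]
Taking the maximum of these four exponents (divided by $n$) over the admissible range of $\mu$ and $\varepsilon$ defines $S(\alpha,\beta)$. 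Under the hypothesis $S(\alpha,\beta(\alpha)) \le 0.246$ each summand is at most $O^*(2^{0.246n})$, hence so is their sum, which is the peak space of the algorithm.

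The main step requiring care is the fourth bound, specifically verifying that $\log q$ genuinely takes the claimed form, i.e.\ that $\tfrac{n}{2} - \lambda n - 2\log|\mathcal Q_{max}| > 0$ so that the range $[q/2,q]$ contains a prime and the Section \ref{section:primes} sampling argument applies. This positivity inequality is equivalent to the time-complexity bound already proved in Lemma \ref{lemma:time_complexity_main_algo}, where the term $q\cdot|\mathcal Q_{max}|^2$ contributes exactly $2^{n/2}$ to the running time; so the condition holds automatically whenever the time-complexity analysis goes through. The remaining work—plugging in \eqref{eq:l1_size}--\eqref{eq:r4_size} to get a closed form of $S(\alpha,\beta)$ and numerically verifying $S(\alpha,\beta(\alpha))\le 0.246$—is routine and would be deferred to the parameter-substitution section together with the analogous time-complexity calculation.
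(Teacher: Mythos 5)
Your proposal is correct and in substance reproduces what the paper does: the paper treats this lemma as an immediate observation (``we can see that the following lemma holds'') supported by the preceding bulleted list and then formalized in Lemma~\ref{lemma:space_main_algorithm}, which bounds the per-iteration space by the same four quantities $2^{\beta n}$, $\binom{|M|}{\max\{\mu,1/2-\mu\}|M|}$, $|\mathcal Q_{max}|$, $\frac{|\mathcal Q_{max}|^4}{pq}$ you enumerate and isolates the last (after dropping $-\varepsilon\beta\le 0$ and using $h(\mu)+h(1/2-\mu)\le 2h(1/4)$) as the expression called $S(\alpha,\beta)$ in~\eqref{eq:space_complexity}. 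Your bookkeeping of $\log p$ and $\log q$ and the resulting expression $6\log|\mathcal Q_{max}| + \lambda n - \tfrac{(1-\varepsilon)\beta n}{2} - \tfrac{n}{2}$ matches the paper's computation exactly.

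The only stylistic difference is that you \emph{define} $S$ as the maximum of all four exponents over admissible $\mu,\varepsilon$, whereas the paper fixes $S$ to be (an upper bound on) the fourth exponent alone and verifies the other three are below $0.246$ separately; these yield the same hypothesis since the paper shows the first three terms are always $\le 0.246 n$, so the two maxima coincide whenever the condition holds. You also flag a detail that the paper leaves implicit, namely that $\log q = \tfrac n2 - \lambda n - 2\log|\mathcal Q_{max}|$ must be nonnegative for the prime-sampling step to make sense. Your argument that this positivity is forced by the running-time analysis is a bit loose (the identity $\lambda n + \log q + 2\log|\mathcal Q_{max}| = \tfrac n2$ is not by itself an inequality), but the conclusion is right: if $q<1$ then already $2^{\lambda n}|\mathcal Q_{max}|^2 > 2^{n/2}$, violating the $O^*(2^{n/2})$ time bound, so the positivity is subsumed by the parameter verification of Section~\ref{sec:parameter_substitution}; it would be cleaner to check $\lambda + \tfrac14\bigl(h(\gamma)(1-3\beta)+\beta(2-2\varepsilon+h(\mu)+h(1/2-\mu))\bigr)\le \tfrac12$ directly, which holds comfortably for both branches of~\eqref{eq:beta_definition}. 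Overall the proposal is a correct reconstruction, and the positivity remark is a useful addition the paper could have made explicit.
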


 \begin{lemma}\label{lemma:space_main_algorithm}
 	The total space usage of Algorithm \ref{lst:main_algorithm} is $O^*(2^{0.246n} + 2^{S(\alpha, \beta) \cdot n})$, where $S(\alpha, \beta)$ is defined as follows:
 \begin{align}
	 S(\alpha, \beta) = \frac 1 4 \left( 3 h(\gamma) - 9 \beta h(\gamma) + 6 \beta h(\frac 1 4) + 4 \beta - 2 + 4\lambda \right). \label{eq:space_complexity}
 \end{align}
 Here, we interpret $\gamma$ as $\gamma = \gamma(\alpha)$ and $\lambda$ as $\lambda(\alpha)$.
 \end{lemma}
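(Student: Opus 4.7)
The plan is to enumerate the space-dominating quantities arising in a single iteration of Algorithm~\ref{lst:main_algorithm} and to bound each either by $O^*(2^{0.246n})$ (giving the first term in the max) or by $O^*(2^{S(\alpha,\beta)n})$ (the second term). A preliminary remark is that the outer $2^{\lambda n}$ repetitions on line~\ref{line:repeating_whole_algorithm} and the inner loop over $s\in\mathbb Z_q$ on line~\ref{line:iterating_s2} both reuse memory across iterations, so only the peak space usage inside one iteration matters.

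First, I would isolate the four contributing terms from the three bullets opening this subsection: $|\mathcal Q_{max}|$, $|\mathcal Q_{max}|^4/(pq)$, $\binom{|M|}{\max\{\mu,1/2-\mu\}|M|}$, and $2^{\beta n}$. Three of these---the ``parameter-light'' ones---I would dispose of quickly: both $2^{\beta n}$ and $\binom{|M|}{\max\{\mu,1/2-\mu\}|M|}\le 2^{|M|}=2^{\beta n}$ are at most $2^{0.15n}$ by the definition of $\beta$ in~\eqref{eq:beta_definition}, and $|\mathcal Q_{max}|$ is controlled by~\eqref{eq:q_max} together with the concavity estimate $h(\mu)+h(1/2-\mu)\le 2h(1/4)$, which keeps the exponent comfortably under $0.246n$.

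The real work is to show $|\mathcal Q_{max}|^4/(pq)=O^*(2^{S(\alpha,\beta)n})$. My plan is to take logarithms and substitute the chosen $\log p=(1-\varepsilon)\beta n/2$ and $\log q=n/2-\lambda n-2\log|\mathcal Q_{max}|$, which after cancellation produces a $6\log|\mathcal Q_{max}|$ term. Inserting the first branch of~\eqref{eq:q_max}, again invoking $h(\mu)+h(1/2-\mu)\le 2h(1/4)$, and dropping the $\varepsilon\beta$ terms (whose sign weakens the upper bound so discarding them is safe) should collapse the exponent into exactly the expression $S(\alpha,\beta)n$ from~\eqref{eq:space_complexity}. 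The case where the second branch of~\eqref{eq:q_max} is active only makes $|\mathcal Q_{max}|$ smaller, so the ratio $|\mathcal Q_{max}|^4/(pq)$ is no larger and the same bound applies.

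The main obstacle is purely algebraic bookkeeping: tracking the coefficient $6/8=3/4$ that multiplies each term of~\eqref{eq:q_max}, confirming that the constants $-n/2$ coming from $\log q$ and $-\beta n/2$ coming from $\log p$ combine to yield the $-2$ and $+4\beta$ summands in $4S(\alpha,\beta)$, and verifying that both the entropy bound $h(\mu)+h(1/2-\mu)\le 2h(1/4)$ and the discarding of $\varepsilon$-dependent terms go in the correct direction for an upper bound. Everything else is direct substitution of the parameters already fixed by the choices of $p$, $q$, and by~\eqref{eq:L_size}--\eqref{eq:R_size}.
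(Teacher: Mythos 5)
Your decomposition into the four space-dominating quantities, the substitution of $\log p$ and $\log q$ to produce the $6\log|\mathcal Q_{max}|$ term, and the use of $h(\mu)+h(1/2-\mu)\le 2h(1/4)$ together with dropping the $-\varepsilon\beta$ terms all match the paper's proof step for step, so the main line of your argument is sound.

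One point, however, is reasoned incorrectly. You claim that when the second branch of \eqref{eq:q_max} is active, this ``only makes $|\mathcal Q_{max}|$ smaller, so the ratio $|\mathcal Q_{max}|^4/(pq)$ is no larger and the same bound applies.'' That is backwards: the second branch is the one that governs precisely when $(1-\varepsilon)\beta$ \emph{exceeds} the first-branch expression, so the upper bound on $\log|\mathcal Q_{max}|$ is then $(1-\varepsilon)\beta n$, which is \emph{larger} than what the first branch would give. Since $\log\bigl(|\mathcal Q_{max}|^4/(pq)\bigr)=6\log|\mathcal Q_{max}|-\log p-\tfrac{n}{2}+\lambda n$ is increasing in $\log|\mathcal Q_{max}|$, you cannot conclude that the ratio is no larger, and plugging the first-branch formula into the derivation of $S(\alpha,\beta)$ is no longer justified. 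The paper instead handles this case with a direct numeric bound: substituting $\log|\mathcal Q_{max}|\le(1-\varepsilon)\beta n$ gives $\log\bigl(|\mathcal Q_{max}|^4/(pq)\bigr)\le\tfrac{11(1-\varepsilon)\beta-1}{2}n+\lambda n$, which for $\beta=0.13$ (and the tiny $\lambda$ in this regime) is about $0.215n<0.246n$, so it is absorbed by the first term $O^*(2^{0.246n})$ of the lemma rather than by $2^{S(\alpha,\beta)n}$. You should replace your sentence with that direct calculation; with that fix the proof is complete and agrees with the paper's.
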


 \begin{proof}
 As we have already discussed, it is sufficient to estimate the following values: $2^{\beta n}$, $\binom{|M|}{\max\{\mu,1/2-\mu\}|M|}$, $|\mathcal Q_{max}|$ and $\frac {|\mathcal Q_{max}|^4} {p q}$, 
 { for the iteration of the loop in line \ref{line:foreach_mu} that requires the most space.}
 To do that, we estimate their logarithms.
 The first one is way smaller than $0.246 n$, since $\beta < 0.15$.
 The second one is equal to $\beta n h(\mu) \le \beta n \le 0.246 n$.

 Recall that that $\log |\mathcal Q_{max}| \leq n \cdot \max \{\frac{h(\gamma) - 3 \beta h(\gamma) + 2 \beta - 2 \varepsilon \beta + \beta h(\mu) + \beta h\left(1/2 - \mu\right)} {8}, \; (1 - \varepsilon) \beta\} $. If $\log |\mathcal Q_{max}| \leq (1 - \varepsilon) \beta n$, then the third value is dominated by $\frac 1 2 (11 \beta - 1)$ and substituting $\beta = 0.13$ yields value $\approx 0.215 < 0.246$. Therefore, this case is not considered further.

 \[
 \log |\mathcal Q_{max}| =
 \frac{1}{8} \left(h(\gamma) - 3 \beta h(\gamma) + 2 \beta - 2 \varepsilon \beta + \beta h\left(\mu\right) + \beta h(\frac 1 2 - \mu)\right) n \; .
 \]
 Having $(-2 \varepsilon \beta) \le 0$, $h(\mu) + h(1/2 - \mu) \le 2$ and $\beta < 0.15$, we get
 \[
 \log |\mathcal Q_{max}| \le \frac 1 8 (1 + \beta) n < 0.246 n.
 \]
 It is left to estimate $\log \frac {|\mathcal Q_{max}|^4} {p q}$.
 \begin{align*}
     &\log \frac {|\mathcal Q_{max}|^4} {p q} = 4 \log |\mathcal Q_{max}| -\underbrace{\frac {1 - \varepsilon} 2 \beta n}_{p} - \underbrace{\frac 1 2 n + 2 \log |\mathcal Q_{max}| + \lambda n}_{q} = \\
     &= \lambda n + n \frac{3 h(\gamma) - 9 \beta h(\gamma) + 3 \beta h(\mu) + 3 \beta h(1/2 - \mu) + 4 \beta - 4 \varepsilon \beta - 2}{4}
 \end{align*}
 Having $(-4 \varepsilon \beta) \le 0$ and $h(\mu) + h(1/2 - \mu) \le 2 h(1/4)$, we dominate this expression by
 \[\lambda n + \frac {3 h(\gamma) - 9 \beta h(\gamma) + 6 \beta h(1/4) + 4 \beta - 2} {4} \; .\]

 That concludes the proof.
 \end{proof}

\begin{corollary} \label{corollary:S_alpha_beta}
    If for every $\alpha \in [0, 1]$, $S(\alpha, \beta(\alpha)) \le 0.246$, then the total space usage of Algorithm \ref{lst:main_algorithm} does not exceed $O^*\left(2^{0.246 n}\right)$.
\end{corollary}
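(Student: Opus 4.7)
The plan is to derive this corollary as a direct consequence of Lemma~\ref{lemma:space_main_algorithm}, which has just established that the space usage of Algorithm~\ref{lst:main_algorithm} for a fixed value of $\alpha$ is $O^*(2^{0.246n} + 2^{S(\alpha,\beta(\alpha))\cdot n})$. The only remaining work is to translate the quantitative hypothesis ``$S(\alpha,\beta(\alpha)) \le 0.246$ for every $\alpha$'' into an unconditional $O^*(2^{0.246n})$ bound.

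First, I would recall that the outermost loop of the algorithm enumerates only polynomially many candidate values for $|S| = \alpha n$, namely $\alpha \in \{1/n, 2/n, \dotsc, 1\}$. Since these iterations are independent and, crucially, can reuse memory (each iteration freshly constructs its families $\mathcal Q_i, \mathcal Q_i'$, its $\mathcal L, \mathcal R$, and then discards them before the next $\alpha$ is tried), the total space footprint is at most the maximum space used by any individual iteration, with only polynomial overhead. Thus it suffices to bound the per-iteration space uniformly in $\alpha$.

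Next, I would apply Lemma~\ref{lemma:space_main_algorithm} to each such $\alpha$: the per-iteration space is $O^*(2^{0.246n} + 2^{S(\alpha,\beta(\alpha))\cdot n})$. By hypothesis, $S(\alpha,\beta(\alpha)) \le 0.246$ for all $\alpha \in [0,1]$, so $2^{S(\alpha,\beta(\alpha))\cdot n} \le 2^{0.246n}$ and the second term is absorbed into the first. Taking the maximum over the polynomially many enumerated values of $\alpha$ then gives the claimed $O^*(2^{0.246n})$ bound.

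No real obstacle is expected here, since the corollary is essentially a packaging of Lemma~\ref{lemma:space_main_algorithm}; the only subtle point is to explicitly argue that the outer enumeration over $\alpha$ does not accumulate space (rather than, say, time), which is immediate from the fact that each iteration's workspace is deallocated before the next begins. The genuine content lies in the forthcoming Section~\ref{sec:parameter_substitution}, where the analytic inequality $S(\alpha,\beta(\alpha)) \le 0.246$ must be verified for the specific choice of $\beta$ in~\eqref{eq:beta_definition}; that verification is outside the scope of this corollary.
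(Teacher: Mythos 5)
Your proposal is correct and takes essentially the same route as the paper, which treats this corollary as an immediate consequence of Lemma~\ref{lemma:space_main_algorithm}: plug the hypothesis $S(\alpha,\beta(\alpha)) \le 0.246$ into the per-iteration bound $O^*(2^{0.246n} + 2^{S(\alpha,\beta)\cdot n})$, observe that the second term is then dominated by the first, and note that enumerating the polynomially many values of $\alpha$ reuses rather than accumulates workspace, so the overall space is the maximum over iterations.
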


We complete the analysis in the following section.

\subsection{Parameter Substitution} \label{sec:parameter_substitution}

\begin{restatable}{lemma}{maxpointtime}[Maximum point for function of time complexity]\label{lemma:maximum_point_time}
    The function $T(\alpha, \beta)$ (see~\eqref{eq:time_complexity}) satisfies $T(\alpha, \beta) \le T(0.5, 0.13)$ when $\beta$ is defined as~in~\eqref{eq:beta_definition}.
\end{restatable}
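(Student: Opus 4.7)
The plan is to split the verification into two regimes according to the piecewise definition~\eqref{eq:beta_definition} of $\beta(\alpha)$ and handle each one separately.

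\textbf{Regime I ($\alpha \notin [0.45, 0.55]$, so $\beta = 0$).} Direct substitution into~\eqref{eq:gamma_definition} and~\eqref{eq:lambda} collapses everything: $\gamma = \alpha$ and $\lambda = 0$, whence $T(\alpha, 0) = h(\alpha)/2$. Since $h$ is concave and symmetric about $1/2$ and $|\alpha - 1/2| \ge 0.05$ in this regime, I can bound $h(\alpha) \le h(0.45) < 1$, and a one-line numerical check gives $h(0.45)/2 < 1/2 = T(0.5, 0.13)$, where the value $T(0.5, 0.13) = 1/2$ is computed below.

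\textbf{Regime II ($\alpha \in [0.45, 0.55]$, so $\beta = 0.13$).} First I would verify $T(0.5, 0.13) = 1/2$ directly: at $\alpha = 0.5$ one has $\gamma = (0.5 - 0.195)/0.61 = 0.5$ so $h(\gamma) = 1$, and $\lambda = h(0.39) - \tfrac12 h(0.39) - \tfrac12 h(0.39) = 0$, giving $T = \tfrac12(0.61 + 0.39) = 1/2$. Next I would exploit the symmetry $T(\alpha, 0.13) = T(1-\alpha, 0.13)$: the substitution $\alpha \mapsto 1 - \alpha$ sends $\gamma$ to $1 - \gamma$ (leaving $h(\gamma)$ unchanged) and swaps the two summands in~\eqref{eq:lambda} (leaving $\lambda$ unchanged). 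This reduces the claim to proving $T(\alpha, 0.13) \le 1/2$ on $[0.5, 0.55]$.

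The main obstacle is this last monotonicity step, because $T(\alpha, 0.13)$ decomposes into a sum of a concave and a convex function of $\alpha$. On the one hand, $h(\gamma(\alpha))$ is concave in $\alpha$ (concave $h$ composed with the affine map $\gamma$) and peaks at $\alpha = 0.5$. On the other hand, $\lambda(\alpha)$ is convex in $\alpha$: each summand of the form $\alpha\, h(\tfrac{3\beta/2}{\alpha})$ is the perspective of a concave function, hence concave, and $\lambda$ subtracts two such summands from a constant. My plan is to compute $T'(\alpha)$ in closed form using the identity $h(u) - u\, h'(u) = -\log_2(1-u)$, which turns the derivatives of the perspective terms into simple logarithms. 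Both pieces of $T'$ vanish at $\alpha = 0.5$, and for $\alpha > 0.5$ the contribution from $h'(\gamma)$ is strictly negative; it remains to show that this negative contribution dominates the positive contribution from $\lambda'$ throughout $(0.5, 0.55]$. I would anchor this by a local Taylor expansion at $\alpha = 0.5$, which gives $T''(0.5) < 0$ with a concrete quantitative margin, together with a direct numerical check at the endpoint (e.g. $T(0.55, 0.13) \approx 0.4985 < 1/2$) to rule out any dip-and-rebound across the short interval. Symmetry then extends the conclusion to $[0.45, 0.5]$ and completes the proof.
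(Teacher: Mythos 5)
Your set-up is sound and mostly mirrors the paper's: you correctly reduce Regime~I to the trivial inequality $h(\alpha)/2 \le h(0.45)/2 < 1/2$, correctly compute $T(0.5,0.13)=1/2$, and the $\alpha \mapsto 1-\alpha$ symmetry you use to halve the interval is a nice observation the paper does not make explicit. The decomposition of $T$ into a concave part $\tfrac{1-3\beta}{2}h(\gamma(\alpha))$ and a convex part $\lambda(\alpha)$ (via the perspective of $h$) is also correct.

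The gap is in the last step of Regime~II, and you half-notice it yourself. You want to conclude that $T(\alpha,0.13)\le 1/2$ on $(0.5,0.55]$, but the facts you propose to establish---$T'(0.5)=0$, $T''(0.5)<0$ (a local Taylor expansion), and $T(0.55,0.13)<1/2$ (an endpoint check)---do \emph{not} rule out the ``dip-and-rebound'' you flag as the danger. A function can vanish to second order in just this way at $\alpha=0.5$, sag, rise above $1/2$, and drop again by $\alpha=0.55$; nothing in the three facts you cite excludes that. Because $T$ is concave-plus-convex, you genuinely cannot appeal to global concavity for free, so a pointwise sign statement at $\alpha=0.5$ and one endpoint value is not a proof.

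What is needed---and what the paper actually does---is a sign statement for $T''$ on the \emph{entire} interval, not just at $\alpha=0.5$. The paper writes $\partial^2 T/\partial\alpha^2$ as a ratio of two explicit polynomials in $\alpha$ (with $\beta=0.13$ fixed), observes that all real roots of both numerator and denominator lie outside $(0.2,0.5)$, and hence $T''$ has a single sign on $[0.45,0.5]$; evaluating it at one point shows that sign is negative. Concavity on the whole interval together with $T'(0.5)=0$ then gives the maximum at $\alpha=1/2$. If you replace your Taylor-plus-endpoint plan with this interval-wide sign analysis of $T''$ (or, equivalently, show directly that $T'(\alpha)<0$ for all $\alpha\in(0.5,0.55]$, not just near $0.5$), the proof closes.
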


 \begin{proof}[Proof of Lemma \ref{lemma:maximum_point_time}]
   Note that
   \begin{align*}
   T(\alpha,\beta) &= \frac{1}{2}\left(h(\gamma)-3\beta h(\gamma)+3\beta + 2(\alpha h\left(\frac{3\beta}{2\alpha}\right) + (1-\alpha) h\left(\frac{3\beta}{2(1-\alpha)}\right) - h(3\beta)))\right),\\
 \frac{\partial^2 T(\alpha,\beta)}{\partial \alpha^2} &= 
     \frac{18\alpha^2 \beta-2\alpha^2-18\alpha \beta-9\beta^2+2\alpha+6\beta}{\alpha(2\alpha-3\beta)(1-\alpha)(-2+3\beta+2\alpha)\ln(2)}.
     \end{align*}

     Consider two cases (the case $\alpha>0.5$ is symmetrical).

     \begin{enumerate}
         \item $0.45 \le \alpha \le 0.5$.

            In this regime, we have $\beta = 0.13$, the roots of the polynomials of the denominator and numerator are either $< 0.2$ or $\geq 0.5$. By checking that $\frac{\partial T(\alpha,\beta)}{\partial \alpha}|_{a=\frac{1}{2}}=0$ and $\frac{\partial^2 T(\alpha,\beta)}{\partial \alpha^2}|_{a=\frac{1}{2}}=0$, we verify that $a=\frac{1}{2}$ is indeed the maximum and have
         \[T(\alpha,0.13) \leq T(0.5,0.13) \, .\]

         \item $\alpha < 0.45$.

         In~this case, $\beta = 0$. Then all the roots of the polynomials of the denominator and numerator are either~$0$ or~$1$.  By checking that $\frac{\partial^2 T(\alpha,\beta)}{\partial^2 \alpha}|_{\alpha \in [0, 0.45]}<0$ and  $\frac{\partial T(\alpha,\beta)}{\partial \alpha}|_{\alpha \in [0, 0.45]} > 0$, we get that
         \[T(\alpha,0)\leq T(0.45,0) \leq T(0.5, 0.13) \, .\]
     \end{enumerate}
 \end{proof}

\begin{restatable}{lemma}{maxpointspace}[Maximum point for function of space complexity]\label{lemma:maximum_point_space}
    The function $S(\alpha, \beta)$ (see~\eqref{eq:space_complexity}) satisfies $S(\alpha, \beta) \le S(0.5, 0.13)$ when $\beta$ is defined as~in~\eqref{eq:beta_definition}.
\end{restatable}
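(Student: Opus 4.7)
The plan is to mirror the proof of Lemma~\ref{lemma:maximum_point_time}, splitting the analysis into the two cases determined by the piecewise definition~\eqref{eq:beta_definition} of $\beta(\alpha)$.

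In the first case, $\alpha \in [0.45, 0.55]$ and $\beta = 0.13$. I would first observe a symmetry: under $\alpha \mapsto 1 - \alpha$, the quantity $\gamma = (\alpha - 3\beta/2)/(1 - 3\beta)$ is sent to $1 - \gamma$, so $h(\gamma)$ is invariant; and from the definition~\eqref{eq:lambda}, $\lambda(\alpha, \beta)$ is manifestly symmetric in $\alpha \leftrightarrow 1 - \alpha$ as well. Hence $S(\cdot, 0.13)$ is symmetric about $\alpha = 1/2$, making $\alpha = 1/2$ a critical point. It then suffices to show $\partial^{2} S/\partial \alpha^{2} \le 0$ on $[0.45, 0.55]$. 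As in the proof of Lemma~\ref{lemma:maximum_point_time}, I would write this second derivative as a rational function in $\alpha$ after substituting $\beta = 0.13$, and then control its sign by localizing the real roots of its numerator and denominator polynomials (verifying they lie outside $[0.45, 0.55]$).

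In the second case, $\alpha \in [0, 0.45) \cup (0.55, 1]$ and $\beta = 0$. Then $\gamma(\alpha) = \alpha$ and $\lambda(\alpha, 0) = h(0) - \alpha \cdot h(0) - (1-\alpha)\cdot h(0) = 0$, so formula~\eqref{eq:space_complexity} collapses to $S(\alpha, 0) = \tfrac{1}{4}(3 h(\alpha) - 2)$. Since $h$ is concave with maximum at $1/2$, this expression is maximized on $[0, 0.45] \cup [0.55, 1]$ at the boundary points $\alpha = 0.45$ and $\alpha = 0.55$, which give equal values by the symmetry of $h$. It then suffices to check numerically that $S(0.45, 0) = (3 h(0.45) - 2)/4 \le S(0.5, 0.13)$; the right-hand side can be evaluated directly since $\gamma(0.5) = 1/2$ yields $h(\gamma) = 1$ and $\lambda(0.5, 0.13) = 0$, reducing the comparison to a one-line estimate involving $h(0.45)$ and $h(1/4)$.

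The main obstacle is the sign verification in Case~1: the explicit second derivative is unwieldy because $\lambda$ contains the composite terms $\alpha h(3\beta/(2\alpha))$ and $(1-\alpha) h(3\beta/(2(1-\alpha)))$, and $S$ also contributes the factor $h(\gamma)$, yielding a rational function with several potential sign-changing roots. However, because $\beta = 0.13$ is a fixed constant, this reduces to the same type of root localization carried out for $T$ in Lemma~\ref{lemma:maximum_point_time}, and the remaining numerical comparison in Case~2 is immediate.
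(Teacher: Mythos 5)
Your proposal is correct and follows essentially the same route as the paper: split according to the piecewise definition of $\beta$, show concavity of $S$ in $\alpha$ via root localization of the second derivative for $\beta=0.13$, and handle $\beta=0$ by the simplified form $S(\alpha,0)=\tfrac14(3h(\alpha)-2)$ with a boundary check at $\alpha=0.45$. The only (minor, cosmetic) difference is that you establish criticality at $\alpha=1/2$ by the symmetry $\alpha\leftrightarrow 1-\alpha$ of $h(\gamma)$ and $\lambda$, whereas the paper verifies $\partial S/\partial\alpha|_{\alpha=1/2}=0$ directly; both reach the same conclusion by the same second-derivative sign argument.
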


 \begin{proof}[Proof of Lemma \ref{lemma:maximum_point_space}]
 Recall that
 \begin{multline*}
 	S(\alpha, \beta) = \frac{1}{4} (3 h(\gamma) - 9 \beta h(\gamma) + 6 \beta h\left(\frac{1}{4}\right) + 4 \beta - 2 + \\ 4(\alpha h\left(\frac{3\beta}{2 \alpha}\right) + (1 - \alpha)h\left(\frac{3\beta}{2(1 - \alpha)}\right) - h(3 \beta))).
 \end{multline*}

     Consider two cases ($\alpha>0.5$ is symmetrical). 
   
     \begin{enumerate}
         \item $0.45 \leq \alpha \leq 0.5$. Then, $\beta = 0.13$. 

     \[\frac{\partial^2 S(\alpha,\beta)}{\partial \alpha^2} =
 -\frac{3(7\alpha^2 \beta-\alpha^2-7\alpha \beta-3\beta^2+\alpha+2\beta)}{\alpha(\alpha-1)(2\alpha-3\beta)(3\beta+2\alpha-2)\ln(2)}\]

 One can check that all the roots of the numerator and denominator
 are either $<0.2$ or $>0.5$. As $\frac{\partial^2 S(\alpha,\beta)}{\partial \alpha^2}|_{\alpha=\frac{1}{2}} < 0$, we conclude that $S$ is concave on the interval $[0.45,0.5]$. The only thing left to do is to verify that $\frac{\partial S(\alpha,\beta)}{\partial \alpha}|_{\alpha=\frac{1}{2}}=0$ which implies that $S(\alpha,0.13) \leq S(\frac{1}{2},0.13)$ if $0.45 \leq \alpha\leq 0.5$.

         \item $\alpha<0.45$. Then, $\beta = 0$.

         \[S(\alpha,0) = -\frac{1}{4}(3\alpha \log(\alpha)-3\alpha\log(1-\alpha)+2\log(2)+3 \log(1-\alpha))\]

 \[\frac{\partial^2 S(\alpha,0)}{\partial \alpha^2} = -\frac{3}{4\alpha (1-\alpha) \ln(2)} < 0\]

 Checking that $\frac{\partial S(\alpha,0)}{\partial \alpha}|_{\alpha \in [0, 0.45]} > 0$ we get that it is maximal at $\alpha = 0.45$ and $S(\alpha,0) \leq S(0.45,0) \leq S(0.5, 0.13)$. This concludes the proof.
     \end{enumerate}
 \end{proof}

Hence, the time and space complexity do not exceed $T(0.5, 0.13)$ and $S(0.5, 0.13)$, respectively (for any $\alpha$ and $\beta$ as in~\eqref{eq:beta_definition}). 
Recall that $\lambda = h(3\beta) - \alpha h(\frac{3\beta}{2 \alpha}) - (1 - \alpha)h(\frac{3\beta}{2(1 - \alpha)})$, 
and $\gamma = \frac{\alpha - \frac 3 2 \beta}{1 - 3\beta}$, see Equation \eqref{eq:gamma_definition}.
By~plugging $\alpha = 0.5, \beta = 0.13$, we~get the following values for the parameters:
\begin{align*}
    \gamma &= 0.5,\\
    h(\gamma) &= 1,\\
    \lambda &= 0, \\
    T(0.5, 0.13) &= \frac 1 2 \cdot (1 - 3 \cdot 0.13 + 3 \cdot 0.13) = \frac 1 2, &\\
    S(0.5, 0.13) & \le  \frac 1 4 \cdot (3 - 9 \cdot 0.13 + 6 \cdot 0.13 \cdot 0.812 + 4 \cdot 0.13 - 2) \le 0.246 .&
\end{align*}

Combining statements \ref{corollary:T_alpha_beta} and \ref{corollary:S_alpha_beta}, we get that the algorithm runs in time  $O^*\left(2^{0.5 n}\right)$ and space $O^*\left(2^{0.246 n}\right)$, which completes the proof of Theorem \ref{theorem:main_theorem}.

\subsection{Algorithm for Weighted Orthogonal Vectors} \label{sec:wov_lemmas}

 \WOV*
 \begin{definition}[Orthogonal Vectors, OV] \label{definition:ov}
     Given families of vectors $\mathcal A, \mathcal B \subseteq 2^{[d]}$. Decide if there exists a pair $A \in \mathcal A, B \in \mathcal B$ such that $A \cap B = \emptyset$.
 \end{definition}

 \begin{lemma} \label{lemma:OV} 
     For any $\mu \in [0, \frac 1 2]$, there is a Monte-Carlo algorithm that, given $\mathcal A \subseteq \binom{[d]}{\mu d}$ and $\mathcal B \subseteq \binom{[d]}{(1/2 - \mu)d}$, solves $\operatorname{OV}(\mathcal{A}, \mathcal{B})$ in time
 \[
     \widetilde O\left((|\mathcal A| + |\mathcal B|) 2^{d(1-h(1/4))}\right)
 \]
 and space $\widetilde O\left(|\mathcal A| + |\mathcal B| + 2^d\right)$.

 \end{lemma}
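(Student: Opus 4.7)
The plan is to reduce to the orthogonal vectors algorithm of Nederlof and W\k{e}grzycki, verifying that the parameter change from the symmetric $\mu = 1/4$ case handled there to a general $\mu \in [0, 1/2]$ does not affect the asymptotic running time. First I would recall the NW approach: enumerate "witness" subsets $T \in \binom{[d]}{d/4}$, where the key quantity $2^{d(1-h(1/4))} \approx 2^d/\binom{d}{d/4}$ (up to polynomial factors) counts the number of effective witnesses per input item. For each $T$, the algorithm hashes the $A$'s by $A \cap T$ and the $B$'s by $B \cap T$, then looks for disjoint matches by meet-in-the-middle on $[d] \setminus T$. Under the constraint $|A| + |B| = d/2$, a fixed disjoint pair $(A^\star, B^\star)$ produces a compatible profile with respect to a random witness with probability $\Omega^*(1)$, which is boosted to constant by standard repetition (Section~\ref{sec:amplification}).

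Next I would verify that the running time bound holds for arbitrary $\mu \in [0, 1/2]$. The central observation is a double-counting argument: the number of triples $(A, B, T)$ with $A \in \mathcal A$, $B \in \mathcal B$, $T \in \binom{[d]}{d/4}$, and $T$ compatible with the disjointness of $(A, B)$ is bounded by $|\mathcal A| \cdot |\mathcal B| \cdot \binom{d}{d/4}$ up to polynomial factors, independently of $\mu$, as long as $|A| + |B| = d/2$. Averaging over the $\binom{d}{d/4}$ witnesses gives the per-witness workload of $\widetilde O((|\mathcal A| + |\mathcal B|) \cdot 2^{d(1-h(1/4))})$. The concavity identity $h(\mu) + h(1/2 - \mu) \le 2 h(1/4)$ shows that no intermediate structure grows larger under an asymmetric split than in the symmetric case. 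The $2^d$ term in the space bound comes from the auxiliary array indexed by subsets of $[d]$ used during the hashing stage.

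The main obstacle will be the endpoint regimes $\mu \to 0$ and $\mu \to 1/2$, where one family consists of nearly-empty sets and the other of sets of size close to $d/2$. In these cases the binomial $\binom{d}{\mu d}$ degenerates and the meet-in-the-middle step becomes unbalanced, so one must either argue directly (for tiny $\mu$, the sets in $\mathcal A$ are so small that scanning $\mathcal B$ against each $A \in \mathcal A$ in a bucketed fashion stays within the budget) or rebalance the partition of $[d]$ before invoking the NW machinery. Once these extremal regimes are absorbed, the uniform bound stated in the lemma follows, and the space accounting is immediate from the sizes of the input and the single auxiliary table of length $2^d$.
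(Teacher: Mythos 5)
Your high-level plan --- invoke the Nederlof--W\k{e}grzycki algorithm and argue that the restriction to the symmetric split can be lifted --- matches the paper's strategy, but the proposal has a genuine gap precisely where the proof needs to do work. The paper does not re-derive or re-analyze the NW algorithm from scratch; it observes that the restriction $\sigma \in [0.4, 0.6]$ in NW's Theorem~6.1 is an artifact of one specific technical estimate, their Lemma~B.1, which bounds
\[
\min_x \left\{ \frac{\binom{1-\lambda\sigma}{x-\lambda\sigma}_n + \binom{1-(1-\sigma)\lambda}{x}_n}{\binom{1-\lambda}{x-\lambda\sigma}_n} \right\},
\]
and the entire content of the paper's proof is a replacement lemma (Lemma~\ref{lemma:b1_alternative}) showing this quantity stays at most $2^{n(1-h(1/4))}\cdot n^{O(1)}$ for $\lambda = 1/2$ and \emph{all} $\sigma \in [0,1]$. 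Proving that requires picking an explicit interior optimizer ($x = \frac12 + \log_2(3)(\mu - \tfrac14)$ with $\mu = \sigma/2$), splitting the exponent into two functions $f_1, f_2$, computing second derivatives to establish concavity in $\mu$, and verifying the critical point and value. Your proposal never identifies this bottleneck, and the two ingredients you do offer don't close it.

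Concretely: the double-counting bound ``number of compatible triples $(A,B,T)$ is at most $|\mathcal A|\cdot|\mathcal B|\cdot\binom{d}{d/4}$'' is trivially true without any compatibility condition, and dividing it by $\binom{d}{d/4}$ gives $|\mathcal A|\cdot|\mathcal B|$ per witness --- not the claimed $(|\mathcal A|+|\mathcal B|)\cdot 2^{d(1-h(1/4))}$, so the arithmetic in the proposal doesn't produce the stated running time. And the concavity inequality $h(\mu) + h(1/2-\mu) \le 2h(1/4)$, while correct and used elsewhere in the paper, is not by itself what makes the NW optimization balance out for general $\mu$; the expression in Lemma~B.1 involves a $\min_x$ over a ratio of three differently-parameterized binomial coefficients and does not simplify to that inequality. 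Finally, the endpoint regimes $\mu \to 0$ and $\mu \to 1/2$ that you flag as the ``main obstacle'' are in fact handled uniformly in the paper's Lemma~\ref{lemma:b1_alternative} by the same concavity argument, so no separate rebalancing is needed; but one does need the analytic lemma to see this, which is exactly the piece missing from the proposal.
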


 \begin{proof}
     In case $\mu \in [0.2, 0.3]$, this becomes exactly Theorem~6.1 from~\cite{DBLP:conf/stoc/NederlofW21} applied for $\alpha = \frac 1 2, \sigma = 2 \mu$.
     The restriction $\sigma \in [0.4, 0.6]$ in that theorem appears due to the following lemma.

 \begin{lemma}[Lemma B.1 from \cite{DBLP:conf/stoc/NederlofW21}]
     \label{lemma:b1}
     For large enough $n$ and $\lambda \in [0.4, 0.5]$ and $\sigma \in [0.4, 0.6]$, 
     the following inequality holds:
     \[
         \min_x \left\{
              \frac{\binom{1-\lambda\sigma}{x - \lambda\sigma}_n +
              \binom{1-(1-\sigma)\lambda}{x}_n}{\binom{1-\lambda}{x -
              \lambda\sigma}_n}
         \right\} \le
         2^{n(1/2 + \lambda - h(\lambda/2))} n^{O(1)}.
     \] 
     Here, $\binom{\alpha}{\beta}_n$ denotes $\binom{\alpha n}{\beta n}$.
 \end{lemma}

 We need that condition to hold for any $\sigma \in [0, 1]$ when $\lambda = 0.5$.
 For this reason, we adjust the statement of the lemma.

 \begin{lemma}\label{lemma:b1_alternative}
     For large enough $n$ and $\sigma \in [0, 1]$ the following inequality holds:
     \[
         \min_x \left\{
              \frac{\binom{1-\sigma/2}{x - \sigma/2}_n +
              \binom{1-(1-\sigma)/2}{x}_n}{\binom{1/2}{x -
              \sigma/2}_n}
         \right\} \le
         2^{n(1 - h(1/4))} n^{O(1)}.
     \]
 \end{lemma}

 \begin{proof}
     We will use $\mu = \frac{\sigma}{2}$, $\mu \in [0, 0.5]$ to simplify the calculations.
    \[
         \min_x \left\{
              \frac{\binom{1-\mu}{x - \mu}_n +
              \binom{\frac{1}{2}+\mu}{x}_n}{\binom{1/2}{x -
              \mu}_n}
         \right\} = O^*(\min_x \{2^{\max \{ (1-\mu)h(\frac{x-\mu}{1-\mu}) , (\frac{1}{2}+\mu)h(\frac{2x}{1+2\mu})\}-\frac{1}{2}h(2x-2\mu)}\}) \leq
     \]

    (we use $x = \frac{1}{2} + \log(3)(\mu-\frac{1}{4})$ as a good enough approximation of the minimum)
    \[\leq O^*(2^{max \{ (1-\mu)h(\frac{( \frac{1}{2} + \log(3)(\mu-\frac{1}{4}))-\mu}{1-\mu}) , (\frac{1}{2}+\mu)h(\frac{2( \frac{1}{2} + \log(3)(\mu-\frac{1}{4}))}{1+2\mu})\}-\frac{1}{2}h(2( \frac{1}{2} + \log(3)(\mu-\frac{1}{4}))-2\mu)})\]  

     Let:
    \[f_1 = (1-\mu)h\left(\frac{\left( \frac{1}{2} + \log(3)\left(\mu-\frac{1}{4}\right)\right)-\mu}{1-\mu}\right) -\frac{1}{2}h\left(2\left( \frac{1}{2} + \log(3)\left(\mu-\frac{1}{4}\right)\right)-2\mu\right) \; ,\]
 \[f_2 =  \left(\frac{1}{2}+\mu\right)h\left(\frac{2\left( \frac{1}{2} + \log(3)\left(\mu-\frac{1}{4}\right)\right)}{1+2\mu}\right) -\frac{1}{2}h\left(2\left( \frac{1}{2} + \log(3)\left(\mu-\frac{1}{4}\right)\right)-2\mu\right) \; .\]

 We will show that both of these functions have a maximum at $\mu = \frac{1}{4}$. Indeed,
 \[\frac{\partial^2 f_1}{\partial \mu^2} = \frac{(3\log{3}-2)^2}{(1-\mu)(4\log{3}\mu - 4\mu - \log{3}+2)(4\log(3)\mu-\log(3) -2 )} \; ,\]
 \[\frac{\partial^2 f_2}{\partial \mu^2} = -\frac{8(\log{3}-1)^2}{(4\log(3)\mu - 4\mu - \log{3}+2)(4\log(3)\mu-4\mu-\log(3) )} \; .\]

 By analyzing the roots of the polynomials in the denominators one can show that for $0 \leq \mu \leq \frac{1}{2}: \frac{\partial^2 f_1}{\partial \mu^2}<0$ and $ \frac{\partial^2 f_2}{\partial \mu^2}<0$, which means that there is at most one maximum for each of the functions on this interval. Now we just need to notice that
 \[ \frac{\partial f_1}{\partial \mu}|_{\mu=\frac{1}{2}} =\frac{\partial f_2}{\partial \mu}|_{\mu=\frac{1}{2}} = 0\]

 and $f_1 |_{mu=\frac{1}{2}} = f_2 |_{mu=\frac{1}{2}} = 1 - h(\frac{1}{4})$. This concludes the proof.
 \end{proof}

 Now we can apply the proof of Theorem 6.1 from~\cite{DBLP:conf/stoc/NederlofW21} with Lemma \ref{lemma:b1} replaced by Lemma \ref{lemma:b1_alternative}. That completes the proof of Lemma \ref{lemma:OV}.
 \end{proof}

 \lemmawov*
 \begin{proof}[Proof of Lemma \ref{lemma:WOV}] \label{proof:WOV}
     To prove the lemma we show a reduction from WOV to OV.
     We split $\mathcal{A}$ and $\mathcal{B}$ into the following groups of sets with the same sum in time $\widetilde{O}(|\mathcal{A}| + |\mathcal{B}|)$.
 \begin{align*}
     \mathcal L(a) &= \{A \colon (A, a) \in \mathcal A \} \; , \\
     \mathcal R(b) &= \{B \colon (B, b) \in \mathcal B \} \; .
 \end{align*}
     As in the algorithm for $2$-SUM, we use the two pointers technique to iterate over every possible $(a, b)$ such that $a + b = t$ in time $\widetilde{O}(\text{number of groups})$.
     For every such pair we call $\operatorname{OV}(\mathcal L(a), \mathcal R(b))$.
     By Lemma~\ref{lemma:OV}, its running time and space usage depend linearly on $(|\mathcal L(a)| + |\mathcal R(b)|)$, so the total time and space complexity are at most as for $\operatorname{OV}(\mathcal{A}, \mathcal{B})$.
 \end{proof}

\section*{Acknowledgments}
Research is~partially supported
by the grant 075-15-2022-289 for creation and development of Euler International Mathematical Institute,
and 
by the Foundation for the Advancement of Theoretical
Physics and Mathematics ``BASIS''.


\end{document}